\def\c{\centerline}
\def\bzero{\bfm 0}
\newcommand{\bfsym}[1]{\ensuremath{\boldsymbol{#1}}}
\def\balpha{\bfsym \alpha}
\def\bbeta{\bfsym \beta}
\def\bgamma{\bfsym \gamma}             
\def\bDelta {\bfsym {\Delta}}
\def\bfeta{\bfsym {\eta}}
\def\btheta{\bfsym {\theta}}           
          \def\bepsilon{\bfsym \varepsilon}
\def\bJ{\bfsym \sigma}             \def\bJ{\bfsym \Sigma}
\def\bpsi{\bfsym {\psi}}		        \def\bPsi{\bfsym {\Psi}}
\def\bzeta{\bfsym {\zeta}}
\def\today{\ifcase\month\or
	January\or February\or March\or April\or May\or June\or
	July\or August\or September\or October\or November\or December\fi
	\space\number\day, \number\year}
\def\bb{\boldsymbol}
\def\U{\mbox{\smallfont{U}}}
\def\W{\boldsymbol{W}}
\def\w{{\bf w}}
\newcommand{\beq}{\begin{equation}\tag{1}}
\newcommand{\eeq}{\end{equation}}
\newcommand{\beqn}{\begin{eqnarray}}
\newcommand{\eeqn}{\end{eqnarray}}
\newcommand{\beqnn}{\begin{eqnarray*}}
\newcommand{\eeqnn}{\end{eqnarray*}}
\def\bbw{\mathbbm{w}}
\def\U{{\cal{U}}}
\def\Psc{\mathcal{P}}
\def\Esc{\mathcal{E}}
\def\bC{\boldsymbol{C}}
\def\bdelta{\boldsymbol{\delta}}
\def\a{\boldsymbol{a}}
\def\b{\boldsymbol{b}}
\def\subminusk{_{\scriptscriptstyle \text{-}k}}
\def\subDR{_{\scriptscriptstyle \sf DR}}
\def\subIW{_{\scriptscriptstyle \sf IW}}
\def\Xtilde{\widetilde{X}}
\def\bXtilde{\widetilde{\X}}
\newcounter{CondCounter}
\def\Tsc{\mathcal{T}}
\def\Ssc{\mathcal{S}}
\def\trans{^{\scriptscriptstyle \sf T}}
\def\X{\boldsymbol{X}}
\def\x{\boldsymbol{x}}
\def\c{\boldsymbol{c}}
\newtheorem{lemma}{Lemma}
\newtheorem{assumption}{Assumption}
\newtheorem{theorem}{Theorem}
\newtheorem{remark}{Remark}
\newtheorem{proposition}{Proposition}
\def\bbeta{\boldsymbol{\beta}}
\def\balpha{\boldsymbol{\alpha}}
\def\bgamma{\boldsymbol{\gamma}}
\def\btheta{\boldsymbol{\theta}}
\def\balphahat{\widehat{\balpha}}
\def\bgammahat{\widehat{\bgamma}}
\def\omegahat{\widehat{\omega}}
\def\bbetahat{\widehat{\bbeta}}
\def\mhat{\widehat{m}}
\def\bphi{\boldsymbol{\phi}}
\def\bpsi{\boldsymbol{\psi}}
\def\bJ{\boldsymbol{J}}
\def\bPi{\boldsymbol{\Pi}}
\def\bfeta{\boldsymbol{\eta}}
\def\Isc{\mathcal{I}}
\def\supfk{^{[\text{-}k]}}
\def\supfktrans{^{[\text{-}k]^{\scriptscriptstyle \sf T}}}
\def\V{\boldsymbol{V}}
\def\bJhat{\widehat{\bJ}}
\def\bzeta{\boldsymbol{\zeta}}
\def\z{\boldsymbol{z}}
\def\Z{\boldsymbol{Z}}
\def\bPhi{\boldsymbol{\Phi}}
\def\bPsi{\boldsymbol{\Psi}}
\def\balphabar{\bar{\balpha}}
\def\rhat{\widehat{r}}
\def\c{\boldsymbol{c}}
\def\U{\boldsymbol{U}}
\def\bUhat{\widehat{\U}}
\def\subatrel{_{\scriptscriptstyle \sf ATReL}}
\def\subvalid{_{\scriptscriptstyle \sf Valid}}
\def\mbar{\bar{m}}
\def\bxi{\boldsymbol{\xi}}
\definecolor{darkred}{RGB}{150,50,50}
\definecolor{brown}{RGB}{250,100,100}
\definecolor{green}{RGB}{000,150,100}
\definecolor{purple}{RGB}{250,000,180}
\def\bzeta{\boldsymbol{\zeta}}
\def\A{\boldsymbol{A}}
\def\hbar{\bar{h}}
\def\rbar{\bar{r}}
\def\bgammabar{\bar{\bgamma}}
\def\hhat{\widehat{h}}
\def\rP{\mathbb{P}}
\def\rE{\mathbb{E}}
\def\Lsc{\mathcal{L}}
\def\Nsc{\mathcal{N}}
\def\Zsc{\mathcal{Z}}
\def\Csc{\mathcal{C}}
\title{Augmented Transfer Regression Learning with Semi-non-parametric Nuisance Models}
\author{Molei Liu$^{\dagger*}$, Yi Zhang$^{\ddagger}$, Katherine P Liao$^{\S}$, Tianxi Cai$^{\dagger}$}
\begin{document}
\maketitle
\begin{abstract}
{
\noindent In contemporary statistical learning, covariate shift correction plays an important role in transfer learning when distribution of the testing data is shifted from the training data. Importance weighting \citep{huang2007correcting}, as a natural and principle strategy to adjust for covariate shift, has been commonly used in the field of transfer learning. However, this strategy is not robust to model misspecification or excessive estimation error. In this paper, we propose an augmented transfer regression learning (ATReL) approach that introduces an imputation model for the targeted response, and uses it to augment the importance weighting equation. With novel semi-non-parametric constructions and calibrated moment estimating equations for the two nuisance models, our ATReL method is less prone to (i) the curse of dimensionality compared to  nonparametric approaches, and (ii) model mis-specification than parametric approaches.  We show that our ATReL estimator is $n^{1/2}$-consistent when at least one nuisance model is correctly specified, estimation for the parametric part of the nuisance models achieves parametric rate, and the nonparametric components are rate doubly robust. Simulation studies demonstrate that our method is more robust and efficient than existing parametric and fully nonparametric (machine learning) estimators under various configurations. We also examine the utility of our method through a real example about transfer learning of phenotyping algorithm for rheumatoid arthritis across different time windows. Finally, we propose ways to enhance the intrinsic efficiency of our estimator and to incorporate modern machine learning methods with our proposed framework. 
}
%Interestingly, we find that although the nuisance models are specified as the semi-nonparameteric form, any estimation approach of conditional mean can be adopted in our framework while special cases including series estimator, kernel smoothing and high dimensional $M$-estimator, have relatively simple implementation procedures.

\end{abstract}

\noindent{\bf Keywords}: Covariate shift correction, model mis-specification, model double robustness, rate double robustness, semi-non-parametric model.

\newpage
\section{Introduction}\label{sec:intro}

\subsection{Background}\label{sec:intro:back}

The shift in the predictor distribution, often referred to as {\em covariate shift}, is one of the key contributors to poor transportability and generalizability of a supervised learning model from one data set to another. An example that arises often in modern biomedical research is the between health system transportability of prediction algorithms trained from electronic health records (EHR) data \citep{weng2020deep}. Frequently encountered heterogeneity between hospital systems include the underlying patient population and how the EHR system encodes the data. For example, the prevalence of rheumatoid arthritis (RA) among patients with at least one billing code of RA differ greatly among hospitals \citep{carroll2012portability}. 
On the other hand, the conditional distribution of the disease outcome given all important EHR features may remain stable and similar for different cohorts. Nevertheless, shift in the distribution of these features can still have a large impact on the performance of a prediction algorithms trained in one source cohort on another target cohort \citep{rasmy2018study}. Thus, correcting for the covariate shift is crucial to the successful transfer learning across multiple heterogeneous studying cohorts.

Robustness of covariate shift correction is an important topic and has been widely studied in recent literature of statistical learning. A branch of work including \cite{wen2014robust,chen2016robust,reddi2015doubly,liu2017robust} focused on the covariate shift correction methods that are robust to the extreme importance weight incurred by the high dimensionality. Main concern of their work is the robustness of a learning model's prediction performance on the target data to a small amount of high magnitude importance weight. However, there is a paucity of literature on improving the validity and efficiency of statistical inference under covariate shift, with respect to the robustness to the mis-specification or poor estimation of the importance weight model. In this paper, we propose an augmented transfer regression learning (ATReL)  procedure in the context of covariate shift by specifying flexible machine learning models for the importance weight model and the outcome model. We establish the validity and efficiency of the proposed method under possible mis-specification in one of the specified models. We next state the problem of interest and then highlight the contributions of this paper.

\subsection{Problem Statement}\label{sec:intro:prob}

The source data, indexed by $S=1$, consist of $n$ labeled samples with observed response $Y$ and covariates $\X=(X_1,\ldots,X_p)$ while the target data, indexed by $S=0$, consist of $N$ unlabeled samples with only observed on $\X$. We write the full observed data as $\{(S_iY_i,\X_i,S_i):i=1,2,\ldots,n+N\}$, where without loss of generality we let the first $n$ observations be from the source population with $S_i = I(1 \le i \le n)$ and remaining from the target population. We assume that 
 $(Y,\X) \mid S=s\sim p_{s}(\x)q(y\mid\x)$, where $p_s(\x)$ denotes the probability density measure of $\X \mid S=s$ and $q(y\mid\x)$ is the conditional density of $Y$ given $\X$, which is the same across the two populations. The conditional distribution of $Y \mid \X$, shared between the two populations, could be complex and difficult to specify correctly. {In practice, it is often of interest to infer about a functional of $\mu(\X)$ such as $\rE(Y \mid \A, S=0)$, where $\A\in\mathbb{R}^d$ is a sub-vector of $\X$. More generally, we consider a working model $\rE_0(Y \mid \A)=g(\A\trans\bbeta)$ and define the regression parameter $\bbeta_0$ as the solution to the estimating equation in the target population $S=0$:
 \begin{equation}
     \rE\left[\A\{Y-g(\A\trans\bbeta)\} \mid S=0\right] \equiv \rE_{0}[\A\{Y-g(\A\trans\bbeta)\}]=\mathbf{0},\label{def-beta}
 \end{equation}
 where $\rE_{s}$ is the expectation operator on the population $S = s$ and $g(\cdot)$ is a link function, e.g. $g(\theta)=\theta$ represents linear regression and $g(\theta)=1/(1+e^{-\theta})$ for logistics regression. Directly solving an empirical estimating equation for (\ref{def-beta}) using the source data to estimate $\bbeta_0$ may result in inconsistency due to the covariate shift as well as potential model mis-specification of the model $\rE_0(Y \mid \A)=g(\A\trans\bbeta)$. It is important to note that even when $\rE_0(Y \mid \A) = g(\A\trans\bbeta_0)$ holds, $\rE_1\{\A(Y-g(\A\trans\bbeta_0)\}$ may not be zero in the presence of covariate shift.} To correct for the covariate shift bias, it is natural to incorporate importance sampling weighting and estimate $\bbeta_0$ as $\bbetahat\subIW$, the solution to the weighted estimating equation
\begin{equation}
\frac{1}{n}\sum_{i=1}^n\widehat{\omega}(\X_i)\A_i\{Y_i-g(\A_i\trans\bbeta)\}=0,
\label{equ:iw}
\end{equation}
where $\omegahat(\X)$ is an estimate for the density ratio $\bbw(\X)=p_0(\X)/p_1(\X)$. However, the validity of $\bbetahat\subIW$ heavily relies on the consistency of $\widehat{\omega}(\X)$ for $\bbw(\X)$ and can perform poorly when the density ratio model is mis-specified or not well estimated. 

%In this paper, we handle this problem through a doubly robust estimating approach as outlined in Section \ref{sec:intro:con}.

{
\begin{remark}
Our goal is to infer the conditional model of $Y$ on $\A$, a low dimensional subset of covariates in $\X$. %This target can be commonly found in the context of casual inference. 
In practice, there are a number of such cases in which one would be interested in a ``submodel" $Y\sim \A$ rather than the ``full model" $Y\sim \X$. For example, in EHR studies, $\A$ may represent widely available codified features and other elements of $\X$ may include features extracted from narrative notes via naturally language processing (NLP), which can be available for research studies but too costly to include when implementing risk models for broad patient populations. 
Also, when predicting the risk of developing a future event $Y$ at baseline, $\A$ may represent baseline covariates while the remaining elements of $\X$ may include post baseline surrogate features that can be used to``impute" $Y$ but not meaningful as risk factors.
\label{rem:target}
\end{remark}}
In this paper, we propose an augmented transfer regression learning (ATReL) method for optimizing the estimation of a potentially mis-specified regression model. Building on top of the augmentation method in the missing data literature, our method leverages a flexible semi-non-parametric outcome model $m(\X)$ imputing the missing $Y$ for the target data and augments the importance sampling weighted estimating equation with the imputed data. It is doubly robust (DR) in the sense that the ATReL estimator approaches the target $\bbeta_0$ when either the importance weight model $\omega(\X)$ or the imputation model $m(\X)$ is correctly specified.

\subsection{Literature review and our contribution}\label{sec:intro:lit}

Doubly robust estimators have been extensively studied for missing data and causal inference problems \citep{bang2005doubly,qin2008efficient,cao2009improving,van2010collaborative,tan2010bounded,vermeulen2015bias}. Estimation of average treatment effect on the treated can be viewed as analog to our covariate shift problem. To improve the DR estimation for average treatment effect on the treated, \cite{graham2016efficient} proposed a auxiliary-to-study tilting method and studied its efficiency. \cite{zhao2017entropy} proposed an entropy balancing approach that achieves double robustness without augmentation and \cite{shu2018improved} proposed a DR estimator attaining local and intrinsic efficiency. Besides, existing work like \cite{rotnitzky2012improved} and \cite{han2016intrinsic} are similar to us in the sense that their parameters of interests are multidimensional regression coefficients. Properties including intrinsic efficiency and multiple robustness has been studied in their work. These methods used low dimensional parametric nuisance models in their constructions, which is prone to bias due to model mis-specification.

%In comparison, we allow inclusion of nonparametric components in our nuisance models, to enhance the model complexity and robustness.

%\tcomm{need to point out what make them different} 

To improve robustness to model mis-specifications,  \cite{rothe2015semiparametric} used local polynomial regression to estimate the nuisance functions in constructing the DR estimator for an average treatment effect. %\cite{fan2016improving} employed a semiparametric or nonparametric nuisance model construction for covariates balancing through sieve, subject to the assumption that both nuisance models are correctly specified.
\cite{chernozhukov2016double} extended classic nonparametric constructions to the modern machine learning setting with cross-fitting. Their proposed double machine learning (DML) framework facilitates the use of general machine learning methods in semiparametric estimation. This general framework has also been explored for semiparametric models with non-linear link functions \citep[e.g.]{semenova2020debiased,liu2020note}. In contrast to the parametric approaches, the fully nonparametric strategy is free of mis-specification of the nuisance models. However, it is impacted by the excessive fitting errors of nonparametric models with higher complexity than parametric models, and thus subject to the so called ``rate double robustness" assumption \citep{smucler2019unifying}. Typically, classic nonparametric regression methods like kernel smoothing could not achieve the desirable convergence rates even under a moderate dimensionality. Though such ``curse of dimensionality" could be relieved by modern machine learning methods like random forest and neural network, theoretical justification on the performance of these methods are inadequate. Even their asymptotic convergence are sometimes justifiable, these machine learning approaches still requires particularly large sample sizes to ensure good finite sample performances, which could be seen from our numerical studies. This drawback has became a main concern about the nonparmatric or machine learning approaches.

%These issues has became a main concern about the nonparmatric approaches.

{Our proposed semi-non-parametric strategy in constructing the nuisance models can be viewed as a mitigation of the parametric and nonparamertic methods, which is more flexible and powerful. In specific, it specifies the two nuisance models as the generalized partially linear models combining a parametric function of some features in $\X$ and a nonparametric function of the other features, to achieve a better trade-off in model complexity. It is more robust to model estimation errors compared to the fully nonparametric approach, and less susceptible to model mis-specification than the parametric approach. Our method is not a trivial extension of the two existing strategies as we construct the moment equations more elaborately to \emph{calibrate} the nuisance models, and remove the over-fitting bias. We take semi-non-parametric models with kernel or sieve estimator as our main example for realizing this strategy, and present other possibilities including the high dimensional regression and machine learning constructions. We show that the proposed estimator is $n^{1/2}$-consistent and asymptotically normal when at least one nuisance model is correctly specified, the parametric components in the two models are $n^{1/2}$-consistent, and both nonparametric components attain the error rate $o_p(n^{-1/4})$.

%This property, as well as our numerical studies demonstrate that our work can enrich the tool-kits in semi-parametric inference by providing a more flexible and powerful approach for nuisance model construction.
}

%As reviewed in Section \ref{sec:intro:lit}, there are generally two existing strategies of constructing the importance weight and imputation models including low dimensional parametric models such as ordinary linear and logistic regression and fully nonparametric or machine learning methods, both of which have their advantage and limitation concerning model complexity. With relatively low complexity, the parametric strategy has desirable convergence rates in estimating the nuisance models but is more prone to model mis-specification. In contrast, the fully nonparametric estimator is unlikely to be misspecified but its convergence rate can be impacted due to the curse of dimensionality of the nonparametric approaches like kernel smoothing or the possibly inflated model complexity of the machine learning models. 

In existing literature of semiparametric inference, one alternative and natural way to mitigate the model misspecification and the curse of dimensionality is to construct the nuisance models with some high dimensional non-linear basis of $\X$. In relation to this, a number of recent works has been developed to construct model doubly robust estimators using high dimensional sparse nuisance models \citep[e.g.]{smucler2019unifying,tan2020model,ning2020robust,dukes2020inference,ghosh2020doubly,liu2020note}. The central idea of these approaches is to impose certain moment conditions on the nuisance models to remove their first order (or over-fitting) bias under potential model misspecification, which is referred as calibrating \citep{tan2020model}. Technically, our calibrating procedure is in similar spirits with this idea. Different from their strategies to fit regularized high dimensional regression with all covariates, we treat the parametric and the nonparametric parts in the nuisance model differently. And our parametric part can be specified by arbitrary estimating equations. This provides us more flexibility on model specification, as well as possibility to achieve better intrinsic efficiency as discussed in Section \ref{sec:disc}. {More importantly, our framework allows for the use of nonparmatric or machine learning methods like kernel smoothing and random forest, while these existing methods are restricted to high dimensional parametric models. In addition, our target is a regression model, which has larger complexity than the single average treatment effect parameter studied in the previous work, and incurs additional challenges like irregular weights.}

A similar idea of constructing semi-non-parametric nuisance models has been considered by \cite{chakrabortty2016robust} and \cite{chakrabortty2018efficient} using this to improve the efficiency of linear regression under a semi-supervised setting with no covariate shift between the labeled and unlabeled data. They proposed a refitting procedure to adjust for the bias incurred by the nonparametric components in the imputation model while our method can be viewed as their extension leveraging the importance weight and imputation models to correct for the bias of each other, which is substantially novel and more challenging. As another main difference, we use semi-non-parametric model in estimating the parametric parts of the nuisance models, to ensure their correctness and validity. \cite{chakrabortty2016robust} and \cite{chakrabortty2018efficient} did not actually elaborate on this point and only used parametric regression to estimate the parametric part, which does not guarantee the model double robustness property achieved by our method.

\subsection{Outline of the paper}
Remaining of the paper will be organized as follow. In Section \ref{sec:method}, we introduce the general doubly robust estimating equation, our semi-non-parametric framework and specific procedures to estimate the parametric and nonparametric components of nuisance models. In Section \ref{sec:thm}, we present the large sample properties of our proposed ATReL estimator, i.e. its double robustness concerning model specification and estimation. In Section \ref{sec:simu}, we present simulation results evaluating the finite sample performance of our ATReL estimator and its relevant performance compared with existing methods under various settings. In Section \ref{sec:real}, we apply our ATReL estimation on transferring a phenotyping algorithm for bipolar disorder across two EHR cohorts. Finally, we propose and comment on some potential strategies for improving and extending our method in Section \ref{sec:disc}.

\section{Method}\label{sec:method}

\subsection{General form of the doubly robust estimating equation}

Let $m(\x)$ denote an imputation model used to approximate $\mu(\x)=\rE(Y|\X=\x)=\rE_0(Y|\X=\x)=\rE_1(Y|\X=\x)$, and $\mhat(\x)$ denote the estimate of $m(\x)$ by fitting the model to the labeled source data. We augment the importance sampling weighted estimating equation (\ref{equ:iw}) with the term
\begin{equation}
\frac{1}{N}\sum_{i=n+1}^{N+n}\A_i\{\mhat(\X_i)-g(\A_i\trans\bbeta)\}-\frac{1}{n}\sum_{i=1}^n\widehat{\omega}(\X_i)\A_i\{\mhat(\X_i)-g(\A_i\trans\bbeta)\},
\label{equ:add}
\end{equation}
which results in the augmented estimating equation:
\begin{equation}
\bUhat\subDR(\bbeta)\equiv\frac{1}{n}\sum_{i=1}^n\widehat{\omega}(\X_i)\A_i\{Y_i-\mhat(\X_i)\}+\frac{1}{N}\sum_{i=n+1}^{N+n}\A_i\{\mhat(\X_i)-g(\A_i\trans\bbeta)\}=\mathbf{0}.
\label{equ:dr}
\end{equation}
We denote its solution as $\bbetahat\subDR$. {Construction (\ref{equ:dr}) is in the similar spirit with the DR estimators of the average treatment effect on the treated studied in existing literature \citep[e.g.]{graham2016efficient,shu2018improved}.} When the density ratio model is correctly specified and consistently estimated, equation (\ref{equ:dr}) converges to $\rE_0[\A_i(Y_i - g(\A_i\trans\bbeta)\}]=0$ and hence $\widehat\bbeta\subDR$ is consistent for $\bbeta_0$. When the imputation model is correct, the first term of $\bUhat\subDR(\bbeta)$ in (\ref{equ:dr}) converges to $\bzero$ and the second term converges to $\rE_0[\A_i\{E_0(Y_i\mid\X_i)-g(\A_i\trans\bbeta)\}]=\rE_0[\A_i\{Y_i-g(\A_i\trans\bbeta)\}]$ and hence $\bbetahat\subDR$ is also expected to be consistent  for $\bbeta_0$. Thus, the augmented estimating equation (\ref{equ:dr}) is doubly robust to the specification of the two nuisance models.

\subsection{Semi-non-parametric nuisance models}\label{sec:method:mom}

Now we introduce a semi-non-parametric construction for the nuisance models in (\ref{equ:dr}) that captures more complex effects in $\bbw(\X)$ and $\mu(\X)$ from a subset of $\X$, denoted by $\Z\in\mathbb{R}^{p_{\z}}$, along with simpler effects for the remainder of $\X$ that can be explained via linear effects on a finite set of pre-specified functional bases for approximating $\bbw(\X)$ and $\mu(\X)$, respectively denoted by $\bpsi\in\mathbb{R}^{p_{\bpsi}}$ and $\bphi\in\mathbb{R}^{p_{\bphi}}$. In EHR data analysis, $\Z$ may represent measures of healthcare utilization which may differ greatly across healthcare systems and have complex effects on patient outcome. Under this framework, we specify the following semi-non-parametric nuisance models for $\bbw(\X)$ and $\mu(\X)$,
\begin{equation}
\omega(\X)=\exp\{\bpsi\trans\balpha+h(\Z)\}\quad\mbox{and}\quad m(\X)=g\{\bphi\trans\bgamma+r(\Z)\},
\label{model}
\end{equation}
where $\bpsi\trans\balpha$ and $\bphi\trans\bgamma$ represent parametric components, the unknown functions $h(\z)$ and $r(\z)$ represent the nonparametric components, and $g(\cdot)$ is a pre-specified smooth strictly increasing link function. Without loss of generality, let the first element in both $\bpsi$ and $\bphi$ be constant $1$. Correspondingly, we denote their estimation used in (\ref{equ:dr}) as ${\omegahat}(\X)=\exp\{\bpsi\trans\balphahat+\widehat{h}(\Z)\}$ and $\mhat(\X)=g\{\bphi\trans\widehat{\bgamma}+\widehat{r}(\Z)\}$. Here and in the sequel, we let $\bbetahat\subatrel$ denote the ATReL estimator derived from (\ref{equ:dr}) with this specific construction of $\mhat(\cdot)$ and $\omegahat(\cdot)$.

\def\nhalf{n^{\half}}
\def\nnhalf{n^{-\half}}
\def\half{\frac{1}{2}}
\def\bkappabar{\bar{\bkappa}}
\def\bkappahat{\widehat{\bkappa}}
\def\bkappa{\boldsymbol{\kappa}}
\def\subsmbbeta{_{\scriptscriptstyle \bbeta}}
\def\subsmbbetan{_{\scriptscriptstyle \bbeta_0}}
\def\subismbbetan{_{i, {\scriptscriptstyle \bbeta_0}}}
\def\subismbbeta{_{i, {\scriptscriptstyle \bbeta}}}

Unlike $\balphahat$ and $\bgammahat$, estimation errors of $\widehat{h}(\cdot)$ and $\widehat{r}(\cdot)$ are larger in rate than the desirable parametric rate $n^{-1/2}$ since they are estimated using non-parametric approaches like kernel smoothing. In addition, removing the large non-parametric estimation biases from the biases of the resulting $\bbetahat\subatrel$ is particularly challenging due to the bias and variance trade-off in non-parametric regression. To motivate our strategy for mitigating such biases, we consider the estimation of $\c\trans\bbeta_0$, an arbitrary linear functional of $\bbeta_0$ where $\|\c\|_2=1$, and study the first order (over-fitting) bias incurred by $\widehat{h}(\cdot)$ and $\widehat{r}(\cdot)$ in $\c\trans\bbetahat\subatrel$. 
The essential bias terms of $n^{1/2}(\c\trans\bbetahat\subatrel-\c\trans\bbeta_0)$ arising from the non-parametric components can be asymptotically expressed as
\begin{equation}
\begin{split}
\Delta_1=&\frac{1}{\sqrt{n}}\sum_{i=1}^n\bar\omega(\X_i)\bkappa\subismbbetan\left\{Y_i-\mbar(\X_i)\right\}\{\widehat{h}(\Z_i)-\bar{h}(\Z_i)\};\\
\Delta_2=&\frac{1}{\sqrt{n}}\sum_{i=1}^n\bar{\omega}(\X_i)\bkappa\subismbbetan\breve g\{\bar m(\X_i)\}\{\rhat(\Z_i)-\bar r(\Z_i)\}\\
&-\frac{\sqrt{n}}{N}\sum_{i=n+1}^{N+n}\bkappa\subismbbetan\breve g\{\bar m(\X_i)\}\{\rhat(\Z_i)-\bar r(\Z_i)\},
\end{split}    
\label{equ:bias}
\end{equation}
where $\bkappa\subismbbeta=\c\trans\bJ\subsmbbeta^{-1}\A_i$
$\breve g(a)=\dot{g}\{g^{-1}(a)\}$, $\dot{g}(x)=dg(x)/dx > 0$, $\bJ\subsmbbeta=\rE_{0}\{\dot{g}(\A\trans\bbeta)\A\A\trans\}$ is the limit of $\bJhat\subsmbbeta=N^{-1}\sum_{i=n+1}^{n+N}\dot{g}(\A_i\trans\bbeta)\A_i\A_i\trans$,  $\bar\omega(\X)=\exp\{\bpsi\trans\bar\balpha+\bar h(\Z)\}$, $\bar m(\X)=g\{\bphi\trans\bar{\bgamma}+\bar r(\Z)\}$, $\hbar(\Z)$, $\rbar(\Z)$, $\balphabar$, $\bgammabar$, and  $\bar\bbeta$ are the respective limits of $\hhat(\Z)$, $\rhat(\Z)$, $\balphahat$, $\bgammahat$ and $\bbetahat\subatrel$. These limiting values are not necessarily true model parameter values due to potential model mis-specification.

When $m(\X)$ and $\omega(\X)$ are specified fully nonparametrically as those in \cite{rothe2015semiparametric} and \cite{chernozhukov2016double}, a standard cross-fitting strategy can removing terms like $\Delta_1$ and $\Delta_2$ by leveraging $\bar m(\X)=\mu(\X)$ and $\bar \omega(\X)=\bbw(\X)$ and utilizing the orthogonality between the ``residual" of $S$ or $Y$ on the covariates $\X$ and the functional space of $\X$. However, simply adopting cross-fitting is not sufficient for the current setting  because such orthogonality does not hold due to the potential mis-specifications of $m(\cdot)$ and $\omega(\cdot)$ in (\ref{model}). To overcome this challenge, we impose moment condition constraints on the nonparametric components $\bar r(\Z)$ and $\bar h(\Z)$ in that: for any measurable function $f(\cdot)$ of the covariates $\Z$,
\begin{align}
&\rE_1\left[ \bbw(\X)\bkappa\subsmbbetan\left(Y-g\left\{\bPhi\trans\bar{\bgamma}+\bar{r}(\Z)\right\}\right)f(\Z)\right]=0;\label{equ:mom:1}\\
&\rE_1\left[\exp\{\bpsi\trans\bar{\balpha}+\bar{h}(\Z)\}\bkappa\subsmbbetan\breve{g}\{\mu(\X)\}f(\Z)\right]=\rE_{0}\left[\bkappa\subsmbbetan\breve{g}\{\mu(\X)\}f(\Z)\right].
\label{equ:mom:2}
\end{align}
\begin{remark}
{When the density ratio model is correct, moment condition (\ref{equ:mom:2}) is naturally satisfied and solving (\ref{equ:mom:2}) for $\bar h(\cdot)$ leads to the true $h_0(\cdot)$. Constructing $\rbar(\cdot)$ under the moment condition (\ref{equ:mom:1}) will enable us to remove excess bias arising from the empirical error in estimating $\bar h(\cdot)$. On the other hand, when the imputation model $m(\X)$ is correct, condition (\ref{equ:mom:1}) holds and solving (\ref{equ:mom:1}) for $\bar r(\cdot)$ leads to $r_0(\cdot)$. And similarly, constructing $\hbar(\cdot)$ under (\ref{equ:mom:2}) will enable us to remove bias from the error in estimating $\bar r(\cdot)$. See our theoretical analyses given in Section~\ref{sec:thm} and Appendix~\ref{sec:proof} for more details on these points.
}
\end{remark}

\subsection{Estimation Procedure for $\bbetahat\subatrel$}\label{sec:method:spec}

We next detail estimation procedures for $\bbetahat\subatrel$ under the constraints of the moment conditions (\ref{equ:mom:1}) and (\ref{equ:mom:2}). Here we mainly focus on classic local regression approaches for low dimensional and smooth nonparametric components $r(\cdot)$ and $h(\cdot)$. In Appendix~\ref{sec:app:ml}, we propose a more general construction procedure that can learn $r(\cdot)$ and $h(\cdot)$ using arbitrary modern machine learning algorithms (e.g. random forest and neural network).  Similar to \cite{chernozhukov2016double}, we adopt cross-fitting on the source sample to eliminate the dependence between the estimators and the samples on which they are evaluated, and remove the first order bias $\Delta_1$ and $\Delta_2$ through concentration. Specifically, we randomly split the source samples into $K$ equal sized disjoint sets, indexed by $\Isc_1,\ldots,\Isc_K$, with $\{1,...,n\}=\cup_{k=1}^K \Isc_k$ and denote 
$\Isc\subminusk = \{1,..,n\}\setminus\Isc_k$. 

Equations (\ref{equ:mom:1}) and (\ref{equ:mom:2}) involve not only $r(\cdot)$ and $h(\cdot)$ but also other unknown parameters that needed to be estimated. To this end, first obtain preliminary estimators for $\omega(\X)$ and $m(\X)$ via standard semiparametric regression as $\widetilde\omega\supfk(\X)=\exp\{\bpsi\trans\widetilde\balpha\supfk+\widetilde h\supfk(\Z)\}$ and $\widetilde m\supfk(\X)=g\{\bphi\trans\widetilde\bgamma\supfk+\widetilde r\supfk(\Z)\}$ on $\Isc\subminusk\cup\{n+1,\ldots,n+N\}$, where the nonparametric components can be estimated with either sieve \citep{beder1987sieve} or profile kernel/backfitting \citep{lin2006semiparametric}. {Here, we take sieve as an example. Let $\b(\Z)$ be some basis function of $\Z$ with growing dimension, e.g. Hermite polynomials as specified by Assumption \ref{asu:a3} in Appendix \ref{sec:app:kern}. Denote by $\bPsi = (\bpsi\trans,\b(\Z)\trans)\trans$ and $\bPhi = (\bphi\trans,\b(\Z)\trans)\trans$. We solve
\begin{alignat}{2}
&\frac{K}{n(K-1)}\sum_{i\in\Isc\subminusk}\bPsi_i\exp(\btheta_w\trans\bPsi_i)+\lambda_1 (0,\btheta_{w,\text{-}1}\trans)\trans=\frac{1}{N}\sum_{i=n+1}^{n+N}\bPsi_i; &\quad& \mbox{with } \btheta_w = (\balpha\trans,\bfeta\trans)\trans \label{equ:pre:nui:1}\\
&\frac{K}{n(K-1)}\sum_{i\in\Isc_{\text{-}k}}\bPhi_i\left\{Y_i-g(\btheta_m\trans\bPhi_i)\right\}+\lambda_2 (0,\btheta_{m,\text{-}1}\trans)\trans=\bzero, &\quad& \mbox{with } \btheta_m = (\bgamma\trans,\bxi\trans)\trans
\label{equ:pre:nui:2}
\end{alignat}
%\tcomm{shouldn't we remove intercept from penalization?}
to obtain the estimators $\widetilde\btheta_w\supfk = (\widetilde\balpha\supfktrans,\widetilde\bfeta\supfktrans)\trans$, $\widetilde\btheta_m\supfk =(\widetilde\bgamma\supfktrans, \widetilde\bxi\supfktrans)\trans$ for $\btheta_w$ and $\btheta_m$, and $\widetilde h\supfk(\Z)=\b\trans(\Z)\widetilde\bfeta\supfk$, $\widetilde r\supfk(\Z)=\b\trans(\Z)\widetilde\bxi\supfk$. Here we include ridge penalties to improve the training stability, with the two tuning parameters $\lambda_1,\lambda_2=o_{p}(n^{-1/2})$. Suppose that $\widetilde\omega\supfk(\X)$ and $\widetilde m\supfk(\X)$ approach some limiting models denoted as $\omega^*(\X)=\exp\{\bpsi\trans\balpha^*+h^*(\Z)\}$ and $m^*(\X)=g\{\bphi\trans\bgamma^*+r^*(\Z)\}$. Certainly, we have that $\omega^*(\X)=\bbw(\X)$ when the density ratio model is correctly specified, and $m^*(\X)=\mu(\X)$ when imputation model is correct. Then we solve the estimating equation for $\bbeta$:
\[
\frac{K}{n(K-1)}\sum_{i\in\Isc\subminusk}\widetilde{\omega}\supfk(\X_i)\A_i\{Y_i-\widetilde{m}\supfk(\X_i)\}+\frac{1}{N}\sum_{i=n+1}^{N+n}\A_i\{\widetilde{m}\supfk(\X_i)-g(\A_i\trans\bbeta)\}=\mathbf{0},
\]
Denote its solution as $\widetilde\bbeta\supfk$, a preliminary estimator consistent for $\bbeta_0$ when at least one nuisance model is correct but typically not achieving the desirable parametric rate as our final goal.

One might improve the convergence rate of the remainder bias of $\widetilde\balpha\supfk$ and $\widetilde\bgamma\supfk$ by further using cross-fitting on the nonparametric components in estimating equations (\ref{equ:pre:nui:1}) and (\ref{equ:pre:nui:2}); see \cite{newey2018cross}. While the so called ``plug-in" or simultaneous M-estimation $\widetilde\balpha\supfk$ and $\widetilde\bgamma\supfk$ can be shown to be $n^{1/2}$-consistent and asymptotically normal under certain smoothness and regularity conditions \citep{shen1997methods,chen2007large}, and thus satisfy our requirement (see Assumption \ref{asu:3} and Proposition \ref{prop:1}). Therefore, one could simply set $\widehat\balpha\supfk=\widetilde\balpha\supfk$ and $\widehat\bgamma\supfk=\widetilde\bgamma\supfk$ as the estimator of the parametric components in the final nuisance models. Consequently, their limiting (true) values are also identical: $\bar\balpha=\balpha^*$ and $\bar\bgamma=\bgamma^*$. In the following part of this section, we choose this construction.

\begin{remark}
Equations (\ref{equ:pre:nui:1}) and (\ref{equ:pre:nui:2}) are not the only choices for specifying $\balpha$ and $\bgamma$. In our framework, $\balpha$ and $\bgamma$ could be estimated through any estimating equations ensuring their $n^{1/2}$-consistency for some limiting parameters equal to the true ones when the corresponding nuisance models are correct. This flexibility is particularly useful when the intrinsic efficiency \citep{tan2010bounded,rotnitzky2012improved} of our estimator is further desirable, i.e. $\c\trans\bbetahat\subatrel$ is the most efficient among all the doubly robust estimators when $\omega(\cdot)$ is correct and $m(\cdot)$ has some wrong specification. Interestingly, we find that one could elaborate an estimating procedure for $\bgamma$ to realize this property and shall leave relevant details in Appendix \ref{sec:app:disc:detail:intr}. 
\label{rem:par}
\end{remark}

%\tcomm{comment on ML modeling strategy in discussion and remove high D from appendix but cite paper w/ Doudou}
%\begin{remark}
%When $\Z_i$ is of moderate or high dimensionality, the ``plug-in" estimators for $\balpha^*$ and $\bgamma^*$ obtained by kernel smoothing or sieve have poor performance, and no longer achieves the desirable root $n$ convergence rate \citep{chernozhukov2016locally}. To handle this problem, we derive in Appendix \ref{sec:app:disc} debiasing estimation procedures to obtain $\widehat\balpha\supfk$ and $\widehat\bgamma\supfk$. Our proposal leverages high dimensional parametric regression or machine learning methods to estimate the nuisance nonparametric components with high dimensional $\Z_i$, and ensures $n^{1/2}$-consistency of $\widehat\balpha\supfk$ and $\widehat\bgamma\supfk$ through debiasing. Under the double machine learning framework, root $n$ semiparametric estimator under model mis-specification has not been readily derived and studied. So our proposal in Appendix \ref{sec:app:disc} can be viewed as a by-product contributing to existing literature.
%\label{rem:2.3:1}
%\end{remark}

}

\def\subsmbbeta{_{\scriptscriptstyle \bbeta}}
\def\subsmbbetan{_{\scriptscriptstyle \bbeta_0}}
\def\subismbbetan{_{i, {\scriptscriptstyle \bbeta_0}}}
\def\subismbbetahatfk{_{i, {\scriptscriptstyle \bbetahat\supfk}}}
\def\bzero{\mathbf{0}}

Then we construct the calibrated estimating equations for the nonparametric nuisance components based on $\widehat\balpha\supfk$, $\widehat\bgamma\supfk$ and the preliminary estimators. Let $K(\cdot)$ represent some kernel function satisfying $\int_{\mathbb{R}^{p_{\z}}}K(\z)d\z=1$ and define that $K_h(\z)=K(\z/h)$. Localizing the terms in (\ref{equ:mom:1}) and (\ref{equ:mom:2}) with $K_h(\cdot)$, we solve for $r(\z)$ and $h(\z)$ respectively from
\begin{equation}
\begin{split}
\frac{1}{|\Isc\subminusk|}\sum_{i\in \Isc\subminusk} & K_h(\Z_i-\z)\bkappahat\subismbbetahatfk\widetilde{\omega}\supfk(\X_i)\left[Y_i-g\left\{\bphi_i\trans\widehat{\bgamma}\supfk+r(\z)\right\}\right]=\bzero;\\
\frac{1}{|\Isc\subminusk|}\sum_{i\in\Isc\subminusk}&K_h(\Z_i-\z)\bkappahat\subismbbetahatfk\breve{g}\{\widetilde m\supfk(\X_i)\} \exp\left\{\bpsi_i\trans\widehat{\balpha}\supfk+h(\z)\right\}\\
=\frac{1}{N}\sum_{i=n+1}^{n+N} &K_h(\Z_i-\z)\bkappahat\subismbbetahatfk\breve{g}\{\widetilde m\supfk(\X_i)\}.    
\end{split}
\label{equ:kern:mom}    
\end{equation}
where $\bkappahat\subismbbeta=\c\trans\bJhat_{\bbeta}^{-1}\A_i$.
Equations in (\ref{equ:kern:mom}) calibrate the nonparametric components to ensure the orthogonality between their score functions and the functional space of $\Z$, which is necessary for removing the bias terms introduced in (\ref{equ:bias}). In contrasts, the parametric component could include different sets of covariates from $\Z$, and there is no need to calibrate them. This substantially distinguishes our framework from existing methods \citep[e.g.]{smucler2019unifying,tan2020model} utilizing a similar calibration idea to handle high dimensional sparse nuisance models . 

\def\Ihat{\widehat{I}}

\begin{remark}
\label{rem:sign:split}
If the weights $\bkappahat\subismbbetahatfk=\c\trans\bJhat_{\widetilde{\bbeta}\supfk}^{-1}\A_i$ have the same sign for a majority of the subjects $i\in\Isc\subminusk\cup\{n+1,\ldots,n+N\}$, both equations in (\ref{equ:kern:mom}) have an unique solution for each $\z$, denoted as $\rhat\supfk(\Z)$ and $\widehat h\supfk(\Z)$. In practice, it is more likely that $\bkappahat\subismbbetahatfk$ can be positive for some subjects and negative for others, in which case (\ref{equ:kern:mom}) can be irregular and ill-posed, leading to inefficient estimation. One simple strategy to overcome this is to expand the nuisance imputation models to allow $h$ and $r$ to differ among those with $\bkappahat\subismbbetahatfk\ge 0$ versus those with $\bkappahat\subismbbetahatfk$. Specifically, we may solve for 
\begin{equation}
\begin{split}
\frac{1}{|\Isc\subminusk|}\sum_{i\in \Isc\subminusk} \left[ \begin{matrix}\Ihat_{+, i}\supfk \\ \Ihat_{-, i}\supfk \end{matrix}\right]
& K_h(\Z_i-\z)\bkappahat\subismbbetahatfk\widetilde{\omega}\supfk(\X_i)\left[Y_i-g\left\{\bphi_i\trans\widehat{\bgamma}\supfk+\Ihat_{+, i}\supfk r_+(\z) + \Ihat_{-, i}\supfk r_-(\z) \right\}\right]=\bzero;\\
\frac{1}{|\Isc\subminusk|}\sum_{i\in\Isc\subminusk}\left[ \begin{matrix}\Ihat_{+, i}\supfk \\ \Ihat_{-, i}\supfk \end{matrix}\right] &K_h(\Z_i-\z)\bkappahat\subismbbetahatfk\breve{g}\{\widetilde m\supfk(\X_i)\} \exp\left\{\bpsi_i\trans\widehat{\balpha}\supfk+\Ihat_{+, i}\supfk h_+(\z) + \Ihat_{-, i}\supfk h_-(\z) \right\}\\
=\frac{1}{N}\sum_{i=n+1}^{n+N}\left[ \begin{matrix}\Ihat_{+, i}\supfk \\ \Ihat_{-, i}\supfk \end{matrix}\right] & K_h(\Z_i-\z)\bkappahat\subismbbetahatfk\breve{g}\{\widetilde m\supfk(\X_i)\},
\end{split}
\label{equ:kern:mom:sign}    
\end{equation}
where $\Ihat_{+, i}\supfk = I(\bkappahat\subismbbetahatfk\ge 0)$ and $\Ihat_{-, i}\supfk = I(\bkappahat\subismbbetahatfk< 0)$. Then we take $\widehat m\supfk(\X_i)=g\{\bphi_i\trans\widehat{\bgamma}\supfk+\Ihat_{+, i}\supfk r_+(\Z_i) + \Ihat_{-, i}\supfk r_-(\Z_i) \}$ and $\widehat \omega\supfk(\X_i)=\exp\left\{\bpsi_i\trans\widehat{\balpha}\supfk+\Ihat_{+, i}\supfk h_+(\Z_i) + \Ihat_{-, i}\supfk h_-(\Z_i) \right\}$. %\footnote{With a little abuse of notation as $I(i\in\Isc^+)$ and $I(i\in\Isc^-)$ are not deterministic on $\Z_i$.} 
With this modification, our construction still effectively removes $\Delta_1$ and $\Delta_2$ as one could trivially analyze the two disjoint set of samples separately, and combine their convergence rates at last. 
%However, one should be aware that this procedure could impact the finite sample performance by splitting the data.
\end{remark}

After obtaining $\widehat r\supfk(\cdot)$ and $\widehat h\supfk(\cdot)$ for each $k\in\{1,2,\ldots,K\}$, we take $\widehat\omega\supfk(\X_i)=\exp\{\bpsi_i\trans\widehat\balpha\supfk+\widehat h\supfk(\Z_i)\}$, $\widehat m\supfk(\X_i)=g\{\bphi_i\trans\widehat\bgamma\supfk+\widehat r\supfk(\Z_i)\}$, $\widehat m(\X_i)=K^{-1}\sum_{k=1}^K\widehat m\supfk(\X_i)$, and plug them into the cross-fitted version of the estimating equation (\ref{equ:dr}) written as:
\begin{equation}
\frac{1}{n}\sum_{k=1}^K\sum_{i\in\Isc_k}\omegahat\supfk(\X_i)\A_i\left\{Y_i-\mhat\supfk(\X_i)\right\}+\frac{1}{N}\sum_{i=n+1}^{N+n}\A_i\{\mhat(\X_i)-g(\A_i\trans\bbeta)\}=\mathbf{0}.
\label{equ:dr:cross}
\end{equation}
Let the solution of (\ref{equ:dr:cross}) be $\bbetahat\subatrel$ and we take $\c\trans\bbetahat\subatrel$ as the estimation for $\c\trans\bbeta_0$. For interval estimation of $\c\trans\bbeta_0$, we use bootstrap, which appears to have better numerical performance than using the asymptotic variance estimated directly by the moment estimator.

%Note that impact of the estimation error of the nonparametric components is verified to be asymptotically negligible so there is actually no need to re-evaluate them with the bootstrap samples in this procedure.

%For interval estimation of $\c\trans\bbeta_0$ based on our estimator, there are generally two ways. One is to calculate the empirical variance of $\c\trans\bbetahat\subatrel$ in its explicit form. Considering that writing down this form requires defining the specific estimating equation for the parametric components and is heavy on mathematical notation, we shall not include too much details here. While one can utilize the asymptotic expansion for $\c\trans\bbetahat\subatrel$ derived in Appendix \ref{sec:proof} to calculate its variance. The other way is bootstrap, tested to have better performance numerically. And we recommend to use the permutation bootstrap for interval estimation of $\c\trans\bbeta_0$. Note that impact of the estimation error of the nonparametric components is verified to be asymptotically negligible so there is actually no need to re-evaluate them with the bootstrap samples in this procedure.

\section{Theoretical analysis}\label{sec:thm}

Assume that $\rho=n/N=O(1)$, $K=O(1)$. For any vector $\a$, let $\|\a\|_2$ represent its $\ell_2$-norm. 
%Also, recall that $\bar\balpha$, $\bar\bgamma$, $\bar h(\cdot)$, $\bar r(\cdot)$, $\omega^*(\cdot)$, $\bar\omega(\cdot)$, $m^*(\cdot)$ and $\bar m(\cdot)$ represent the limiting values of the estimators $\widehat\balpha\supfk$, $\widehat\bgamma\supfk$, $\widehat h\supfk(\cdot)$, $\widehat r\supfk(\cdot)$, $\widetilde\omega\supfk(\cdot)$, $\widehat\omega\supfk(\cdot)$, $\widetilde m\supfk(\cdot)$ and $\widehat m\supfk(\cdot)$ respectively. 
Let $\mathcal{Z}$ and $\mathcal{X}$ represent the domains of $\Z$ and $\X$ respectively. Assume that dimensionality of $\A$, $p_{\bphi}$ and $p_{\bpsi}$ are fixed. We then introduce three sets of assumptions as follows.

\begin{assumption}[Regularity conditions]
\label{asu:1}
There exists a constant $C_L>0$ such that $|\dot g(a)-\dot g(b)|\leq C_L|a-b|$ for any $a,b\in\mathbb{R}$. $\bbeta_0$ belongs to a compact space. $\A_i$ belong to a compact set %\tcomm{you require them to be bounded not subguassian?} 
and has a continuous differential density on both populations $\Ssc$ and $\Tsc$. There exists a constant $C_U>0$ such that $\rE_j|Y|^2+\rE_1\bar\omega^4(\X)+\rE_j\breve g^4\{\bar m(\X)\}+\rE_j\|\bphi\|_2^4+\rE_j\|\bpsi\|_2^8<C_U$, for $j\in\{0,1\}$. The information matrix $\bJ_{\bbeta_0}$ has its all eigenvalues bounded away from $0$ and $\infty$.
\end{assumption}

\begin{assumption}[Specification of the nuisance models]
\label{asu:2}
At least one of the following two conditions holds: (i) $\bbw(\X)=\exp\{\bpsi\trans\balpha_0+h_0(\Z)\}$ for some $\balpha_0$ and $h_0(\cdot)$; or (ii) $\mu(\X)=g\{\bphi\trans\bgamma_0+r_0(\Z)\}$ for some $\bgamma_0$ and $r_0(\cdot)$. 
%\tcomm{$\omega^*(\X)=\bar\omega(\X)=\bbw(\X)$ should not be an assumption but will hold when model is correct?}
\end{assumption}

\begin{assumption}[Estimation error of the nuisance models]
\label{asu:3}
The nuisance estimators satisfy that (i) $n^{1/2}(\widehat\balpha\supfk-\bar\balpha)$ and $n^{1/2}(\widehat\bgamma\supfk-\bar\bgamma)$ is asymptotically normal with mean $\bzero$ and finite variance; (ii) for every $k\in\{1,2,\ldots,K\}$ and $j\in\{0,1\}$: 
\begin{align*}
&\rE_1\{\widehat h\supfk(\Z)-\bar h(\Z)\}^2+\rE_j\{\widehat r\supfk(\Z)-\bar r(\Z)\}^2=o_p(n^{-1/2});\\
&\sup_{\z\in\mathcal{Z}}|\widehat h\supfk(\z)-\bar h(\z)|+|\widehat r\supfk(\z)-\bar r(\z)|=o_p(1).
\end{align*}
\end{assumption}

%\tcomm{similarly, assumption 3 should not be an assumption}

%{\color{blue} I want Assumption 3 to be general (for arbitrary learning algorithms) like in the double machine learning paper. I added Proposition 1 to verify Assumption 3 for our kernel approach in Section 2.3 under certain smooth conditions.}

\begin{remark}
Assumption \ref{asu:1} is reasonable and commonly used for asymptotic analysis of $M$-estimation such as logistic regression \citep{van2000asymptotic}. Assumption on the compactness of the domain of $\A_i$ could be relaxed to accommodate unbounded covariates with regular tail behaviours. Assumption \ref{asu:2} assumes that at least one nuisance model is correctly specified, and the nonparametric component in the possibly wrong model satisfies the moment constraints (\ref{equ:mom:1}) or (\ref{equ:mom:2}). Similar to the classic double robustness condition for the parametric nuisance models \citep{bang2005doubly,qin2008efficient}, the parametric part from the wrong model in our method could be arbitrarily specified. %In addition, it naturally assumes that the limiting models are identical with the true models under correct model specification. %and existence of the nonparametric components solving (\ref{equ:mom:1}) and (\ref{equ:mom:2}) under wrong modeling as verified in Section \ref{sec:method:mom}.
\label{rem:1}
\end{remark}

Assumption \ref{asu:3}(ii) assumes that both the nonparametric components have their mean squared errors (MSE) below $o_p(n^{-1/2})$, known as the rate doubly robust assumption \citep{smucler2019unifying}. With a similar spirit to \cite{chernozhukov2016double}, our Assumption \ref{asu:3} is imposed directly on the calibrated estimators $\widehat h\supfk(\cdot)$ and $\widehat r\supfk(\cdot)$ regardless of their specific estimation procedures, to preserve the generality. Justification of Assumption \ref{asu:3} for the nuisance estimators obtained through smooth regression introduced in Section \ref{sec:method:spec} is not standard because the estimating equations in (\ref{equ:kern:mom}) involve the nuisance preliminary estimators impacting the calibrated estimator through their empirical errors. We present this result as Proposition \ref{prop:1} and its proof in Appendix~\ref{sec:app:kern}, leveraging existing literature about sieve and kernel approaches \citep{fan1995local,carroll1998local,shen1997methods,chen2007large}.

\begin{proposition}
Under Assumption \ref{asu:1} and Assumptions \ref{asu:a1}--\ref{asu:a3} presented in Appendix \ref{sec:app:kern} about regularity, smoothness and specification of the sieve and kernel functions, Assumption \ref{asu:3} holds for our mainly proposed nuisance estimators in Section \ref{sec:method:spec}.
\label{prop:1}
\end{proposition}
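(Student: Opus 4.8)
The plan is to verify the two parts of Assumption~\ref{asu:3} separately, since the parametric estimators $(\widehat\balpha\supfk,\widehat\bgamma\supfk)$ and the calibrated nonparametric estimators $(\widehat h\supfk,\widehat r\supfk)$ call for different machinery. For part~(i), I would exploit the fact that, by the construction chosen in Section~\ref{sec:method:spec}, $\widehat\balpha\supfk=\widetilde\balpha\supfk$ and $\widehat\bgamma\supfk=\widetilde\bgamma\supfk$ solve the penalized sieve estimating equations (\ref{equ:pre:nui:1})--(\ref{equ:pre:nui:2}). I would then invoke the standard theory of simultaneous (plug-in) sieve $M$-estimation: under the smoothness and sieve-approximation conditions of Assumptions~\ref{asu:a1}--\ref{asu:a3} (Hermite bases with suitably growing dimension), the nonparametric nuisance is profiled out at a rate faster than $n^{-1/4}$, the ridge penalties $\lambda_1,\lambda_2=o_p(n^{-1/2})$ are asymptotically negligible, and the parametric components are $n^{1/2}$-consistent and asymptotically normal about their limits $\bar\balpha=\balpha^*$, $\bar\bgamma=\bgamma^*$ \citep{shen1997methods,chen2007large}. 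This step is essentially a black-box application once the entropy and approximation conditions of Assumption~\ref{asu:a3} are checked.

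For part~(ii), I would analyze each localized equation in (\ref{equ:kern:mom}) pointwise. Fix $\z$ and read the first equation as a scalar equation in $r(\z)$. Since $g$ is strictly increasing (Assumption~\ref{asu:1}), the left-hand side is strictly monotone in $r(\z)$ with derivative bounded away from $0$ on a neighborhood, so the root is well defined and a one-term Taylor expansion gives
\begin{equation*}
\widehat r\supfk(\z)-\bar r(\z)=-\frac{\Psi_n\{\bar r(\z),\z\}}{\partial_r\Psi_n\{r^{\dagger}(\z),\z\}},
\end{equation*}
where $\Psi_n$ is the localized estimating function and $\bar r$ its population root. Because the denominator converges uniformly to a limit bounded away from zero, the task reduces to bounding the numerator $\Psi_n\{\bar r(\cdot),\cdot\}$ in $L_2(\rE_j)$ by $o_p(n^{-1/2})$ and in sup-norm by $o_p(1)$, with the analogous reduction for $\widehat h\supfk$.

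I would then split the numerator into an \emph{oracle} part, in which $\widetilde\omega\supfk,\widetilde m\supfk,\widehat\bgamma\supfk,\bkappahat\subismbbetahatfk$ are replaced by their limits $\omega^*,m^*,\bar\bgamma,\bkappabar\subismbbetan$, and a \emph{perturbation} part collecting the plug-in errors. The oracle part is a centered local-constant (Nadaraya--Watson-type) quantity, centered precisely by the moment conditions (\ref{equ:mom:1})--(\ref{equ:mom:2}), so its contribution to $\widehat r\supfk(\z)-\bar r(\z)$ has the usual order $O_p(h^\nu+(nh^{p_{\z}})^{-1/2})$; with the higher-order kernel and the smoothness/bandwidth budget of Assumption~\ref{asu:a3} (which forces $p_{\z}$ small relative to the smoothness, so a bandwidth with $h^{2\nu}=o(n^{-1/2})$ and $(nh^{p_{\z}})^{-1}=o(n^{-1/2})$ exists) the integrated square is $o_p(n^{-1/2})$, while a uniform kernel law of large numbers with the extra $\log n$ factor \citep{fan1995local,carroll1998local} gives the sup-norm $o_p(1)$.

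The hard part, and the genuinely non-standard one, is the perturbation term: within each fold the preliminary estimators and the calibrated estimator both use $\Isc\subminusk$, so the cross-fitting independence exploited in \cite{chernozhukov2016double} is unavailable. The key fact I would exploit is that the calibration conditions (\ref{equ:mom:1})--(\ref{equ:mom:2}) annihilate the leading nonparametric preliminary error: writing $\widetilde\omega\supfk-\omega^*\approx\omega^*\{\bpsi\trans(\widehat\balpha\supfk-\bar\balpha)+(\widetilde h\supfk-h^*)(\Z)\}$, the factor $(\widetilde h\supfk-h^*)(\z)$ is deterministic given the fold and can be pulled out of the local average near $\z$, where it multiplies $\rE_1[\omega^*\bkappabar\subismbbetan(Y-g\{\bphi\trans\bar\bgamma+\bar r(\Z)\})\mid\Z=\z]$, which vanishes under Assumption~\ref{asu:2} (either $\omega^*=\bbw$ and (\ref{equ:mom:1}) holds, or $m^*=\mu$ so the residual has conditional mean zero given $\X$). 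Only second-order terms in $(\widetilde h\supfk-h^*)$ and $(\widetilde r\supfk-r^*)$ survive, which are $o_p(n^{-1/2})$ by the $L_2$ sieve rates of Assumption~\ref{asu:a3}; the parametric perturbations $\bpsi\trans(\widehat\balpha\supfk-\bar\balpha)$ and $\bphi\trans(\widehat\bgamma\supfk-\bar\bgamma)$ are $O_p(n^{-1/2})$ by part~(i), and the $\bkappahat\subismbbetahatfk$-perturbation is governed by $\|\widetilde\bbeta\supfk-\bar\bbeta\|+\|\bJhat-\bJ\|=o_p(n^{-1/4})$ (a preliminary rate I would establish first for the doubly robust equation defining $\widetilde\bbeta\supfk$). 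A symmetric argument using (\ref{equ:mom:2}) handles $\widehat h\supfk$, and the sign-split construction of Remark~\ref{rem:sign:split} merely duplicates the argument on the two subsamples, leaving the rates unchanged.
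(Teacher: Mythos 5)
Your proposal is correct, and its overall architecture coincides with the paper's own proof: part (i) is exactly the paper's Lemma \ref{lemma:A1}(b) — sieve $M$-estimation theory \citep{shen1997methods,chen2007large} applied to the penalized equations (\ref{equ:pre:nui:1})--(\ref{equ:pre:nui:2}), with the $o(n^{-1/2})$ ridge penalties negligible and $\bar\balpha=\balpha^*$, $\bar\bgamma=\bgamma^*$ — and for part (ii) both you and the paper first secure an $o_p(n^{-1/4})$ rate for $\widetilde\bbeta\supfk$ (the paper's Lemma \ref{lemma:A2}), then compare the calibrated equations (\ref{equ:kern:mom}) with an oracle version in which all plug-in preliminary quantities are replaced by their limits (the paper's (\ref{equ:kern:mom:ideal})), handling the oracle part by standard higher-order-kernel theory \citep{fan1995local} under the centering supplied by (\ref{equ:mom:1})--(\ref{equ:mom:2}) and Assumption \ref{asu:2}. (Note that, exactly as in the paper's Lemmas \ref{lemma:A2}--\ref{lemma:A4}, you need Assumption \ref{asu:2} here even though the proposition's statement does not list it.)

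Where you genuinely depart from the paper is in the perturbation terms carrying $\widetilde\omega\supfk-\omega^*$, $\widetilde m\supfk-m^*$, $\widehat\bgamma\supfk-\bar\bgamma$ and the estimated weights $\c\trans\bJhat_{\widetilde\bbeta\supfk}^{-1}\A_i$. You control the leading nonparametric perturbation by a Neyman-orthogonality (``annihilation'') argument: factor $(\widetilde h\supfk-h^*)(\z)$ out of the local average and let the calibration moment conditions center what remains, so that only second-order terms survive. The paper's Lemma \ref{lemma:A4} never needs this. Because Assumption \ref{asu:3}(ii) asks only for $L_2$ error $o_p(n^{-1/4})$, and the preliminary sieve errors are themselves already $o_p(n^{-1/4})$ by Lemma \ref{lemma:A1}(a), every perturbation term can be bounded crudely by $O_p\bigl(\bigl[\rE_1\{\widetilde\omega\supfk(\X)-\omega^*(\X)\}^2\bigr]^{1/2}+\|\widetilde\bbeta\supfk-\bbeta_0\|_2+\|\widehat\bgamma\supfk-\bar\bgamma\|_2+n^{-1/2}\bigr)=o_p(n^{-1/4})$, which suffices; the orthogonality idea is reserved in the paper for the proof of Theorem \ref{thm:1} (the terms $\Delta_{11}$ and $\Delta_{21}$), where genuine $o_p(n^{-1/2})$-level bias removal is required. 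Your refinement therefore proves more than is needed, and it carries one unaddressed technicality: pulling $(\widetilde h\supfk-h^*)(\z)$ out of the kernel average leaves a localization remainder governed by the modulus of continuity of $\widetilde h\supfk-h^*$ at the bandwidth scale, which you would have to control separately (e.g., via derivative bounds on the Hermite sieve). The gap is harmless — falling back to the crude bound absorbs that remainder at the $o_p(n^{-1/4})$ level — but it means the paper's simpler route is both sufficient and cleaner for what Assumption \ref{asu:3} actually demands, while your argument is the one you would want if a faster-than-$n^{-1/4}$ guarantee on the calibrated components were ever required.
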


%\begin{remark}
%Asymptotic normality of $\sqrt{n}(\widehat\balpha\supfk-\bar\balpha)$ and $\sqrt{n}(\widehat\bgamma\supfk-\bar\bgamma)$ in Assumption \ref{asu:3} can be achieved by semiparametric regression approaches using kernel smoothing or sieve to model the nonparametric components \citep{robinson1988root,newey1997convergence,lin2006semiparametric}, under mild regularity conditions and even when the nuisance models are misspecified \citep{ai2007estimation,cattaneo2018kernel}. Generally speaking, some high dimensional or machine learning modeling strategies may also allow for this $\sqrt{n}$-consistency, as studied in \cite{chernozhukov2016locally}. However, we think the current literature is not adequate for us to handle the general machine learning approaches very well under our framework and shall explore on this openly and heuristically in Section \ref{sec:disc:ml}.

%As commented in Remark, the bias corrected version of $\widehat\balpha\supfk$ and $\widehat\bgamma\supfk$ are the most satisfactory to our theoretical framework as they simultaneously preserve validity (correctness) and asymptotic normality in 
%As shown in , for low dimensional $\Z$ with the nonparametric components estimated by sieve or kernel smoothing, Assumption \ref{asu:3} holds under reasonable smoothness and regularity conditions. See Proposition \ref{prop:1} and its proof in Appendix~\ref{sec:app:kern} for more details.

%\label{rem:2}
%\end{remark}

Different from the sieve and kernel approaches introduced in Section \ref{sec:method:spec}, when there is high dimensional $\Z$ and the nonparametric components are estimated using modern machine learning approaches like lasso and random forest, our debiased method introduced in Appendix \ref{sec:app:disc} is used to construct the parametric nuisance components. We demonstrate in Appendix \ref{sec:app:disc} that such debiased estimation will satisfy Assumptions \ref{asu:3}(i) when the machine learning estimators for the nonparametric components have good quality.

%Considering this issue, our framework requires the MSE of both nonparametric components to converge at rate $o_p(n^{-1/2})$, which is slightly stronger than Assumption \ref{asu:3}(ii) that only requires their production to be $o_p(n^{-1})$. We shall leave more details to Appendix~\ref{sec:app:kern}. S

%We note their problem has been solved to a certain degree through cross-fitting by \cite{smucler2019unifying}. This may shed a light on further improving our specific implementation to relax the technical assumption discussed above.

%that error bound for $\rE_1\{\widehat h\supfk(\Z)-\bar h(\Z)\}^2$ involves $\|\widetilde\bbeta\supfk-\bbeta_0\|_2^2$ and $\rE\{\widetilde m\supfk(\Z)-m^*(\Z)\}^2$; error of $\rE_1\{\widehat r\supfk(\Z)-\bar r(\Z)\}^2$ contains $\|\widetilde\bbeta\supfk-\bbeta_0\|_2^2$ and $\rE_1\{\widetilde\omega\supfk(\Z)-\omega^*(\Z)\}^2$, which means that 

Now we present the main theoretical results about the consistency and asymptotic validity of our estimator $\c\trans\bbetahat\subatrel$ in Theorem \ref{thm:1} with its proof found in Appendix~\ref{sec:proof}.
\begin{theorem}
Under Assumptions \ref{asu:1} to \ref{asu:3}, it holds that $\|\bbetahat\subatrel-\bbeta_0\|_2=o_p(1)$ and 
\[
\sqrt{n}(\c\trans\bbetahat\subatrel-\c\trans\bbeta_0)=\frac{1}{\sqrt{n}}\sum_{i=1}^nF_i^{\Ssc}+\frac{\sqrt{n}}{N}\sum_{n+1}^{n+N}F_i^{\Tsc}+\sqrt{n}\bzeta_{\alpha}\trans(\widehat\balpha-\bar\balpha)+\sqrt{n}\bzeta_{\gamma}\trans(\widehat\bgamma-\bar\bgamma)+o_p(1),
\]
where $F_i^{\Ssc}=\bar\omega(\X_i)\A_i\left\{Y_i-\bar m(\X_i)\right\}$, $F_i^{\Tsc}=\A_i\{\bar m(\X_i)-g(\A_i\trans\bbeta)\}$, 
\begin{align*}
\bzeta_{\alpha}&=\rE_1\bar\omega(\X)\bkappa\subsmbbetan\left[Y-g\{\bphi\trans\bar{\bgamma}+\bar r(\Z)\}\right]\bpsi,\\    
\bzeta_{\gamma}&=\rE_1\bar{\omega}(\X)\bkappa\subsmbbetan\breve g\{\bar m(\X)\}\bphi-\rE_0\bkappa\subsmbbetan\breve g\{\bar m(\X)\}\bphi,
\end{align*}
$\widehat\balpha=K^{-1}\sum_{k=1}^K\widehat\balpha\supfk$, and $\widehat\bgamma=K^{-1}\sum_{k=1}^K\widehat\bgamma\supfk$. Consequently, $n^{1/2}(\c\trans\bbetahat\subatrel-\c\trans\bbeta_0)$ weakly converges to Gaussian distribution with mean $\bzero$ and variance of order $1$.
\label{thm:1}
\end{theorem}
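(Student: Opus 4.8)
I would treat $\bbetahat\subatrel$ as the $Z$-estimator solving the cross-fitted equation (\ref{equ:dr:cross}), prove consistency by a standard argzero argument, reduce the $\sqrt n$-analysis to a one-step linearization in $\bbeta$, and then spend the bulk of the effort on a simultaneous expansion of the estimating function in the four nuisance directions $(\balphahat,\bgammahat,\hhat,\rhat)$. For consistency, write $\bUhat\subDR(\cdot)$ for the left-hand side of (\ref{equ:dr:cross}) and let $\U(\bbeta)=\rE_1[\bar\omega(\X)\A\{Y-\mbar(\X)\}]+\rE_0[\A\{\mbar(\X)-g(\A\trans\bbeta)\}]$ be the population version obtained by plugging in the limiting nuisances. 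Since $\bbeta$ enters only through $g(\A\trans\bbeta)$, a uniform law of large numbers over the compact parameter set (Assumption \ref{asu:1}, $\dot g$ Lipschitz and $\A$ bounded), with the nuisance convergence of Assumption \ref{asu:3} and the moment bounds, gives $\sup_{\bbeta}\|\bUhat\subDR(\bbeta)-\U(\bbeta)\|_2=o_p(1)$. A change-of-measure computation under Assumption \ref{asu:2} shows $\U(\bbeta_0)=\bzero$: if $\omega$ is correct then $\bar\omega=\bbw$ and importance weighting collapses $\U$ to $\rE_0[\A\{\mu(\X)-g(\A\trans\bbeta)\}]$, whereas if $m$ is correct then $\mbar=\mu$ annihilates the source term; in both cases the root is $\bbeta_0$ by (\ref{def-beta}). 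As $\nabla_{\bbeta}\U=-\bJ_{\bbeta}$ is negative definite (Assumption \ref{asu:1}, $\dot g>0$), $\bbeta_0$ is the unique well-separated zero and $\|\bbetahat\subatrel-\bbeta_0\|_2=o_p(1)$ follows.

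A mean-value expansion of $\bzero=\bUhat\subDR(\bbetahat\subatrel)$ about $\bbeta_0$, whose Jacobian $-\bJhat_{\bbeta}\toP-\bJ_{\bbeta_0}$, yields $\sqrt n(\bbetahat\subatrel-\bbeta_0)=\bJ_{\bbeta_0}^{-1}\sqrt n\,\bUhat\subDR(\bbeta_0)+o_p(1)$, reducing everything to expanding $\c\trans\bJ_{\bbeta_0}^{-1}\sqrt n\,\bUhat\subDR(\bbeta_0)$; here $\c\trans\bJ_{\bbeta_0}^{-1}\A_i=\bkappa\subismbbetan$, exactly the scalar weight appearing in (\ref{equ:bias}) and in $\bzeta_{\alpha},\bzeta_{\gamma}$.

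The crux is the expansion of $\sqrt n\,\bUhat\subDR(\bbeta_0)$. Taylor-expanding $\omegahat\supfk=\exp\{\bpsi\trans\balphahat\supfk+\hhat\supfk\}$ and $\mhat\supfk=g\{\bphi\trans\bgammahat\supfk+\rhat\supfk\}$ about $\bar\omega,\mbar$, the zeroth-order piece is the oracle term $\frac1{\sqrt n}\sum_{i=1}^nF_i^{\Ssc}+\frac{\sqrt n}{N}\sum_{n+1}^{n+N}F_i^{\Tsc}$. The first-order piece in $\balphahat-\balphabar$ factors a sample average converging (LLN) to $\bzeta_{\alpha}$ times $\sqrt n(\balphahat-\balphabar)=O_p(1)$, while the first-order piece in $\bgammahat-\bgammabar$ collects contributions from both the source and target sums into the average converging to $\bzeta_{\gamma}$ times $\sqrt n(\bgammahat-\bgammabar)=O_p(1)$ (Assumption \ref{asu:3}(i)). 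The first-order pieces in $\hhat-\hbar$ and $\rhat-\rbar$ are precisely $\Delta_1$ and $\Delta_2$ of (\ref{equ:bias}), and all leftover terms---products of two nuisance errors and second-order remainders of $\exp$ and $g$---are $o_p(1)$ by Cauchy--Schwarz using the $L_2$-rates $o_p(n^{-1/4})$ and the sup-norm control of Assumption \ref{asu:3}(ii).

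The main obstacle is to show $\Delta_1=\Delta_2=o_p(1)$, where cross-fitting is indispensable: because $\hhat\supfk,\rhat\supfk$ are fit off $\Isc_k$, I condition on the out-of-fold data and split each $\Delta$ into a conditional mean and a conditionally centered remainder. The remainders have conditional variance of order $\rE_1(\hhat\supfk-\hbar)^2$ and $\rE_j(\rhat\supfk-\rbar)^2=o_p(n^{-1/2})$ by Assumption \ref{asu:3}(ii), hence vanish. The conditional means vanish through the calibration conditions combined with double robustness: for $\Delta_1$, the identity $\rE_1[\bar\omega\bkappa\subsmbbetan\{Y-\mbar\}f(\Z)]=0$ holds for every $f$, being (\ref{equ:mom:1}) when $\omega$ is correct (so $\bar\omega=\bbw$) and following from $\mbar=\mu$ making $Y-\mbar$ a conditional-mean-zero residual when $m$ is correct; for $\Delta_2$, the identity $\rE_1[\bar\omega\bkappa\subsmbbetan\breve g(\mbar)f(\Z)]=\rE_0[\bkappa\subsmbbetan\breve g(\mbar)f(\Z)]$ holds for every $f$, being (\ref{equ:mom:2}) when $m$ is correct and a change-of-measure identity when $\omega$ is correct, so the source and target halves of $\Delta_2$ cancel in conditional mean. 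Taking $f=\hhat\supfk-\hbar$ and $f=\rhat\supfk-\rbar$ then forces both conditional means to zero, delivering the displayed expansion. Finally, the two oracle averages run over the independent source and target samples and, since $\rE_1F^{\Ssc}+\rE_0F^{\Tsc}=\c\trans\bJ_{\bbeta_0}^{-1}\U(\bbeta_0)=\bzero$, their non-random parts cancel; the classical i.i.d.\ CLT for these centered averages, the asymptotic normality of $\sqrt n(\balphahat-\balphabar)$ and $\sqrt n(\bgammahat-\bgammabar)$ (Assumption \ref{asu:3}(i)), and Slutsky give a mean-zero Gaussian limit whose variance is of order one by the moment bounds of Assumption \ref{asu:1} with $\rho,K=O(1)$.
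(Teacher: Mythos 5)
Your overall architecture---consistency via the population estimating function, linearization in $\bbeta$, Taylor expansion in the four nuisance directions, cross-fitting plus the calibration conditions (\ref{equ:mom:1})--(\ref{equ:mom:2}) to kill the nonparametric first-order terms, then CLT and Slutsky---matches the paper's proof, and your treatment of the hard part ($\Delta_1,\Delta_2$ via conditional means annihilated by the doubly robust calibration identities and conditional variances controlled through cross-fitting) is essentially the paper's analysis of its terms $\Delta_{11}$ and $\Delta_{21}$. The gap is in your reduction step. You claim that a mean-value expansion together with mere consistency of the Jacobian, $-\bJhat_{\breve\bbeta}\toP-\bJ_{\bbeta_0}$, yields
\[
\sqrt{n}(\bbetahat\subatrel-\bbeta_0)=\bJ_{\bbeta_0}^{-1}\sqrt{n}\,\bUhat\subDR(\bbeta_0)+o_p(1).
\]
This requires $(\bJhat_{\breve\bbeta}^{-1}-\bJ_{\bbeta_0}^{-1})\sqrt{n}\,\bUhat\subDR(\bbeta_0)=o_p(1)$, and hence that $\sqrt{n}\,\bUhat\subDR(\bbeta_0)=O_p(1)$ as a \emph{vector}. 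That is exactly what fails in this problem: the calibration equations (\ref{equ:kern:mom}) weight by the scalar $\c\trans\bJhat_{\widetilde\bbeta\supfk}^{-1}\A_i$, so $\widehat h\supfk$ and $\widehat r\supfk$ remove the nonparametric first-order bias only in the single direction $\c\trans\bJ_{\bbeta_0}^{-1}$. In any other coordinate $j$, when only one nuisance model is correct (say $\omega$ correct, $m$ wrong), the term $n^{-1/2}\sum_{i}\bar\omega(\X_i)A_{ji}\{Y_i-\bar m(\X_i)\}\{\widehat h\supfk(\Z_i)-\bar h(\Z_i)\}$ has conditional mean proportional to $\sqrt{n}\,\rE_0\bigl[A_{j}\{\mu(\X)-\bar m(\X)\}\{\widehat h\supfk(\Z)-\bar h(\Z)\}\bigr]$, which is not annihilated by (\ref{equ:mom:1}) (that identity only holds with weight $\bkappa\subsmbbetan$) and is only $o_p(n^{1/4})$ by Cauchy--Schwarz and Assumption \ref{asu:3}(ii). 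So your remainder is $o_p(1)\cdot o_p(n^{1/4})$, which does not vanish.

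This is precisely why the paper's first step is not bare consistency but the rate $\|\bbetahat\subatrel-\bbeta_0\|_2=o_p(n^{-1/4})$, obtained by bounding the terms $\bDelta_a,\bDelta_b,\bDelta_c$ and invoking Pollard's theorem; that rate delivers $\|\bJhat_{\breve\bbeta}^{-1}-\bJ_{\bbeta_0}^{-1}\|_{\infty}=o_p(n^{-1/4})$ (equation (\ref{equ:app:a1})), so the correction terms coupling the Jacobian error with the uncalibrated nonparametric bias (the paper's $\Delta_{12}$, $\Delta_{22}$) become $o_p(n^{-1/4})\cdot o_p(n^{1/4})=o_p(1)$. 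Your proof closes once you upgrade your consistency argument to this $o_p(n^{-1/4})$ rate (your uniform-LLN/well-separated-zero argument gives only $o_p(1)$) and carry the resulting Jacobian-error terms explicitly; everything downstream can stay as written. A secondary, minor point: your claim that the conditionally centered parts of $\Delta_1,\Delta_2$ have conditional variance of order $\rE_1\{\widehat h\supfk(\Z)-\bar h(\Z)\}^2=o_p(n^{-1/2})$ tacitly assumes bounded multipliers; since $\bar\omega(\X)\{Y-\bar m(\X)\}$ is not bounded, one needs the sup-norm control in Assumption \ref{asu:3}(ii) (or Cauchy--Schwarz with the fourth moments of Assumption \ref{asu:1}), as the paper does, to get the bound $o_p(1)$, which is all that is required.
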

%\tcomm{why not provide expansions for $\balphahat-\balphabar$ and $\bgammahat-\bgammabar$}

%{\blue Because we allow using arbitrary (valid) estimating equations to obtain $\balphahat$ and $\bgammahat$.}

\begin{remark}
When Assumption \ref{asu:2}(i) holds, i.e. the density ratio is correctly specified, one have that $\bzeta_{\gamma}=\bzero$ so $\widehat\bgamma\supfk-\bar\bgamma$ has no impact on the asymptotic expansion $\c\trans\bbetahat\subatrel$. Similarly, when the imputation model is correct, $\bzeta_{\alpha}=\bzero$ and $\widehat\balpha\supfk-\bar\balpha$ has no impact on $\c\trans\bbetahat\subatrel$. When both nuisance models are correctly specified, $\c\trans\bbetahat\subatrel$ is a semiparametric efficient estimator for $\c\trans\bbeta_0$ in our case of covariate shift regression \citep{hahn1998role}.
\end{remark}

%\tcomm{need some remarks regarding efficiency} %\tcomm{ideally also remarks regarding when $n/N \to 0$} 
%{\blue Not need to mention $n/N \to 0$ according to Abhishek, right?}

\def\ba{\mathbf{a}}
\def\bb{\mathbf{b}}
\def\bc{\mathbf{c}}
\def\bd{\mathbf{d}}
\def\subBE{_{\scriptscriptstyle \sf BE}}
\def\subKM{_{\scriptscriptstyle \sf KM}}

\section{Simulation studies}\label{sec:simu}

We conduct simulation studies to investigate the performance of the ATReL method and compare it with existing doubly robust approaches. We consider four different data generating mechanisms concerning specification of the nuisance models. Throughout, we let $n=500$ and $N = 1000$. To generate the data, we first generate $\V=(V_1,V_2, ..., V_7)\trans$ from $\Nsc(\bzero,\Sigma_V)$ where $\Sigma_V=(\sigma_{ij})_{7\times 7}$, $\sigma_{ij}=1$ when $i=j$, $\sigma_{ij}=0.3$ when $(i,j)\mbox{ or }(j,i)\in\{(1,2),(1,3),(3,4),(3,5)\}$, $\sigma_{ij}=0.15$ when $(i,j)\mbox{ or }(j,i)\in\{(1,6),(1,7),(5,6),(5,7)\}$, and $\sigma_{ij}=0$ otherwise. Then we obtain each $\Xtilde_j$ by truncating $V_j$ with $(-1.5,1.5)$ and standardizing it, and take
\[
\W=\left\{1, \exp(0.5\Xtilde_1), \frac{\Xtilde_2}{1+\exp(\Xtilde_3)} ,\left(\frac{\Xtilde_1\Xtilde_3}{5}+0.6\right)^3, \Xtilde_4,...,\Xtilde_7\right\}\trans
\]
as a nonlinear transformation of $\bXtilde=(1,\Xtilde_1,\Xtilde_2,\ldots,\Xtilde_7)\trans$. Based on this, we consider four configurations for the underlying data generating mechanisms introduced below as the configurations indexed by (i)--(iv). First, we set $Z=\Xtilde_1$ and generate the source indication $S$ given $\bXtilde$ by ${\rm P}(S = 1 \mid \bXtilde) = g\{\ba_w\trans\W + \ba_x\trans\bXtilde+ h_x(Z)\}$ where

\begin{enumerate}
    \item[(i)] $\ba_w = (-1,0,-0.4,-0.4,-0.15,-0.15,0,0)\trans$, $\ba_x = \bzero$, and
    $h_x(Z)=0.6Z^2\cdot I(\lvert Z \rvert <1.5)+\{ 0.6(\lvert Z \rvert-1.5)+1.35\}\cdot I(\lvert Z \rvert \ge 1.5)$.
    \item[(ii)] The same as Configurations (i).
    \item[(iii)] $\ba_w = \bzero$, $\ba_x = (0,-0.2,-0.4,-0.4,-0.2,-0.2,0,0)\trans$, and
    $h_x(Z)=0.5{\lvert Z \rvert}^3\cdot I(\lvert Z \rvert <1.5)+\{0.5\cdot1.5^3+(\lvert Z \rvert-1.5)\}\cdot I(\lvert Z \rvert \ge 1.5)$.
    \item[(iv)] $\ba_w = \bzero$, $\ba_x = (0,-0.4,-0.4,-0.4,-0.15,-0.15,0,0)\trans$, and $h_x(Z)=0$.
\end{enumerate}
In Configurations 1 and 2, set the observed covariates as $\X=(1,X_1,X_2,\ldots,X_7)\trans$ where
\[
\widetilde{X}_2 =  0.8X_2  -0.2sin(\frac{3}{4}\pi Z)\cdot I(S=0);\quad\widetilde{X}_3 =  0.8X_3 -0.2sin(\frac{3}{4}\pi Z)\cdot I(S=0),
\]
and $X_j=\Xtilde_j$ for all $j\neq 2,3$. While in Configurations 3 and 4, we simply set $\X=\bXtilde$. Then we generate $Y$ given $\X$ by ${\rm P}(Y = 1 \mid \X) = g\{\bb_w\trans\W + \bb_x\trans\X+ r_x(Z)\}$, where 
\begin{enumerate}
    \item[(i)]  $\bb_w =\bzero$, $\bb_x = (0,0.5,0.5,0.5,0.3,0.3,0.15,0.15)\trans$, $r_x(Z)=-0.4 \cdot sin(\frac{3}{4}\pi Z)$.\\
    \item[(ii)] $\bb_w =\bzero$, $\bb_x = (0,0.5,0.5,0.5,0.3,0.3,0.15,0.15)\trans$, $r_x(Z)= 0$.\\
    \item[(iii)] $\bb_w = (-0.5,0.5,0.8,0.3,-0.3,-0.2,0.15,0.15)\trans$, $\bb_x =\bzero$, $r_x(Z)=-0.6 \cdot sin(\frac{3}{4}\pi Z).$\\
    \item[(iv)] $\bb_w = (-0.8,0.5,0.5,0.5,0.3,0.3,0.15,0.15)\trans$, $\bb_x =\bzero$, $r_x(Z)=-0.4 \cdot sin(\frac{3}{4}\pi Z).$\\
\end{enumerate}
In all the four configurations, we set $\A = (1, X_1, ..., X_3)\trans$. For each generated dataset, we fit the following nuisance models to estimate $\bbeta_0$: 
\begin{enumerate}
\item[(a)] Parametric nuisance models (Parametric): the importance weight model is chosen as the logistic model of $S$ against $\bPsi=\X$ and the imputation model is specified as the logistic model of $Y$ against $\bPhi=\X$.
%\tcomm{suggest using either just $\X$ or include basis for both imputation and density ratio models}

\item[(b)] Semi-non-parametric nuisance models (ATReL): ${\rm P}(S=1 \mid \X)=g\{\bPsi\trans\balpha+h(Z)\}$ and  ${\rm P}(Y=1\mid\X)=g\{\bPhi\trans\bgamma+r(Z)\}$, where $\bPsi=\X$, $\bPhi=\X$, and $Z=X_1$.

\item[(c)] Double machine learning with flexible basis expansions (DML$\subBE$): the nuisance models regress $Y$ or $S$ on features combining together $\X$, natural splines of each $X_j$ with order $4$ and all the interaction terms of these natural splines. Due to high dimensionality of the bases, we use a combination of $\ell_1$ and $\ell_2$ penalties for regularization.

\item[(d)] Double machine learning with kernel machine (DML$\subKM$): both models are estimated using support vector machine with the radial basis function kernel.

\end{enumerate}

Our data generation and model specification have a similar spirit as \cite{kang2007demystifying} and \cite{tan2020model}. In Configurations (i) and (ii), our semi-non-parametric imputation model correctly characterizes $Y\mid\X$ while our importance weight model is mis-specified. Parametric approach (a) has its imputation model correctly specified under Configuration (ii) but misses the nonlinear function $r(Z)$ under (i). Also note that under (ii), nonparametric component included in the imputation model of our method is redundant for the logistic linear model of ${\rm P}(Y=1\mid\X)$. Similar logic applies to Configurations (iii) and (iv) with the status of the imputation model and importance weight model interchanged. More implementing details of (a)--(d) are presented in Appendix~\ref{sec:app:results:sim}.

%Our method has a wrong specification for the importance weight model but correct imputation model under (i) and (ii), and wrong imputation model but correct importance weight model under (iii) and (iv). While the parametric estimator has at least one correct nuisance models only for (ii) and (iv) as it does not capture the nonparametric components $r(X_2)$ or $h(X_3)$. Differently, the double machine learning approaches (c) and (d) use more complex models to learn the two nuisance functions nonparametrically. 
%Though our configurations include certain parametric structures, they are not unfair to the double machine learning approaches because all the four configurations have one nuisance model with highly complex and nonlinear structures beyond the capability of the parametric or semi-non-parametric models of methods (a) and (b).

Performance of the four approaches are evaluated through root mean square error, bias and coverage probability of the 95\% confidence interval in terms of estimating and inferring $\beta_0,\beta_1,\beta_2,\beta_3$, as summarized in Tables \ref{tab:simu:IMPcornp}--\ref{tab:simu:IWcornonp} of Appendix~\ref{sec:app:results:sim} for configurations (i)--(iv) respectively. The mean square error and absolute bias averaged over the target parameters, and the maximum deviance of the coverage probability from the nominal level $0.95$ among all parameters are summarized in Table~\ref{tab:simu:1}.

\begin{table}[htbp] 
\centering 
\caption{\label{tab:simu:1} Average root mean square error (RMSE), average absolute bias ($|$Bias$|$), and maximum deviance of coverage probability (CP) of the constructed CI from its nominal level $0.95$ over all parameters of the doubly robust estimators with different modeling strategies for the nuisance models: Parametric, ATReL, DML$\subBE$ and DML$\subKM$ under Configurations (i)--(iv), as introduced in Section \ref{sec:simu}.} %correspond to the scenarios where our method has correct imputation model with nonparametric effects, correct imputation model without nonparametric effects, correct importance weight model with nonparametric effects and correct importance weight model without nonparametric effects, respectively.
\begin{tabular}{cccccc} 
\\[-1.8ex]\hline 
\hline \\[-1.8ex] 
Configurations & & Parametric  & ATReL & DML$\subBE$ & DML$\subKM$ \\
\hline \\[-1.8ex] 
%\multicolumn{5}{c}{Configuration (i)}  \\
(i) & Average RMSE & 0.141 & 0.123 & 0.179 & 0.153  \\ 
& Average $|$Bias$|$ & 0.065 & 0.030 & 0.108 & 0.058  \\ 
& Deviance of CP  & 0.04 & 0.02 & 0.11 & 0.10 \\
\hline\\[-1.8ex] 
%\multicolumn{5}{c}{Configuration (ii)}  \\
(ii) & Average RMSE & 0.117 & 0.123 & 0.186 & 0.148   \\ 
& Average $|$Bias$|$ & 0.005 & 0.016 & 0.114 & 0.061 \\ 
& Deviance of CP  & 0.04 & 0.02 & 0.13 & 0.05 \\
\hline\\[-1.8ex] 
%\multicolumn{5}{c}{Configuration (iii)}  \\
(iii) & Average RMSE & 0.207 & 0.134 & 0.142 & 0.144   \\ 
& Average $|$Bias$|$ & 0.092 & 0.019 & 0.036 & 0.062  \\ 
& Deviance of CP  & 0.13 & 0.02 & 0.02 & 0.09 \\
\hline\\[-1.8ex] 
%\multicolumn{5}{c}{Configuration (iv)}  \\
(vi) & Average RMSE & 0.131 & 0.122 & 0.145 & 0.128   \\ 
& Average $|$Bias$|$ & 0.005 & 0.009 & 0.058 & 0.044  \\ 
& Deviance of CP  & 0.01 & 0.02 & 0.22 & 0.09 \\
\hline
\hline \\[-1.8ex] 
\end{tabular} 
\end{table} 

Under all configurations, ATReL achieves better performance, especially at least $48\%$ smaller average bias, than the two double machine learning approaches. Also, ATReL performs well in interval estimation with coverage probabilities on all parameters under all configurations falling in $\pm 0.02$ of the nominal level. In comparison, the Parametric method fails obviously on interval estimation of $\beta_1$ under (iii) because in the importance weighting model, nonparametric component is placed on the corresponding predictor. The two double machine learning approaches fail apparently on interval estimation of certain parameters, for example, Additive approach fails on interval estimation of $\beta_0$ under Configuration (i), (ii) and (iv) and Kernel machine fails on $\beta_1$ under Configuration (i), (iii) and (iv). These demonstrate that our method achieves better balance on the model complexity than the fully nonparametric/machine learning constructions, leading to consistently better performance on point and interval estimation. 

Our method has significantly smaller root mean square error than Parametric under (i) (relative efficiency being $0.89$) and (iii) (relative efficiency being $0.65$), with nonlinear effects in the nuisance models captured by our method and missed by the parametric approach. Under these two configurations, our method also has ($55\%$ under (i) and $79\%$ under (iii)) smaller average absolute bias than Parametric. While for (ii) and (iv) with the nonparametric components in our construction being redundant, performance of our method is close to the parametric approach. Thus, our nonparametric components modeling help to reduce bias and improve estimation efficiency in the presence of nonlinear effects while they basically do not hurt the efficiency when being redundant. 

%In addition, we find that our method still has slightly lower average mean squared errors than Parametric under (ii) and (iv) without non-linear effects in $\rE[Y|\X]$ or ${\rm P}(S=1\mid\X)$. It is due to that including redundant basis in the one nuisance model can effectively reduce the asymptotic variance of the estimator when the other one is wrong \citep{tsiatis2007semiparametric,kawakita2013semi}. 

%Second, estimation performance of the sieve or kernel approach is better when the functions to be estimated are null or linear, than the nonlinear case and thus their estimation error should have less impact under (ii) and (iv).

\section{Transfer EHR phenotyping of rheumatoid arthritis across different time windows}\label{sec:real}

Growing availability of EHR data opens more opportunities for translational biomedical research \citep{kohane2012translational}. However, a major obstacle to realizing the full translational potential of EHR is the lack of precise definition of disease phenotypes needed for clinical studies. With a small number of gold standard labels for phenotypes, machine learning phenotyping algorithms based on both codified EHR features and clinical note mentions extracted using natural language processing (NLP) have been derived to improve the phenotype definition \cite{liao2019high}. For example, several phenotyping algorithms for rheumatoid arthritis (RA), a common autoimmune disease, have been developed and validated at multiple institutions in recent years \citep{liao2010electronic,carroll2012portability,yu2017surrogate}. Once the phenotyping algorithms become available, they are used to classify disease status for downstream tasks such as genomic association studies using EHR linked biobank data \citep{kohane2011using}.

Once a phenotyping algorithm is developed, it is often used repeatedly to classify disease status for patients in an EHR database which are often updated over time. For example, the RA algorithm developed by \cite{liao2010electronic} at Mass General Brigham (MGB) was trained in 2009 and validated again in 2020 \cite{huang2020impact}. Significant changes have occurred between 2009 and 2020: the EHR system at MGB was switched to EPIC and the International Classification of Diseases (ICD) system was changed from version 9 to version 10 around 2015 - 2016. Although the algorithm trained in \cite{liao2010electronic} appears to have stable performance for the 2020 data \cite{huang2020impact}, we investigated to what extent transfer learning can be used to automatically update the phenotyping algorithm over time. 
%
%Though highly valuable in practice, constructing phenotyping algorithms with high quality is challenging. On one hand, unsupervised algorithms only based on EHR features \citep[e.g.]{liao2019high} could attain good rank-based predictive performance, e.g. receiver operating characteristic, but may poorly estimate specific association parameters, i.e. the outcome models of the disease status against the EHR features. On the other hand, supervised methods require the availability of chart review labels obtained via extensive human efforts. Thus, efficiently utilizing the labels across different populations is crucial for the accuracy and transportability of EHR phenotyping algorithms. In this process, robust and efficient statistical method is needed to properly adjust for the nuisance covariate shift across different EHR data sets.
%
To this end, we considered training an RA EHR phenotyping algorithm to classify RA status for patients with EHR data from 2016 at MGB using training data from 2009. 

There are a total of 200 labeled patients with true RA status, $Y$, manually annotated via chart review. There are a total of $p=9$ demographic or EHR features, $\X$, available for training RA algorithm, including the total healthcare utilization ($X_1$),  NLP count of RA ($X_2$), NLP mention of tumor necrosis factor (TNF) inhibitor ($X_3$), NLP mention of bone erosion ($X_4$), age ($X_5$), gender ($X_6$), ICD count of RA ($X_7$), presence of TNF inhibitor prescription ($X_8$), and tested negative for rheumatoid factor ($X_9$), where we use $x\to \log(x+1)$ transformation for all count variables. Since NLP mentions of clinical terms are less sensitive to changes to the EHR coding system, we aim to develop an NLP feature only model for predicting $Y$ using $\A = (X_1, X_2, X_3, X_4)\trans$, for the EHR cohort of 2016 using labeled data from 2009 via transfer learning. Due to the co-linearity among $\A$, we convert $X_2$ into its orthogonal complement to $X_1$. For simplicity, we still denote the transformed covariates as $(X_1,X_2,X_3,X_4)\trans$. 

%\tcomm{plz modify results since erosion is also NLP}

We implemented the doubly robust transfer learning approaches introduced in Section \ref{sec:simu}, including Parametric, ATReL, DML$\subBE$ and DML$\subKM$. Specific construction of the nuisance models in the four approaches are presented in Appendix~\ref{sec:app:results:real}. We also include the logistic model for $Y\sim\A$ simply fitted on the source data without adjusting for covariate shift, named as Source. For our proposed ATReL, we choose $Z$ as the NLP count of RA for non-parametric modeling since it is the most predictive feature in $\A$.

To evaluate the performance of the transfer learning, we additional performed chart review on $150$ subjects from the target population in 2016, denoted as $\Lsc_{16}$. We fit a logistic regression $Y \sim \A$ using these labeled observations in $\Lsc_{16}$ and denote the estimate for $\bbeta$ as $\bbetahat\subvalid$ to serve as gold standard benchmark. Fitted intercepts and coefficients of all methods are presented in Table~\ref{tab:real:beta2015} of Appendix~\ref{sec:app:results:real}. To evaluate the estimation performance of a derived estimator $\bbetahat$ according to our practical needs, we calculate the following metrics:
\begin{itemize}
    \item[] {\bf AUC}. Area under the receiver operating characteristic (ROC) curve evaluated with the labels. For the Target estimator $\bbetahat\subvalid$, we use repeated sample-splitting for evaluation.
    
    \item[] {\bf RMSPE}. Relative mean square prediction error to $\bbetahat\subvalid$ evaluated on the target data:
    \[
    \frac{\widehat\rE_{0}\{g(\A\trans\bbetahat\subvalid)-g(\A\trans\bbetahat)\}^2}{\widehat\rE_{0}\{g(\A\trans\bbetahat\subvalid)\}^2}.
    \]
    \item[] {\bf CC} with $\bbetahat\subvalid$. Classifier's correlation with that of $\bbetahat\subvalid$:
    \[
    \widehat{\rm Corr}_0\left\{I\left(g(\A\trans\bbetahat\subvalid)\geq \widehat\rE_{0}[g(\A\trans\bbetahat\subvalid)]\right),I\left(g(\A\trans\bbetahat)\geq \widehat\rE_{0}[g(\A\trans\bbetahat)]\right)\right\},
    \]

    \item[] {\bf FCR} v.s. $\bbetahat\subvalid$. False classification rate of $\bbetahat$'s classifier against that of $\bbetahat\subvalid$:
    \[
    \widehat{\rP}_0\left\{I\left(g(\A\trans\bbetahat\subvalid)\geq \widehat\rE_{0}[g(\A\trans\bbetahat\subvalid)]\right)\neq I\left(g(\A\trans\bbetahat)\geq \widehat\rE_{0}[g(\A\trans\bbetahat)]\right)\right\}.
    \]
\end{itemize}
Here $\widehat\rE_{0}$, $\widehat{\rP}_0$, and $\widehat{\rm Corr}_0(\cdot,\cdot)$  represent the empirical expectation, probability measure, and pearson correlation on the target population.
Evaluation results obtained with the target data and the validation labels are presented in Table~\ref{tab:realtable:1}. Our ATReL method attains the smallest estimation error among all the methods under comparison, with its relative efficiency of RMSPE being $0.21$ to the naive source estimator, $0.23$ to doubly robust estimator with parametric nuisance models, $0.17$ to double machine learning with flexible basis expansions, and $0.46$ to double machine learning with kernel machine. Also, among Source and all the transfer learning estimators, ATReL produces the largest AUC, as well as the closest classifiers to the gold standard target data estimator, i.e. attaining the largest CC with $\bbetahat\subvalid$ and smallest FCR v.s. $\bbetahat\subvalid$. Thus, by trading-off the parametric and nonparametric modeling strategies in a better way to adjust for the covariate shift, our method achieves better estimation performance than all existing methods.

\begin{table}[H]
\centering
\caption{Estimation performance of the source or transfer learning estimators evaluated with the validation labeled data and validation estimator denoted as Target. All included methods are as described in Sections \ref{sec:simu} and \ref{sec:real}. The evaluation metrics, as introduced in Section \ref{sec:real}, include AUC: area under the ROC curve; RMSPE: relative mean square prediction error; CC with $\bbetahat\subvalid$: classifier's correlation with that of $\bbetahat\subvalid$; FCR v.s. $\bbetahat\subvalid$: false classification rate against $\bbetahat\subvalid$.}
\label{tab:realtable:1}
\begin{tabular}{ccccccc}
\hline
\hline\\[-1.8ex] 
& Source & Parametric  &{ATReL} & DML$\subBE$ & DML$\subKM$ & {Target} \\
\hline\\[-1.8ex] 
AUC &   0.908 & 0.904 & 0.916 & 0.907 & 0.911 & 0.922 \\
RMSPE &  0.052 & 0.048 & 0.011 & 0.064 & 0.024 & 0 \\ 
\hline\\[-1.8ex] 
Prevalence  &  0.376 & 0.336 & 0.323 & 0.329 & 0.330 & 0.340 \\
\hline\\[-1.8ex] 
CC with $\bbetahat\subvalid$ &  0.890 & 0.880 & 0.970 & 0.910 & 0.930 & 1 \\ 
FCR v.s. $\bbetahat\subvalid$ & 0.050 & 0.060 & 0.010 & 0.050 & 0.030 & 0 \\  
\hline\hline\\[-1.8ex] 
\end{tabular}
\end{table}

%\begin{table}[H]
%\centering
%\caption{Soure: 200 samples from 2009 and 2011 RA data; Target: 1000 samples from {2017} RA data; Validation: 147 samples from 2017 data. AUC: area under the classification curve; RMSPE: relative mean square prediction error; MSE: mean square error; Prevalence: estimation of the disease status prevalence; Correlation: correlation between the predicted labels and the true predicted ones in the validation samples ; False Classification Rate: average false classification proportion in the validation samples}
%\label{tab:realtable:2}
%\begin{tabular}{ccccccc}
%\hline
%\hline\\[-1.8ex] 
%& Source & Parametric  &{ATReL} & DML$\subBE$ & DML$\subKM$ & {Target} \\
%\hline\\[-1.8ex] 
%AUC &   0.948 & 0.948 & 0.955 & 0.949 & 0.955 & 0.958 \\ 
%\hline\\[-1.8ex] 
%Prevalence  &   0.432 & 0.387 & 0.382 & 0.381 & 0.367 & 0.395 \\
%\hline\\[-1.8ex] 
%RMSPE &  0.040 & 0.043 & 0.019 & 0.029 & 0.027 & 0 \\ 
%CC &   0.950 & 0.950 & 0.960 & 0.950 & 0.950 & 1 \\ 
%FCR & 0.030 & 0.030 & 0.020 & 0.030 & 0.030 & 0 \\ 
%\hline\hline\\[-1.8ex] 
%\end{tabular}
%\end{table}

\section{Discussion}\label{sec:disc}

\paragraph{Contribution and limitation.}
In this paper, we propose ATReL, a transfer regression learning approach using an imputation model to augment the importance weighting equation to achieve double robustness. Moreover, we propose a novel semi-non-parametric framework to construct the two nuisance models that achieves a better model complexity trade-off than existing doubly robust or double machine learning approaches. We show that $n^{1/2}$-consistency of our proposed estimator is guaranteed by a hybrid of the model double robustness of the parametric component and the rate double robustness of the nonparametric component. Simulation studies and the real example also demonstrate that our method is more robust and efficient than the existing fully parametric and double machine learning estimators. In our current approach, choice and specification of the nonparametric covariates $\Z$ really depend on one's prior knowledge or some preliminary analysis. Since it is crucial for us to properly choose the set of covariates in $\Z$ as well as its modeling strategy, it is desirable to further develop data-driven approaches to select the set and model of $\Z$ in our framework, to make ATReL more stable and usable in practice. We also notice some potential directions to generalize or enhance our current proposal and introduce them shortly as below with more details presented in Appendix~\ref{sec:app:disc}.

\paragraph{Sieve or modern machine learning estimation of the nonparametric parts.}

We propose some other choices in constructing the nuisance estimators alternative to the kernel smoothing method introduced in Section \ref{sec:method:spec}. Detailed construction procedures under these choices, including sieve and modern (black-box) machine learning algorithms are presented in Appendix~\ref{sec:app:disc}. First, we note that sieve can be naturally incorporated to solve the calibrated equations in (\ref{equ:kern:mom}) and achieve the same convergence properties as kernel. More importantly, we propose a construction procedure using arbitrary modern (nonparametric) machine learning algorithms to learn the nonparametric components in the nuisance models under our framework. This is substantially more challenging than the kernel or sieve constructions since we consider arbitrary black-box machine learning algorithms with no special forms, and thus it becomes more involving to derive nuisance estimators satisfying the moment conditions (\ref{equ:mom:1}) and (\ref{equ:mom:2}). To our best knowledge, similar problem has not been solved in existing literature. 

%Second, for high dimensional $\Z$ and sparse linear parametric $r(\Z)$ and $h(\Z)$, we propose to use lasso \citep{tibshirani1996regression} or dantzig equations \citep{candes2007dantzig} with certain moment constraints to calibrate the nuisance models. Such construction is similar to existing work \cite[e.g.]{smucler2019unifying,tan2020model} for semiparametric inference with high dimensional regularized nuisance estimators, and achieves model double robustness with respects to the specification of $r(\Z)$ and $h(\Z)$. Different from them, we require the parametric estimators $\widehat{\bgamma}\supfk$ and $\balphahat\supfk$ to be $n^{1/2}$-consistent. To achieve this, we implement debiasing procedures to obtain the nuisance estimators for $\bar\balpha$ and $\bar\bgamma$. Consequently, our method is more flexible in specifying the estimating equation for $\bar\balpha$ and $\bar\bgamma$ than existing work, which allows us to use different basis $\bPhi$ and $\bPsi$ in the two nuisance models while previous work requires the two models to have exactly the same covariates. Moreover, as discussed in Section~\ref{sec:disc} and Appendix~\ref{sec:app:disc:detail:intr}, this flexibility is particular useful in improving the intrinsic efficiency of our estimator over the existing method under this construction.

%As a byproduct contributing to existing literature, we develop machine learning estimators for $r(\cdot)$ and $h(\cdot)$, as well as debiased estimators for $\bar\balpha$ and $\bar\bgamma$ that are generally $n^{1/2}$-consistent. 

\paragraph{The $N\gg n$ scenario.}
In many application fields like EHR phenotyping studied in this paper, sample size of unlabeled data $N$ can usually be much larger than the size of labeled data $n$. Analysis of our method under such a $N\gg n$ scenario is of particular interests. %We shortly comment on how our method would benefit if there was large amounts of observed $\X$ (without $Y$) in the source, target, or both populations. 
%For example, when both the source and target populations have unlabeled samples with sizes much larger than $n$, one can improve the convergence rate of the estimator $\widehat h\supfk(\cdot)$, which makes Assumption \ref{asu:3}(ii) easier to be satisfied.
It has been established that semi-supervised learning with $N\gg n$ unlabeled samples enables estimating varies types of target parameters more efficiently than the supervised method \citep[e.g.]{kawakita2013semi,azriel2016semi,gronsSSL2017,chakrabortty2018efficient,gronsbell2020efficient}. However, existing work is restricted to the setting where the unlabeled and labeled data are from the same population. In the presence of covariate shift, it is of interests to further investigate whether having $N\gg n$ (unlabeled) target samples would benefit our estimator. As we could tell, when the importance weight model is correct, similar results as \cite{kawakita2013semi} should apply in our case and the asymptotic variance of ATReL could be reduced compared with the estimator obtained under the $N\asymp n$ or $N<n$ scenarios. Study of this problem warrants future work.

\paragraph{Intrinsic efficient estimator.}
When the importance weight model is correctly specified while the imputation model may be wrong, asymptotic variance of our estimator is dependent of the parameters $\bar\gamma$ and $\bar r(\cdot)$. For purely fixed dimensional parametric nuisance models, there exists certain moment equations for the imputation parameters that grants one to get the most efficient doubly robust estimator among those with the same specification of the imputation model. This property is referred as intrinsic efficiency \citep{tan2010bounded,rotnitzky2012improved}. Under our semi-nonparemetric framework, flexibility on specifying the parametric parts of the nuisance models makes the intrinsic efficiency of our proposed estimator worthwhile considering. In Appendix~\ref{sec:app:disc:detail:intr}, we introduce a modified construction procedure for $\widehat m\supfk(\cdot)$ that calibrates its nonparametric part, and ensures the intrinsic efficiency of the estimator of $\c\trans\bbeta_0$, or more generally, any given smooth function of $\bbeta_0$.

\bibliographystyle{apalike}
\bibliography{library}

% Appendix

\clearpage
\newpage
\setcounter{page}{1}
\appendix

\setcounter{lemma}{0}
\setcounter{theorem}{0}
\setcounter{figure}{0}
\setcounter{table}{0}
\setcounter{assumption}{0}
\setcounter{equation}{0}
\renewcommand{\thefigure}{A\arabic{figure}}
\renewcommand{\thetable}{A\arabic{table}}
\renewcommand{\theremark}{A\arabic{remark}}
\renewcommand{\thelemma}{A\arabic{lemma}}
\renewcommand{\thetheorem}{A\arabic{theorem}}
\renewcommand{\theassumption}{A\arabic{assumption}}
\renewcommand{\theequation}{A\arabic{equation}}

\setcounter{definition}{0}
\renewcommand{\thedefinition}{A\arabic{definition}}

%\section{Bias correction of the nuisance parametric estimators}\label{sec:debias}

\section*{Appendix}

\section{Proof of Theorem \ref{thm:1}}\label{sec:proof}
\begin{proof}
Let $\|\cdot\|_{\infty}$ represent the maximum norm of a vector or matrix. Without loss of generality, assume $\|\c\|_2=1$. First, we derive the error rate for the whole $\widehat\bbeta\subatrel$ vector, which is above the parametric rate but useful in analyzing the second order error terms. Inspired by \cite{chen2016robust}, we expand the left side of (\ref{equ:dr:cross}) as 
\begin{equation}
\begin{split}
&\frac{1}{n}\sum_{k=1}^K\sum_{i\in\Isc_k}\omegahat\supfk(\X_i)\A_i\left\{Y_i-\mhat\supfk(\X_i)\right\}+\frac{1}{N}\sum_{i=n+1}^{N+n}\A_i\{\mhat(\X_i)-g(\A_i\trans\bbeta)\}\\
=&\frac{1}{n}\sum_{i=1}^n\bar\omega(\X_i)\A_i\left\{Y_i-\bar m(\X_i)\right\}+\frac{1}{N}\sum_{i=n+1}^{N+n}\A_i\{\bar m(\X_i)-g(\A_i\trans\bbeta)\}\\
&+\frac{1}{n}\sum_{k=1}^K\sum_{i\in\Isc_k}\{\omegahat\supfk(\X_i)-\bar\omega(\X_i)\}\A_i\{\mhat\supfk(\X_i)-\bar m(\X_i)\}\\
&+\frac{1}{n}\sum_{k=1}^K\sum_{i\in\Isc_k}\bar\omega(\X_i)\A_i\{\widehat m\supfk(\X_i)-\bar m(\X_i)\}-\frac{1}{N}\sum_{i=n+1}^{N+n}\A_i\{\widehat m(\X_i)-\bar m(\X_i)\}\\
&+\frac{1}{n}\sum_{k=1}^K\sum_{i\in\Isc_k}\{\omegahat\supfk(\X_i)-\bar\omega(\X_i)\}\A_i\left\{Y_i-\bar m(\X_i)\right\}\\
=:&\V(\bbeta)+\bDelta_a+\bDelta_b+\bDelta_c.
\end{split}
\label{equ:app:1}
\end{equation}
By Assumption \ref{asu:3}, independence between $\omegahat\supfk(\cdot)$ and data from $\Isc_k$ or data from the target population, and using the central limit theorem (CLT), we have that: for each $k$,
\begin{align*}
&\frac{K}{n}\sum_{i\in\Isc_k}\{\omegahat\supfk(\X_i)-\bar\omega(\X_i)\}^2-\rE_1\{\omegahat\supfk(\X)-\bar\omega(\X)\}^2=o_p(n^{-1/2});\\
&\frac{K}{n}\sum_{i\in\Isc_k}\{\widehat m\supfk(\X_i)-\bar m(\X_i)\}^2-\rE_1\{\widehat m\supfk(\X)-\bar m(\X)\}^2=o_p(n^{-1/2});\\
&\frac{1}{N}\sum_{i=n+1}^{N+n}\{\widehat m(\X_i)-\bar m(\X_i)\}^2-\rE_0\{\widehat m(\X)-\bar m(\X)\}^2=o_p(n^{-1/2})
\end{align*}
Also, by Assumption \ref{asu:3} and Assumption \ref{asu:1}, we have that: for each $k$,
\begin{align*}
&\rE_1\{\widehat \omega\supfk(\X)-\bar \omega(\X)\}^2=\rE_1\left[\bar \omega(\X)\left\{\frac{\widehat \omega\supfk(\X)}{\bar \omega(\X)}-1\right\}^2\right]\\
=&\rE_1\left[\bar\omega^2(\X)\left(\|\bPsi\|_2^2\|\widehat\balpha\supfk-\bar\balpha\|_2^2+\left\{\widehat h\supfk(\Z)-\bar h(\Z)\right\}^2+\|\bPsi\|_2^4\|\widehat\balpha\supfk-\bar\balpha\|_2^4+\left\{\widehat h\supfk(\Z)-\bar h(\Z)\right\}^4\right)\right]\\
\leq&\rE_1\left[\{\bar\omega^4(\X)+\|\bPsi\|_2^4+\|\bPsi\|_2^8+O_p(n^{-1})\}\right]\|\widehat\balpha\supfk-\bar\balpha\|_2^2+\{1+o_p(1)\}\rE_1\left[\bar\omega^2(\X)\{\widehat h\supfk(\Z)-\bar h(\Z)\}^2\right]\\
=&O_p\left(\rE_1\left[\bar\omega^2(\X)\{\widehat h\supfk(\Z)-\bar h(\Z)\}^2\right]+n^{-1}\right)=o_p(n^{-1/2}),
\end{align*}
and that each $j\in\{0,1\}$,
\begin{align*}
&\rE_j\{\widehat m\supfk(\X)-\bar m(\X)\}^2\\
=&\rE_1\Big[\breve g^2\{\bar m(\X)\}\left(\|\bPhi\|_2^2\|\widehat\bgamma\supfk-\bar\bgamma\|_2^2+\left\{\widehat r\supfk(\Z)-\bar r(\Z)\right\}^2\right)\\
&+C_L^2\left(\|\bPhi\|_2^4\|\widehat\bgamma\supfk-\bar\bgamma\|_2^4+\left\{\widehat r\supfk(\Z)-\bar r(\Z)\right\}^4\right)\Big]\\
=&O_p\left(\rE_1\left[\breve g^2\{\bar m(\X)\}\{\widehat r\supfk(\Z)-\bar r(\Z)\}^2\right]+n^{-1}\right)=o_p(n^{-1/2}).        
\end{align*}
Thus, we have $\frac{K}{n}\sum_{i\in\Isc_k}\{\omegahat\supfk(\X_i)-\bar\omega(\X_i)\}^2=o_p(n^{-1/2})$, $\frac{K}{n}\sum_{i\in\Isc_k}\{\widehat m\supfk(\X_i)-\bar m(\X_i)\}^2=o_p(n^{-1/2})$ and $\frac{1}{N}\sum_{i=n+1}^{N+n}\{\widehat m(\X_i)-\bar m(\X_i)\}^2=o_p(n^{-1/2})$. Combining these with Assumption \ref{asu:1}, we have that
\begin{align*}
\|\bDelta_a\|_{\infty}\leq&n^{-1}\max_{i}\|\A_i\|_{\infty}\sum_{k=1}^K\sum_{i\in\Isc_k}\{\omegahat\supfk(\X_i)-\bar\omega(\X_i)\}^2+\{\widehat m\supfk(\X_i)-\bar m(\X_i)\}^2=o_p(n^{-1/2});\\
\|\bDelta_b\|_{\infty}\leq&\max_{i}\|\A_i\|_{\infty}\left[n^{-1}\sum_{k=1}^K\sum_{i\in\Isc_k}\bar\omega^2(\X_i)\right]^{\frac{1}{2}}\left[n^{-1}\sum_{k=1}^K\sum_{i\in\Isc_k}\{\widehat m(\X_i)-\bar m(\X_i)\}^2\right]^{\frac{1}{2}}\\
&+\max_{i}\|\A_i\|_{\infty}\left[N^{-1}\sum_{i=n+1}^{N+n}\{\widehat m(\X_i)-\bar m(\X_i)\}^2\right]^{\frac{1}{2}}=o_p(n^{-1/4});\\
\|\bDelta_c\|_{\infty}\leq &\max_{i}\|\A_i\|_{\infty}\left[n^{-1}\sum_{k=1}^K\sum_{i\in\Isc_k}Y_i^2+\bar m^2(\X_i)\right]^{\frac{1}{2}}\left[n^{-1}\sum_{k=1}^K\sum_{i\in\Isc_k}\{\widehat\omega(\X_i)-\bar\omega(\X_i)\}^2\right]^{\frac{1}{2}}=o_p(n^{-1/4}).
\end{align*}
Thus, $\bbetahat\subatrel$ solves: $\V(\bbeta)+o_p(n^{-1/4})=\bzero$. Let the solution of $\rE\V(\bbeta)=\bzero$ be $\bar\bbeta$. When $\bar\omega(\cdot)=\bbw(\cdot)$, 
\begin{align*}
\rE\V(\bbeta)=&\rE_1\bbw(\X)\X\{Y-g(\A\trans\bbeta)\}+\left[\rE_1\bbw(\X)\{g(\A\trans\bbeta)-\bar m(\X)\}-\rE_{0}\{g(\A\trans\bbeta)-\bar m(\X)\}\right]\\
=&\rE_{0}\X\{Y-g(\A\trans\bbeta)\}+\bzero.
\end{align*}
As $\bar m(\cdot)=\mu(\cdot)$, $\rE\V(\bbeta)=\bzero+\rE_{0}\{\bar \mu(\X)-g(\A\trans\bbeta)\}$. Both cases lead to that $\bbeta_0$ solves $\rE\V(\bbeta)=\bzero$. So under Assumption \ref{asu:2}, we have $\bar\bbeta=\bbeta_0$. By Assumption \ref{asu:1}, $\V(\bbeta)$ is continuous differential on $\bbeta$. Then using Theorem 8.2 of \cite{pollard1990empirical}, we have $\|\widehat\bbeta\subatrel-\bbeta_0\|_2=o_p(n^{-1/4})=o_p(1)$.

Then we consider the asymptotic expansion of $\c\trans\bbetahat\subatrel$. Noting that $\bbetahat\subatrel$ is consistent for $\bbeta_0$, by Theorem 5.21 of \cite{van2000asymptotic}, we expand (\ref{equ:app:1}) with respect to $\c\trans\bbetahat\subatrel$ as:
\begin{equation}
\begin{split}
&\sqrt{n}(\c\trans\bbetahat\subatrel-\c\trans\bbeta_0)\\
=&\nnhalf\sum_{i=1}^n\bar\omega(\X_i)\c\trans\widehat\bJ_{\breve\bbeta}^{-1}\A_i\left\{Y_i-\bar m(\X_i)\right\}+\frac{\sqrt{\rho}}{\sqrt{N}}\sum_{i=n+1}^{N+n}\c\trans\widehat\bJ_{\breve\bbeta}^{-1}\A_i\{\bar m(\X_i)-g(\A_i\trans\bbeta_0)\}\\
&+\nnhalf\sum_{k=1}^K\sum_{i\in\Isc_k}\{\omegahat\supfk(\X_i)-\bar\omega(\X_i)\}\c\trans\widehat\bJ_{\breve\bbeta}^{-1}\A_i\left\{Y_i-\bar m(\X_i)\right\}\\
&+\nnhalf\sum_{k=1}^K\sum_{i\in\Isc_k}\bar\omega(\X_i)\c\trans\widehat\bJ_{\breve\bbeta}^{-1}\A_i\{\widehat m\supfk(\X_i)-\bar m(\X_i)\}-\frac{\nhalf}{N}\sum_{i=n+1}^{N+n}\c\trans\widehat\bJ_{\breve\bbeta}^{-1}\A_i\{\widehat m(\X_i)-\bar m(\X_i)\}\\
&+\nnhalf\sum_{k=1}^K\sum_{i\in\Isc_k}\c\trans\widehat\bJ_{\breve\bbeta}^{-1}\A_i\{\omegahat\supfk(\X_i)-\bar\omega(\X_i)\}\{\mhat\supfk(\X_i)-\bar m(\X_i)\}\\
=:&V+\Xi_1+\Xi_2+\Delta_3,
\end{split}   
\label{equ:app:2}
\end{equation}
where $\breve\bbeta$ is some vector lying between $\bbeta_0$ and $\bbetahat\subatrel$. First, we shall show that $\|\widehat\bJ_{\breve\bbeta}^{-1}-\bJ_{\bbeta_0}^{-1}\|_{\infty}=O_p(n^{-1/4})$. Since the dimensionality of $\A$, $d$ is fixed, we have
\[
\left\|\widehat\bJ_{\breve\bbeta}^{-1}-\bJ_{\bbeta_0}^{-1}\right\|_{\infty}=\left\|\widehat\bJ_{\breve\bbeta}^{-1}\bJ_{\bbeta_0}^{-1}(\widehat\bJ_{\breve\bbeta}-\bJ_{\bbeta_0})\right\|_{\infty}\leq d^3\left\|\widehat\bJ_{\breve\bbeta}^{-1}\right\|_{\infty}\left\|\bJ_{\bbeta_0}^{-1}\right\|_{\infty}\left\|\widehat\bJ_{\breve\bbeta}-\bJ_{\bbeta_0}\right\|_{\infty}.
\]
Denote by $\A_i=(A_{1i},\ldots,A_{di})\trans$. By Assumption \ref{asu:1} and CLT, there exists a constant $C>0$ such that for $j,\ell\in\{1,\ldots,d\}$,
\begin{align*}
&\left|N^{-1}\sum_{i=n+1}^{n+N}A_{ji}A_{\ell i}\dot g(\A_i\trans\breve\bbeta)-\rE_0A_{ji}A_{\ell i}\dot g(\A\trans_i\bbeta_0)\right|\\
\leq& \left|N^{-1}\sum_{i=n+1}^{n+N}A_{ji}A_{\ell i} \{\dot g(\A_i\trans\breve\bbeta)-\dot g(\A\trans_i\bbeta_0)\}\right|+\left|N^{-1}\sum_{i=n+1}^{n+N}A_{ji}A_{\ell i}\dot g(\A_i\trans\bbeta_0)-\rE_0A_{ji}A_{\ell i}\dot g(\A\trans_i\bbeta_0)\right|\\
\leq&\left|N^{-1}\sum_{i=n+1}^{n+N}|A_{ji}A_{\ell i}| C_L|\A_i\trans\breve\bbeta-\A\trans_i\bbeta_0|\right|+O_p(n^{-1/2})\leq C\|\widehat\bbeta\subatrel-\bbeta_0\|_2+O_p(n^{-1/2})=o_p(n^{-1/4}).
\end{align*}
Also noting that $\|\bJ_{\bbeta_0}^{-1}\|_{\infty}$ is bounded by Assumption \ref{asu:1}, we have 
\begin{equation}
\left\|\widehat\bJ_{\breve\bbeta}^{-1}-\bJ_{\bbeta_0}^{-1}\right\|_{\infty}=o_p(n^{-1/4}).
\label{equ:app:a1}
\end{equation}
Under Assumption \ref{asu:2}, and similar to the deduction above, the expectation of 
\[
\nnhalf\sum_{i=1}^n\bar\omega(\X_i)\A_i\left\{Y_i-\bar m(\X_i)\right\}+\frac{\sqrt{\rho}}{\sqrt{N}}\sum_{i=n+1}^{N+n}\A_i\{\bar m(\X_i)-g(\A_i\trans\bbeta_0)\}
\]
is $\bzero$. So by Assumption \ref{asu:1}, equation (\ref{equ:app:a1}), CLT and Slutsky's Theorem, we have that $V$ weakly converges to $N(0,\sigma^2)$ where $\sigma^2$ represents the asymptotic variance of $V$ and is order $1$. We then consider the remaining terms separately. First, we have 
\begin{equation}
\begin{split}
\Xi_1=&\nnhalf\sum_{k=1}^K\sum_{i\in\Isc_k}\bar\omega(\X_i)\c\trans\widehat\bJ_{\breve\bbeta}^{-1}\A_i\left[Y_i-g\{\bphi\trans\bar{\bgamma}+\bar r(\Z)\}\right]\left[\bpsi_i\trans(\widehat\balpha\supfk-\bar\balpha)+O_p(\{\bpsi_i\trans(\widehat\balpha\supfk-\bar\balpha)\}^2)\right]\\
&+\nnhalf\sum_{k=1}^K\sum_{i\in\Isc_k}\bar\omega(\X_i)\bkappa\subismbbetan\left[Y_i-g\{\bphi\trans\bar{\bgamma}+\bar r(\Z)\}\right]\Delta h\supfk(\z_j)\\
&+\nnhalf\sum_{k=1}^K\sum_{i\in\Isc_k}\bar\omega(\X_i)\c\trans(\widehat\bJ_{\breve\bbeta}^{-1}-\bJ_{\bbeta_0}^{-1})\A_i\left[Y_i-g\{\bphi\trans\bar{\bgamma}+\bar r(\Z)\}\right]\Delta h\supfk(\z_j)\\
=:&U_1+\Delta_{11}+\Delta_{12},
\end{split}    
\label{equ:app:3}
\end{equation}
where $\Delta h\supfk(\z_j)=\widehat{h}\supfk(\Z_i)-\bar{h}(\Z_i)+O_p(\{\widehat{h}\supfk(\Z_i)-\bar{h}(\Z_i)\}^2)$. Recall that 
\[
\bzeta_{\alpha}=\rE_1\bar\omega(\X)\bkappa\subsmbbetan\left[Y-g\{\bphi\trans\bar{\bgamma}+\bar r(\Z)\}\right]\bpsi.
\]
Again using (\ref{equ:app:a1}) and Assumption \ref{asu:1}, we have that
\[
n^{-1}\sum_{k=1}^K\sum_{i\in\Isc_k}\bar\omega(\X_i)\c\trans\widehat\bJ_{\breve\bbeta}^{-1}\A_i\left[Y_i-g\{\bphi\trans\bar{\bgamma}+\bar r(\Z)\}\right]\xrightarrow{p}\bzeta_{\alpha}.
\]
Combining this with Assumption \ref{asu:1}, Assumption \ref{asu:3} that $\sqrt{n}(\widehat\balpha\supfk-\bar\balpha)$ is asymptotic normal with mean $0$ and covariance of order $1$, and using Slutsky's Theorem, we have that $U_1$ is asymptotically equivalent with $\sqrt{n}\bzeta_{\alpha}\trans(\widehat\balpha-\bar\balpha)$, which weakly converges to normal distribution with mean $0$ and variance of order $1$.

For $\Delta_{11}$, by Assumption \ref{asu:2}, the moment condition:
\[
\rE_1\left[\bar\omega(\X)\bkappa\subsmbbetan\left(Y-g\{\bPhi\trans\bar{\bgamma}+\bar r(\Z)\}\right)\Big|\Z\right]=0
\]
holds because under Assumption \ref{asu:2}(i), both limiting parameters $\omega^*(\cdot)=\bar\omega(\cdot)=\omega(\cdot)$ and $\bar r(\cdot)$ solves (\ref{equ:mom:1}) while under \ref{asu:2}(ii), $\rE_1[Y|\X]=g\{\bPhi\trans\bar{\bgamma}+\bar r(\Z)\}$, leading to
\[
\rE_1\left[\bar\omega(\X)\bkappa\subsmbbetan\left(Y-g\{\bPhi\trans\bar{\bgamma}+\bar r(\Z)\}\right)\Big|\X\right]=0.
\]
Combining this with the fact that $\widehat{h}\supfk(\cdot)$ is independent of the data in $\Isc_k$ due to the use of cross-fitting, we have $\rE_1\Delta_{11}=\rE_1[\Delta_{11}\mid\widehat{h}\supfk(\cdot)]=0+n^{1/2}O_p(\{\widehat{h}\supfk(\Z_i)-\bar{h}(\Z_i)\}^2)$. By Assumptions \ref{asu:1} and \ref{asu:3}(ii), we have that
\begin{align*}
&{\rm Var}_1\left(\bar\omega(\X_i)\bkappa\subismbbetan\left[Y_i-g\{\bphi\trans\bar{\bgamma}+\bar r(\Z)\}\right]\{\widehat{h}\supfk(\Z_i)-\bar{h}(\Z_i)\}\Big|\widehat{h}\supfk(\cdot)\right)\\
=&O(\rE_1[\bar\omega^2(\X_i)+Y_i^2+\bar m^2(\X_i)])\cdot o_p(1)=o_p(1),
\end{align*}
where ${\rm Var}_1$ and ${\rm Var}_0$ represent the variance operator of the source and target population respectively. Then by CLT and Assumption \ref{asu:3}(ii), we have that
\[
\Delta_{11}=\left(\Delta_{11}-\rE_1[\Delta_{11}|\widehat{h}\supfk(\cdot)]\right)+\rE_1[\Delta_{11}|\widehat{h}\supfk(\cdot)]=o_p(1)+n^{1/2}O_p(\{\widehat{h}\supfk(\Z_i)-\bar{h}(\Z_i)\}^2)=o_p(1).
\]
For term $\Delta_{12}$, by (\ref{equ:app:a1}) and Assumptions \ref{asu:1} and \ref{asu:3}, there exists constant $C_{12}>0$ such that 
\begin{align*}
|\Delta_{12}|\leq  C_{12}\max_i\|\A_i\|_{\infty}\left\|\widehat\bJ_{\breve\bbeta}^{-1}-\bJ_{\bbeta_0}^{-1}\right\|_{\infty}\left[n^{-1}\sum_{k=1}^K\sum_{i\in\Isc_k}\bar\omega^2(\X_i)\{\widehat{h}\supfk(\Z_i)-\bar{h}(\Z_i)\}^2\right]^{\frac{1}{2}}+o_p(1)=o_p(1).
\end{align*}
Therefore, we come to that $\Xi_1$ is asymptotically equivalent with $\sqrt{n}\bzeta_{\alpha}\trans(\widehat\balpha-\bar\balpha)$. Similarly, we write the term $\Xi_2$ as 
\begin{equation}
\begin{split}
\Xi_2=&\nnhalf\sum_{k=1}^K\sum_{i\in\Isc_k}\bar{\omega}(\X_i)\c\trans\widehat\bJ_{\breve\bbeta}^{-1}\A_i\breve g\{\bar m(\X_i)\}\left[\bphi_i\trans(\bgammahat\supfk-\bar\bgamma)+O_p(\{\bphi_i\trans(\bgammahat\supfk-\bar\bgamma)\}^2)\right]\\
&-\frac{\nhalf}{N}\sum_{i=n+1}^{N+n}\c\trans\widehat\bJ_{\breve\bbeta}^{-1}\A_i\breve g\{\bar m(\X_i)\}\left[K^{-1}\sum_{k=1}^K\bphi_i\trans(\bgammahat\supfk-\bar\bgamma)+O_p(\{\bphi_i\trans(\bgammahat\supfk-\bar\bgamma)\}^2)\right]\\
&+\nnhalf\sum_{k=1}^K\sum_{i\in\Isc_k}\bar{\omega}(\X_i)\bkappa\subismbbetan\breve g\{\bar m(\X_i)\}\Delta r\supfk(\Z_i)-\frac{\nhalf}{N}\sum_{i=n+1}^{N+n}\bkappa\subismbbetan\breve g\{\bar m(\X_i)\}\Delta r(\Z_i)\\
&+\nnhalf\sum_{k=1}^K\sum_{i\in\Isc_k}\bar{\omega}(\X_i)\c\trans\left[\widehat\bJ_{\breve\bbeta}^{-1}-\bJ_{\bbeta_0}^{-1}\right]\A_i\breve g\{\bar m(\X_i)\}\Delta r\supfk(\Z_i)\\
&-\frac{\nhalf}{N}\sum_{i=n+1}^{N+n}\c\trans\left[\widehat\bJ_{\breve\bbeta}^{-1}-\bJ_{\bbeta_0}^{-1}\right]\A_i\breve g\{\bar m(\X_i)\}\Delta r(\Z_i)\\
=:&U_2+\Delta_{21}+\Delta_{22},
\end{split}    
\label{equ:app:4}
\end{equation}
where $\Delta r\supfk(\Z_i)=\rhat\supfk(\Z_i)-\bar r(\Z_i)+O_p(\{\rhat\supfk(\Z_i)-\bar r(\Z_i)\}^2)$, $\Delta r(\Z_i)=K^{-1}\sum_{k=1}^K\Delta r\supfk(\Z_i)$, $U_2$ represents the difference of the first two terms, and $\Delta_{22}$ represents the difference of the last two terms. Similar to $U_1$, by (\ref{equ:app:a1}) and Assumption \ref{asu:1},
\[
\frac{1}{n}\sum_{k=1}^K\sum_{i\in\Isc_k}\bar{\omega}(\X_i)\c\trans\widehat\bJ_{\breve\bbeta}^{-1}\A_i\breve g\{\bar m(\X_i)\}\bphi_i-\frac{1}{N}\sum_{i=n+1}^{N+n}\c\trans\widehat\bJ_{\breve\bbeta}^{-1}\A_i\breve g\{\bar m(\X_i)\}\bphi_i\xrightarrow{p}\bzeta_{\gamma}.
\]
Again, combining this with Assumptions \ref{asu:1} and Assumption \ref{asu:3}, and using Slutsky's Theorem, we have that $U_2$ is asymptotically equivalent with $\sqrt{n}\bzeta_{\gamma}\trans(\widehat\bgamma-\bar\bgamma)$, which weakly converges to normal distribution with mean $0$ and variance of order $1$.

For $\Delta_{21}$, by Assumptions \ref{asu:2} and \ref{asu:3}, as well as the use of cross-fitting, we have that
\begin{align*}
&\rE_1\left(\frac{1}{n}\sum_{k=1}^K\sum_{i\in\Isc_k}\bar{\omega}(\X_i)\bkappa\subismbbetan\breve g\{\bar m(\X_i)\}\Delta r\supfk(\Z_i)\right)-\rE_{0}\left(\frac{1}{N}\sum_{i=n+1}^{N+n}\bkappa\subismbbetan\breve g\{\bar m(\X_i)\}\Delta r\supfk(\Z_i)\right)=o_p(n^{-1/2}).
\end{align*}
Here, we follow the same idea as that for $\Delta_{11}$: if Assumption \ref{asu:2}(i) holds, we have $\bar\omega(\cdot)=\bbw(\cdot)$ and
\[
\rE_1\left[\exp\{\bPsi\trans\bar{\balpha}+\bar{h}(\Z)\}\bkappa\subsmbbetan\breve{g}\{\bar m(\X)\}f(\X)\right]=\rE_{0}\left[\bkappa\subsmbbetan\breve{g}\{\bar m(\X)\}f(\X)\right]
\]
holds for all measurable function of $\X$, $f(\cdot)$; when Assumption \ref{asu:2}(ii) holds, we have that $m^*(\cdot)=\bar m(\cdot)=\mu(\cdot)$ and thus $\bar h(\cdot)$ solves (\ref{equ:mom:2}). Also note that
\begin{align*}
&{\rm Var}_1\left(\bar{\omega}(\X_i)\bkappa\subismbbetan\breve g\{\bar m(\X_i)\}\{\rhat\supfk(\Z_i)-\bar r(\Z_i)\}\Big|\rhat\supfk(\cdot)\right)\\
=&O(\rE_1[\bar\omega^2(\X_i)+\breve g^2\{\bar m(\X_i)\}])\cdot o_p(1)=o_p(1);\\
&{\rm Var}_0\left(\bkappa\subismbbetan\breve{g}\{\bar m(\X_i)\}\{\rhat\supfk(\Z_i)-\bar r(\Z_i)\}\Big|\rhat\supfk(\cdot)\right)=O(\rE_1\breve g^2\{\bar m(\X_i)\})\cdot o_p(1)=o_p(1);
\end{align*}
Then similar to $\Delta_{12}$, we come to $\Delta_{22}=o_p(1)$. Thus, the term $\Xi_2$ is asymptotically equivalent with $\sqrt{n}\bzeta_{\gamma}\trans(\widehat\bgamma-\bar\bgamma)$, which weakly converges to normal distribution with mean $0$ and variance of order $1$. 

Finally, we consider $\Delta_3$ in (\ref{equ:app:2}). By Assumption \ref{asu:1}, the boundness of $|\c\trans\widehat\bJ_{\breve\bbeta}^{-1}\A_i|$ and our derived bounds for $n^{-1}\sum_{k=1}^K\sum_{i\in\Isc_k}\{\omegahat\supfk(\X_i)-\bar\omega(\X_i)\}^2$ and $n^{-1}\sum_{k=1}^K\sum_{i\in\Isc_k}\{\widehat m\supfk(\X_i)-\bar m(\X_i)\}^2$,
\begin{align*}
|\Delta_3|=&O\left(\nnhalf\sum_{k=1}^K\sum_{i\in\Isc_k}|\omegahat\supfk(\X_i)-\bar\omega(\X_i)||\mhat\supfk(\X_i)-\bar m(\X_i)|\right)\\
\leq&\sqrt{n}O\left(\left[n^{-1}\sum_{k=1}^K\sum_{i\in\Isc_k}\{\omegahat\supfk(\X_i)-\bar\omega(\X_i)\}^2\right]^{\frac{1}{2}}\left[n^{-1}\sum_{k=1}^K\sum_{i\in\Isc_k}\{\widehat m\supfk(\X_i)-\bar m(\X_i)\}^2\right]^{\frac{1}{2}}\right)=o_p(1).
\end{align*}
Combining this with the asymptotic properties derived for $V$, $\Xi_1$ and $\Xi_2$ and the expansion (\ref{equ:app:2}), we finish the proof for the asymptotic expansion and distribution of $\sqrt{n}(\c\trans\bbetahat\subatrel-\c\trans\bbeta_0)$.
\end{proof}

\newpage

\section{Additional assumptions and justification of Proposition \ref{prop:1}}\label{sec:app:kern}
In this section, we present the additional assumptions and justification for Proposition \ref{prop:1} that establishes the convergence rates and asymptotic behaviour of our mainly studied nuisance estimators defined in Section \ref{sec:method:spec}. Our results are largely based on existing literature of local regression and sieve like \cite{fan1995local}, \cite{shen1997methods}, \cite{carroll1998local} and \cite{chen2007large}.

Denote by $G(x)=\int^x_{-\infty} g(t)dt$. Let $\Lambda_{\alpha^*}$, $\Lambda_{\gamma^*}$, $\Lambda_{h^*}$, $\Lambda_{r^*}$, $\Lambda_{\bar h}$ and $\Lambda_{\bar r}$ represent the parameter space of $\alpha^*$, $\gamma^*$, $h^*$, $r^*$, $\bar h$ and $\bar r$ respectively. 
Let $\Zsc$ be the domain of $\Z\in\mathbb{R}^{p_{\z}}$ and $\Csc^k(\Zsc)$ represent all the $k$-times differentiable continuous functions on $\Zsc$. The H\"{o}lder (or $\nu$-smooth) class $\Sigma(\nu,L)$ is defined as the set of functions $f\in\Csc^{[\nu]}(\Zsc)$ with its $[\nu]$-times derivative satisfying
\[
\sup_{\z_1,\z_2\in\Zsc}\frac{\|f^{([\nu])}(\z_1)-f^{([\nu])}(\z_2)\|_2}{\|\z_1-\z_2\|_2}\leq L.
\]

%Since the dimension of $\b(\z)$ grows with $n$, we use $\b_n(\z)$ to represent the basis function $\b(\z)$ in this section. 
%Let $\Lambda_{n,h}=\{\bfeta\trans\b_n(\z)\}$ and $\Lambda_{n,r}=\{\bxi\trans\b_n(\z)\}$ represent the parameter spaces of the (preliminary) sieve estimators $\widetilde{h}\supfk(\z)$ and $\widetilde{\omega}\supfk(\z)$ for the nonparametric components $h^*(\z)$ and $r^*(\z)$ respectively. %Let $\Lambda_{\alpha}$ and $\Lambda_{\gamma}$ represent the parameter spaces of the parametric components. 

\begin{assumption}
\label{asu:a1}
(i) $\bphi$, $\bpsi$ and $\Z$ have compact domain and continuous differentiable probability density functions (as given for discrete variables).

~\\
(ii) There exists $C_1>0$ that for all $\z\in\Zsc$,
\[
\|\balpha^*\|_{\infty},\|\bgamma^*\|_{\infty},|h^*(\z)|,|r^*(\z)|,|\bar h(\z)|,|\bar r(\z)|\leq C_1.
\] 
(iii) There exists $C_2>0$ such that
\begin{align*}
C_2^{-1}\leq&\frac{\frac{\partial}{\partial\tau} \rE_1 \exp\{\bpsi\trans[\balpha_1+\tau(\balpha_2-\balpha_1)]+h_1(\Z)+\tau[h_2(\Z)-h_1(\Z)]\}}{\|\balpha_1-\balpha_2\|_2^2+\rE_1[h_1(\Z)-h_2(\Z)]^2}\leq C_2;\\
C_2^{-1}\leq&\frac{\frac{\partial}{\partial\tau} \rE_1 G\{\bphi\trans[\bgamma_1+\tau(\bgamma_2-\bgamma_1)]+r_1(\Z)+\tau[r_2(\Z)-r_1(\Z)]\}}{\|\bgamma_1-\bgamma_2\|_2^2+\rE_1[r_1(\Z)-r_2(\Z)]^2}\leq C_2,
\end{align*}
for any $\tau\in[0,1]$, $\balpha_1,\balpha_2\in\Lambda_{\alpha^*}$, $h_1,h_2\in\Lambda_{h^*}$, $\bgamma_1,\bgamma_2\in\Lambda_{\gamma^*}$, and $r_1,r_2\in\Lambda_{r^*}$.

~\\
(iv) It holds that $\bkappa_{\bbeta_0}\geq 0$ with probability $1$. There exists $C_3>0$ that for all $\z\in\Zsc$,
\begin{align*}
&C_3^{-1}\leq\left|h^{-p_{\z}}\rE_1K_h(\Z-\z){\omega}^*(\X)\bkappa_{\bbeta_0}\dot g\left\{\bphi\trans\bar\bgamma+\bar r(\z)\right\}\right|\leq C_3;\\ 
&C_3^{-1}\leq\left|h^{-p_{\z}}\rE_1K_h(\Z-\z)\exp(\bpsi\trans\bar{\balpha})\bkappa_{\bbeta_0}\breve{g}\{m^*(\X)\}\exp\{\bar h(\z)\}\right|\leq C_3.
\end{align*}
\end{assumption}

\begin{assumption}
There exists $\nu,L>0$ such that all population-level nonparametric components $h^*(\z),r^*(\z),\bar h(\z)$ and $\bar r(\z)$ belong to the H\"{o}lder class $\Sigma(\nu,L)$ with the degree of smoothness $\nu$ satisfying $\nu>p_{\z}$. 
\label{asu:a2}
\end{assumption}

\begin{assumption}[Specification of the sieve and kernel functions]
(i) The basis function $\b(\Z)$ is taken as the tensor product of $\b_{j}(Z_j)$ for $j=1,2,\ldots,p_{\z}$, where each $\b_{j}(Z_j)$ is the Hermite polynomial basis of the univariate $Z_j$ with its order $s\asymp n^{1/(p_z+\nu)}$. (ii) The kernel function $K$ is symmetric, bounded, and of order $[\nu]$ and the bandwidth $h\asymp n^{-1/(p_z+2\nu)}$. The tuning parameters $\lambda_1,\lambda_2=o(n^{-1/2})$.

\label{asu:a3}
\end{assumption}

\begin{remark}
Similar to Assumption \ref{asu:1} in the main paper, Assumptions \ref{asu:a1}(i) and \ref{asu:a1}(ii) are used to regular the distribution of $\X$ and the parameter spaces. Assumption \ref{asu:a1}(iii) is in a similar spirit of Condition 4.5 in \cite{chen2007large}, used to control the asymptotic variance of $\sqrt{n}(\widetilde\balpha\supfk-\balpha^*)$ and $\sqrt{n}(\widetilde\balpha\supfk-\balpha^*)$. Assumption \ref{asu:a1}(iv) requires the weighting term $\bkappa_{\bbeta_0}$ to be positive-definite to ensure the regularity of the calibration equations. As we remark in Remark \ref{rem:sign:split}, this assumption can be granted by splitting the samples by the sign of $\bkappa_{\widetilde\bbeta}$ when it is not always positive or always negative. Assumption \ref{asu:a2} imposes the common smoothness conditions on the nuisance nonparametric components that are also used in semiparametric inference existing literature like \cite{rothe2015semiparametric} and \cite{chakrabortty2018efficient}. In Assumption \ref{asu:a3}, we choose the order of sieve of the preliminary nuisance estimators to be under-smoothed optimal since $\sqrt{n}$-consistency of the parametric part in these models are required. While the bandwidth $h$ used in the calibrated estimating equation (\ref{equ:kern:mom}) can be rate-optimal since we do not need to estimate the parametric components in this step.

\end{remark}

\begin{proof}[Proof of Proposition \ref{prop:1}]
Since we simply pick $\widehat\balpha\supfk=\widetilde\balpha\supfk$ and $\widehat\bgamma\supfk=\widetilde\bgamma\supfk$ in Section \ref{sec:method:spec}, Assumptions \ref{asu:1} and \ref{asu:a1}--\ref{asu:a3} are sufficient for Assumption \ref{asu:3}(i) by Lemma \ref{lemma:A3}(b) presented and justified in this section. And Assumption \ref{asu:3}(ii) is directly given by Lemma \ref{lemma:A4} that is proved based on Lemmas \ref{lemma:A1}--\ref{lemma:A3}.

\end{proof}

Lemma \ref{lemma:A1} establishes the desirable convergence properties of the preliminary nuisance estimators based on the existing analysis of sieve M-estimation \citep{shen1997methods,chen2007large}.

\begin{lemma}[\citep{shen1997methods,chen2007large}]
Under Assumptions \ref{asu:1} and \ref{asu:a1}--\ref{asu:a3}, the preliminary nuisance estimators solved from equations (\ref{equ:pre:nui:1}) and (\ref{equ:pre:nui:2}) satisfy that:
\\
(a) For $j\in\{0,1\}$,
\begin{align*}
&\rE_1\{\widetilde r\supfk(\Z)-r^*(\Z)\}^2+\rE_j\{\widetilde h\supfk(\Z)-h^*(\Z)\}^2=o_p(n^{-1/2});\\
&\sup_{\z\in\mathcal{Z}}|\widetilde r\supfk(\z)-r^*(\z)|+|\widetilde  h\supfk(\z)-h^*(\z)|=o_p(1);
\end{align*}
(b) $\sqrt{n}(\widetilde\balpha\supfk-\balpha^*)$ and $\sqrt{n}(\widetilde\balpha\supfk-\balpha^*)$ weakly converge to gaussian distributuon with mean zero and finite variance.
\label{lemma:A1}
\end{lemma}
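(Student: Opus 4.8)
The plan is to recognize both preliminary estimating equations as the first-order (score) conditions of convex sieve $M$-estimation problems, and then to invoke the general large-sample theory for sieve extremum estimators in \cite{shen1997methods} and \cite{chen2007large}, verifying that Assumptions \ref{asu:1} and \ref{asu:a1}--\ref{asu:a3} imply the conditions of those results. Concretely, with $\bPsi=(\bpsi\trans,\b(\Z)\trans)\trans$ and $\bPhi=(\bphi\trans,\b(\Z)\trans)\trans$, equation (\ref{equ:pre:nui:1}) is the gradient in $\btheta_w$ of the ridge-penalized exponential-tilting criterion $\frac{K}{n(K-1)}\sum_{i\in\Isc\subminusk}\exp(\btheta_w\trans\bPsi_i)-\frac{1}{N}\sum_{i=n+1}^{n+N}\btheta_w\trans\bPsi_i+\frac{\lambda_1}{2}\|\btheta_{w,\text{-}1}\|_2^2$, and equation (\ref{equ:pre:nui:2}) is the gradient of the GLM quasi-log-likelihood $\frac{K}{n(K-1)}\sum_{i\in\Isc\subminusk}\{G(\btheta_m\trans\bPhi_i)-Y_i\btheta_m\trans\bPhi_i\}+\frac{\lambda_2}{2}\|\btheta_{m,\text{-}1}\|_2^2$, where $G(x)=\int^x g$. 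Both criteria are strictly convex and smooth, so their minimizers are the unique $\widetilde\btheta_w\supfk$ and $\widetilde\btheta_m\supfk$; the population versions identify the pseudo-true limits $(\balpha^*,h^*)$ and $(\bgamma^*,r^*)$ through $\rE_1[\bPsi\exp(\btheta_w\trans\bPsi)]=\rE_0[\bPsi]$ and $\rE_1[\bPhi\{g(\btheta_m\trans\bPhi)-Y\}]=\bzero$, which reduce to $\bbw(\X)$ and $\mu(\X)$ under correct specification.

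For part (a) I would apply the sieve consistency-plus-rate theorem. Assumptions \ref{asu:a1}(i)--(ii) supply compact parameter spaces and bounded, well-behaved densities; the two-sided curvature bound in Assumption \ref{asu:a1}(iii) is exactly the local quadratic condition (in the spirit of Condition 4.5 of \cite{chen2007large}) guaranteeing that the criterion excess is equivalent to $\|\balpha-\balpha^*\|_2^2+\rE_1\{h(\Z)-h^*(\Z)\}^2$ (and likewise for $(\bgamma,r)$), so that convergence in the criterion metric transfers to the natural product metric. Using the known entropy bound for tensor-product Hermite sieves of per-coordinate order $s$ (total dimension $\asymp s^{p_{\z}}$, bounded on the compact domain of $\Z$) together with the $\Sigma(\nu,L)$ approximation error $O(s^{-\nu})$ from Assumption \ref{asu:a2}, the stochastic-error/approximation trade-off gives a mean-square rate of order $s^{p_{\z}}/n+s^{-2\nu}$. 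Substituting the under-smoothing choice $s\asymp n^{1/(p_{\z}+\nu)}$ from Assumption \ref{asu:a3} yields rate $n^{-\nu/(p_{\z}+\nu)}$, which is $o(n^{-1/2})$ precisely because $\nu>p_{\z}$; the uniform bound $\sup_{\z}|\cdot|=o_p(1)$ then follows from the uniform boundedness of the targets and the vanishing uniform approximation and stochastic errors.

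For part (b), the $\sqrt n$-consistency and asymptotic normality of $\widetilde\balpha\supfk$ and $\widetilde\bgamma\supfk$ follow from the asymptotic normality theorem for smooth (here linear) functionals of sieve $M$-estimators (\cite{shen1997methods}, Theorem 1; \cite{chen2007large}). The rate from part (a) gives an $L_2$ nuisance rate $o_p(n^{-1/4})$, which controls the second-order remainder; Assumption \ref{asu:a1}(iii) makes the efficient information for the parametric block---after projecting out the least-favorable nonparametric direction---nonsingular; and the under-smoothing choice renders the sieve-approximation bias of the functional $o(n^{-1/2})$. I would separately check that the ridge term is negligible: since the Hermite coefficients of a bounded $\Sigma(\nu,L)$ function are summable, $\|\bfeta^*\|_2$ and $\|\bxi^*\|_2$ are bounded, so $\lambda_1,\lambda_2=o(n^{-1/2})$ make the penalty contribute $o_p(n^{-1/2})$ to the estimating equations.

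The main obstacle will be the asymptotic-normality step for the parametric block rather than the rate in part (a): it requires correctly computing the least-favorable direction and verifying that the associated efficient information is invertible from Assumption \ref{asu:a1}(iii), and handling the two-sample structure of (\ref{equ:pre:nui:1}) (a source exponential-tilted moment matched to a target moment, with $\rho=n/N=O(1)$) when assembling the influence function and its variance. The cross-fitting over $\Isc\subminusk$ and the ridge penalty must be shown not to perturb this first-order expansion, after which the CLT and Slutsky's theorem deliver the stated Gaussian limits with finite variance.
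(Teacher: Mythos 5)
Your proposal is correct and follows essentially the same route as the paper's proof: both cast (\ref{equ:pre:nui:1}) and (\ref{equ:pre:nui:2}) as sieve $M$-estimation problems and invoke the rate theory of \cite{chen2007large} (Theorem 3.5 with variance $\asymp s^{p_{\z}}/n$ and approximation bias $\asymp s^{-2\nu}$, giving $o_p(n^{-1/2})$ under $\nu>p_{\z}$ and $s\asymp n^{1/(p_{\z}+\nu)}$) for part (a), and the asymptotic normality theory for functionals of sieve $M$-estimators (\cite{shen1997methods}; Theorem 4.3 of \cite{chen2007large}) with Assumption \ref{asu:a1}(iii) as the key curvature/information condition for part (b). Your explicit treatment of the ridge penalty's negligibility and the two-sample structure of the tilting equation fills in details the paper leaves implicit, but does not change the argument.
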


\begin{proof}
We based on Theorem 3.5 of \cite{chen2007large} to show (a) of Lemma \ref{lemma:A1}. First, note that for both preliminary nuisance models, Conditions 3.9, 3.10, 3.11 and 3.13 of \cite{chen2007large} are implied by Assumptions \ref{asu:1}, \ref{asu:a1}(i) and \ref{asu:a1}(ii). Their Condition 3.12 is implied by Assumption \ref{asu:a1}(iii). Then by their Theorem 3.5, it holds that
\begin{align*}
&\|\widetilde\bgamma\supfk-\bgamma^*\|_2^2+\rE_1\{\widetilde r\supfk(\Z)-r^*(\Z)\}^2=O_p\left(\frac{k_n}{n}+\rho_{2n}^2\right);\\
&\|\widetilde\balpha\supfk-\balpha^*\|_2^2+\rE_1\{\widetilde h\supfk(\Z)-h^*(\Z)\}^2=O_p\left(\frac{k_n}{n}+\rho_{2n}^2\right),
\end{align*}
where $k_n$ and $\rho_{2n}^2$ respectively characterize the variance and approximation bias of sieve to be specified as follows. Inspired by Proposition 3.6 of \cite{chen2007large}, under our Assumptions \ref{asu:a2} and \ref{asu:a3}(i), the specific rate of $k_n$ and $\rho_{2n}^2$ is given by
\[
k_n\asymp s^{p_z},\quad \rho_{2n}\asymp s^{-\nu},\mbox{ where $s$ is the order of each $\b_j(Z_j)$}.
\]
Then by Assumption \ref{asu:a2} that $\nu>p_{\z}$ and Assumption \ref{asu:a3}(i) that $s\asymp n^{1/(p_z+\nu)}$, we have 
\begin{align*}
&\|\widetilde\bgamma\supfk-\bgamma^*\|_2^2+\rE_1\{\widetilde r\supfk(\Z)-r^*(\Z)\}^2=o_p(n^{-1/2});\\
&\|\widetilde\balpha\supfk-\balpha^*\|_2^2+\rE_1\{\widetilde h\supfk(\Z)-h^*(\Z)\}^2=o_p(n^{-1/2}).
\end{align*}
Similarly, it is not hard to justify that our Assumptions \ref{asu:1} and \ref{asu:a1}--\ref{asu:a3} imply Conditions 3.1, 3.2, 3.4 and 3.5M of \cite{chen2007large}, which are sufficient for the consistency of sieve M-estimation according to their Remark 3.3, i.e.,
\[
\sup_{\z\in\mathcal{Z}}|\widetilde r\supfk(\z)-r^*(\z)|+|\widetilde  h\supfk(\z)-h^*(\z)|=o_p(1).
\]
So we finish proving (a) of Lemma \ref{lemma:A1}. 

Next, we prove (b) based on (a) and using Theorem 4.3 of \cite{chen2007large} (or early works like \cite{shen1997methods}). Their Conditions 4.1(iii) and 4.4 are as given in our standard non-linear M-estimation case. Since ``$f(\theta)$" in \cite{chen2007large} are simply the parametric parts $\bgamma$ or $\balpha$ in our case, their Conditions 4.1(i) and 4.2(ii) are trivially satisfied. Their Condition 4.5 is implied by our Assumption \ref{asu:a1}(iii) that actually indicates $\sqrt{n}(\widetilde\balpha\supfk-\balpha^*)$ and $\sqrt{n}(\widetilde\balpha\supfk-\balpha^*)$ will have bounded asymptotic variance. And their Conditions 4.2' and 4.3' are implied by Assumption \ref{asu:a1}(i) and the continuity of the link function $g$. Therefore, we can combine our Lemma \ref{lemma:A1}(a) and Theorem 4.3 of \cite{chen2007large} to finishe the proof of Lemma \ref{lemma:A1}(b).

\end{proof}

Using Lemma \ref{lemma:A1} and that at least one nuisance model is correctly specified (i.e., Assumption \ref{asu:2}), Lemma \ref{lemma:A2} establishes the $o_p(n^{-1/4})$ convergence of the preliminary estimator $\widetilde\bbeta\supfk$ to the true $\bbeta_0$.
\begin{lemma}
\label{lemma:A2}
Under Assumptions \ref{asu:1}, \ref{asu:2} and \ref{asu:a1}--\ref{asu:a3},
\[
\rE_j\{\widetilde m\supfk(\X)-m^*(\X)\}^2+\rE_1\{\widetilde \omega\supfk(\X)-\omega^*(\X)\}^2+\|\widetilde\bbeta\supfk-\bbeta_0\|_2^2=o_p(n^{-1/2}).
\]
\end{lemma}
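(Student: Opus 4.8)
The plan is to bound the three nonnegative summands in turn. The two nuisance terms will follow quickly from the preliminary-estimator rates of Lemma~\ref{lemma:A1}, and I will then feed those rates into a $Z$-estimation argument for $\widetilde\bbeta\supfk$ that parallels the consistency step of the proof of Theorem~\ref{thm:1}.

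For $\rE_1\{\widetilde\omega\supfk(\X)-\omega^*(\X)\}^2$, I would write the difference as $\exp\{\bpsi\trans\widetilde\balpha\supfk+\widetilde h\supfk(\Z)\}-\exp\{\bpsi\trans\balpha^*+h^*(\Z)\}$ and apply the mean value theorem. Since $\bpsi,\Z$ have compact support (Assumption~\ref{asu:a1}(i)), the parameters are bounded (Assumption~\ref{asu:a1}(ii)), and $\sup_{\z}|\widetilde h\supfk(\z)-h^*(\z)|=o_p(1)$ by Lemma~\ref{lemma:A1}(a), both exponents lie in a common compact set with probability tending to one, so the derivative factor is bounded by a constant $C$. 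Hence $\{\widetilde\omega\supfk(\X)-\omega^*(\X)\}^2\le 2C^2[\{\bpsi\trans(\widetilde\balpha\supfk-\balpha^*)\}^2+\{\widetilde h\supfk(\Z)-h^*(\Z)\}^2]$. Taking $\rE_1$, the first term equals $(\widetilde\balpha\supfk-\balpha^*)\trans\rE_1[\bpsi\bpsi\trans](\widetilde\balpha\supfk-\balpha^*)\le\lambda_{\max}(\rE_1\bpsi\bpsi\trans)\|\widetilde\balpha\supfk-\balpha^*\|_2^2=O_p(n^{-1})$ by Assumption~\ref{asu:1} and the parametric rate in Lemma~\ref{lemma:A1}(b), while the second is $o_p(n^{-1/2})$ by Lemma~\ref{lemma:A1}(a). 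The bound for $\rE_j\{\widetilde m\supfk(\X)-m^*(\X)\}^2$ is identical in structure, replacing the bounded exponential derivative by the Lipschitz/bounded $\dot g$ from Assumption~\ref{asu:1} and using $\rE_j\|\bphi\|_2^2<\infty$, $\|\widetilde\bgamma\supfk-\bgamma^*\|_2^2=O_p(n^{-1})$, and the $L_2$ rate of $\widetilde r\supfk$; the only wrinkle is that Lemma~\ref{lemma:A1}(a) controls $\widetilde r\supfk$ on $\rE_1$, so for $j=0$ I would transfer the source rate to the target using that the $\Z$-marginal density ratio is bounded on the compact domain of Assumption~\ref{asu:a1}(i).

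The substantive step is $\|\widetilde\bbeta\supfk-\bbeta_0\|_2=o_p(n^{-1/4})$. Denoting by $\widetilde\U(\bbeta)$ the left-hand side of the defining equation for $\widetilde\bbeta\supfk$, I would add and subtract $\omega^*$ and $m^*$ exactly as in (\ref{equ:app:1}) to obtain $\widetilde\U(\bbeta)=\V^*(\bbeta)+T_1-T_2-T_3$, where $\V^*(\bbeta)$ is the same estimating function with the limits $\omega^*,m^*$ in place of $\widetilde\omega\supfk,\widetilde m\supfk$, the terms $T_1,T_2$ are first-order in $\widetilde\omega\supfk-\omega^*$ and $\widetilde m\supfk-m^*$, and $T_3$ is the product term. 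The same algebra as in the consistency part of Theorem~\ref{thm:1} shows, under Assumption~\ref{asu:2}, that $\bbeta_0$ solves $\rE\V^*(\bbeta)=\bzero$: if $\omega^*=\bbw$ the weighted source residual reduces to the target moment $\rE_0[\A\{Y-g(\A\trans\bbeta)\}]$, and if $m^*=\mu$ the source term has zero conditional mean while the target term reproduces the same moment. By Cauchy--Schwarz and the two rates just established (in empirical $L_2$ form), $T_3=o_p(n^{-1/2})$ and $|T_1|,|T_2|=O_p(1)\cdot o_p(n^{-1/4})=o_p(n^{-1/4})$, so $\widetilde\U(\bbeta)=\V^*(\bbeta)+o_p(n^{-1/4})$ uniformly near $\bbeta_0$. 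Since $\V^*$ converges to $\rE\V^*$ with a nondegenerate Jacobian $-\bJ_{\bbeta_0}$ at $\bbeta_0$ (the eigenvalue bound in Assumption~\ref{asu:1}), Theorem~8.2 of \cite{pollard1990empirical}---invoked for $\bbetahat\subatrel$ in Theorem~\ref{thm:1}---gives $\|\widetilde\bbeta\supfk-\bbeta_0\|_2=o_p(n^{-1/4})$, i.e. $\|\widetilde\bbeta\supfk-\bbeta_0\|_2^2=o_p(n^{-1/2})$.

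I expect the main obstacle to be the first-order terms $T_1,T_2$. Unlike the final estimator $\bbetahat\subatrel$, the preliminary $\widetilde\bbeta\supfk$ is not calibrated and its equation evaluates the nuisances on the very fold $\Isc\subminusk$ on which they were trained, so neither the moment orthogonality (\ref{equ:mom:1})--(\ref{equ:mom:2}) nor a cross-fitting independence argument is available to annihilate the first-order nonparametric bias. What rescues the argument is that only the crude rate $o_p(n^{-1/4})$ is needed here, so a Cauchy--Schwarz bound by the nuisance $L_2$ errors suffices; the care lies in passing from the population $L_2$ rates of Lemma~\ref{lemma:A1} to their in-sample counterparts (using the sieve prediction-error bounds underlying that lemma) and in the source-to-target transfer of the $\widetilde r\supfk$ rate noted above.
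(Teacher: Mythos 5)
Your proposal is correct and follows essentially the same route as the paper: the paper's own proof likewise reads the two nuisance rates directly off Lemma \ref{lemma:A1} and then obtains $\|\widetilde\bbeta\supfk-\bbeta_0\|_2^2=o_p(n^{-1/2})$ by rerunning the consistency step of Theorem \ref{thm:1} (the decomposition in (\ref{equ:app:1}), Cauchy--Schwarz bounds on the first-order and product terms, identification of $\bbeta_0$ under Assumption \ref{asu:2}, and Pollard's Theorem 8.2). The details you add---the in-sample versus out-of-fold distinction for the empirical $L_2$ norms of the preliminary nuisance errors, and the source-to-target transfer of the $\widetilde r\supfk$ rate via a bounded $\Z$-marginal density ratio---are exactly the points the paper's two-line proof leaves implicit, and your treatment of them is sound.
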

\begin{proof}
It immediately follows from Lemma \ref{lemma:A1} that
\[
\rE_j\{\widetilde m\supfk(\X)-m^*(\X)\}^2+\rE_1\{\widetilde \omega\supfk(\X)-\omega^*(\X)\}^2=o_p(n^{-1/2}).
\]
Then $\|\widetilde\bbeta\supfk-\bbeta_0\|_2^2=o_p(n^{-1/2})$ can be proved by following the same proof procedures in Theorem \ref{thm:1} for analyzing the terms defined in (\ref{equ:app:1}).
\end{proof}

For each $\z\in\mathcal{Z}$, let the estimators $\breve r\supfk(\z)$ and $\breve h\supfk(\z)$ respectively solve:
\begin{equation}
\begin{split}
&\frac{K}{n(K-1)h^{p_{\z}}}\sum_{i\in \Isc\subminusk} K_h(\Z_i-\z)\omega^*(\X_i)\bkappa\subismbbetan\left[Y_i-g\left\{\bphi_i\trans\bar{\bgamma}+r(\z)\right\}\right]=0;\\
&\frac{K}{n(K-1)h^{p_{\z}}}\sum_{i\in\Isc\subminusk}K_h(\Z_i-\z)\exp(\bpsi_i\trans\bar{\balpha})\bkappa\subismbbetan\breve{g}\{m^*(\X_i)\}\exp\{h(\z)\}\\
=&\frac{1}{Nh^{p_{\z}}}\sum_{i=n+1}^{n+N}K_h(\Z_i-\z)\bkappa\subismbbetan\breve{g}\{ m^*(\X_i)\},
\end{split}
\label{equ:kern:mom:ideal}    
\end{equation}
i.e. the ``oracle" version of the estimating equations in (\ref{equ:kern:mom}), obtained by replacing all the preliminary estimators plugged in (\ref{equ:kern:mom}) with their limits (true values). Also recall that $\bar h(\z)$ and $\bar r(\z)$ are defined as the solutions to equations (\ref{equ:mom:1}) and (\ref{equ:mom:2}).

We introduce Lemma \ref{lemma:A3} to give the consistency $o_p(n^{-1/4})$ convergence of $\breve h\supfk(\z)$ and $\breve r\supfk(\z)$ to $\bar h(\z)$ and $\bar r(\z)$, as a standard result of the higher--order kernel (or local polynomial) estimating equation \citep{fan1995local}.

\begin{lemma}
\label{lemma:A3}
Under Assumptions \ref{asu:1}, \ref{asu:2} and \ref{asu:a1}--\ref{asu:a3}, 
\begin{align*}
&\rE_1\{\breve r\supfk(\Z)-\bar r(\Z)\}^2+\rE_1\{\breve h\supfk(\Z)-\bar h(\Z)\}^2=o_p(n^{-1/2});\\
&\sup_{\z\in\mathcal{Z}}|\breve r\supfk(\z)-\bar r(\z)|+|\breve h\supfk(\z)-\bar h(\z)|=o_p(1).
\end{align*}
\end{lemma}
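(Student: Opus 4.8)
The plan is to treat each of the two equations in (\ref{equ:kern:mom:ideal}) as a localized kernel estimating equation for the scalar unknown $r(\z)$ or $h(\z)$ at each fixed $\z\in\Zsc$, and to show that its solution converges to the population target $\bar r(\z)$ or $\bar h(\z)$ at the required rate. The first step is identification: I would verify that $\bar r$ and $\bar h$, as defined through the moment conditions (\ref{equ:mom:1}) and (\ref{equ:mom:2}), are exactly the zeros of the deterministic population versions of the oracle equations. Under Assumption \ref{asu:2}(i) this is immediate because $\omega^*=\bbw$, so the population $r$-equation coincides with the conditional form of (\ref{equ:mom:1}), and a change-of-measure argument using $\bbw=p_0/p_1$ turns the source-population left side of the $h$-equation into the target-population right side, matching (\ref{equ:mom:2}). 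Under Assumption \ref{asu:2}(ii) one has $m^*=\mu$ and $\rE_1(Y\mid\X)=g\{\bphi\trans\bgammabar+\bar r(\Z)\}$, so conditioning on $\X$ makes the residual in the $r$-equation vanish and renders the weight $\omega^*$ irrelevant, while the $h$-equation reduces directly to (\ref{equ:mom:2}); either way $\bar r,\bar h$ are the correct population solutions in both branches.

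The second step is well-posedness. Assumption \ref{asu:a1}(iv) supplies $\bkappa\subsmbbetan\geq 0$ together with the two-sided bounds $C_3^{-1}\le|\cdots|\le C_3$ on the relevant derivatives. Since $g$ is strictly increasing and $\breve g>0$, the $r$-equation is strictly decreasing in $r(\z)$, and the $h$-equation has an explicit ratio form in which $\exp\{h(\z)\}$ equals the target-population kernel average divided by the source-population kernel average, a quotient of two positive kernel-weighted averages that Assumption \ref{asu:a1}(iv) keeps bounded away from $0$ and $\infty$. This monotonicity/explicitness guarantees a unique, non-degenerate solution, so that convergence of the estimating functions transfers to convergence of the solutions, via a monotone-inversion argument for $r$ and via Lipschitzness of $\log$ on a compact interval for $h$.

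The third step is the kernel rate analysis. I would decompose each kernel-weighted average into a conditional-expectation (bias) part and a mean-zero stochastic part. By Assumption \ref{asu:a2} the targets lie in the Hölder class $\Sigma(\nu,L)$, and by Assumption \ref{asu:a3}(ii) the kernel is of order $[\nu]$, so the smoothing bias is $O(h^{\nu})$, while the stochastic part is $O_p\{(nh^{p_{\z}})^{-1/2}\}$ pointwise, with the usual logarithmic inflation in the uniform bound, exactly as in \cite{fan1995local} and \cite{carroll1998local}. Plugging the rate-optimal bandwidth $h\asymp n^{-1/(p_{\z}+2\nu)}$ from Assumption \ref{asu:a3}(ii) makes both pieces of order $n^{-\nu/(p_{\z}+2\nu)}$, hence a pointwise error $O_p\{n^{-\nu/(p_{\z}+2\nu)}\}=o_p(n^{-1/4})$ since $\nu>p_{\z}$. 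Squaring gives pointwise mean squared error of order $n^{-2\nu/(p_{\z}+2\nu)}=o(n^{-1/2})$, and integrating over the compact domain $\Zsc$ yields the first display of the lemma; the uniform statement $o_p(1)$ then follows from the uniform kernel bound combined with the well-posedness of the second step.

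The main obstacle I anticipate is the $h$-equation, which couples the source and target samples through two distinct marginal densities of $\Z$ and uses $m^*$, which need not equal $\mu$. Making the identification in the first step airtight requires the change-of-measure bookkeeping between $p_1$ and $p_0$ and a careful check that the ratio form remains non-degenerate under Assumption \ref{asu:a1}(iv); once that is in place, the remaining kernel analysis is the standard higher-order-kernel argument and the two cases of Assumption \ref{asu:2} can be verified in turn.
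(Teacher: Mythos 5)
Your proposal follows essentially the same route as the paper's own proof: you first verify that $\bar r$ and $\bar h$ are the population solutions of the oracle equations (\ref{equ:kern:mom:ideal}) under either branch of Assumption \ref{asu:2} (including the change-of-measure step for the $h$-equation when the density ratio is correct, and the conditioning-on-$\X$ step when the imputation model is correct), and then apply the standard higher-order-kernel bias--variance analysis in the spirit of \cite{fan1995local}, with $h\asymp n^{-1/(p_{\z}+2\nu)}$ and $\nu>p_{\z}$, to obtain the $o_p(n^{-1/2})$ mean squared error and uniform consistency. Your explicit well-posedness step (monotonicity of the $r$-equation and the ratio form of the $h$-equation, both regularized by Assumption \ref{asu:a1}(iv)) is a correct elaboration of what the paper leaves implicit, not a different method.
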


\begin{proof}
By Assumption \ref{asu:2}, at least one nuisance model is correctly specified. When the importance weighting model is correct, $w^*(\x)=\bar w(\x)=\bbw(\x)$. So the first equation of (\ref{equ:kern:mom:ideal}) is (asymptotically) valid for $\bar r(\Z)$ that solves (\ref{equ:mom:1}). Also, since $\bbw(\x)=\exp(\psi\trans\balpha_0+h_0(\z))$ and $\bar\balpha=\balpha_0$ when the importance weighting model is correct, the second equation of (\ref{equ:kern:mom:ideal}) is valid for $\bar h(\z)=h_0(\z)$ that solves (\ref{equ:mom:2}). So both equations in (\ref{equ:kern:mom:ideal}) are valid. Similarly, this also holds when the imputation model is correct. Then by Assumptions \ref{asu:1}, and \ref{asu:a1}--\ref{asu:a3} and following Appendix A of \cite{fan1995local}, we can derive that $\sup_{\z\in\mathcal{Z}}|\breve r\supfk(\z)-\bar r(\z)|+|\breve h\supfk(\z)-\bar h(\z)|=o_p(1)$ and 
\[
\rE_1\{\breve r\supfk(\Z)-\bar r(\Z)\}^2+\rE_1\{\breve h\supfk(\Z)-\bar h(\Z)\}^2=O_p\left(\frac{1}{nh^{p_{\z}}}+h^{2\nu}\right)=o_p(n^{-1/2}),
\]
as the standard consistency and convergence results of kernel smoothing. 

Note that \citep{fan1995local} studied the local polynomial regression approach that is not exactly the same as our used $[\nu]$-th order kernel; see Assumption \ref{asu:a3}(ii). While the derivation of these two approaches are basically the same due to the orthogonality between a $[\nu]$-th order kernel function and the polynomial functions of the order up to $[\nu]$.

\end{proof}

Finally, we come to Lemma \ref{lemma:A4} for the asymptotic properties of $\widehat r\supfk(\Z)$ and $\widehat h\supfk(\Z)$.
\begin{lemma}
\label{lemma:A4}
Under Assumptions \ref{asu:1}, \ref{asu:2} and \ref{asu:a1}--\ref{asu:a3}, the calibrated nuisance estimators satisfy:
\begin{align*}
&\rE_1\{\widehat r\supfk(\Z)-\bar r(\Z)\}^2+\rE_1\{\widehat h\supfk(\Z)-\bar h(\Z)\}^2=o_p(n^{-1/2});\\
&\sup_{\z\in\mathcal{Z}}|\widehat r\supfk(\z)-\bar r(\z)|+|\widehat h\supfk(\z)-\bar h(\z)|=o_p(1).
\end{align*}
\end{lemma}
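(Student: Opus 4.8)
The plan is to reduce the statement to the oracle result already proved in Lemma \ref{lemma:A3}, by bounding the gap between the feasible calibrated estimators $\widehat r\supfk,\widehat h\supfk$ solving (\ref{equ:kern:mom}) and their oracle counterparts $\breve r\supfk,\breve h\supfk$ solving (\ref{equ:kern:mom:ideal}). Concretely, for every $\z\in\Zsc$ I would write
\[
\widehat r\supfk(\z)-\bar r(\z)=\{\widehat r\supfk(\z)-\breve r\supfk(\z)\}+\{\breve r\supfk(\z)-\bar r(\z)\},
\]
and analogously for $h$. Lemma \ref{lemma:A3} already supplies $\rE_1\{\breve r\supfk(\Z)-\bar r(\Z)\}^2=o_p(n^{-1/2})$ and the uniform $o_p(1)$ bound, so it suffices to prove $\rE_1\{\widehat r\supfk(\Z)-\breve r\supfk(\Z)\}^2=o_p(n^{-1/2})$ together with $\sup_{\z}|\widehat r\supfk(\z)-\breve r\supfk(\z)|=o_p(1)$, and their analogues for $h$.

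For a fixed $\z$, I would view $\widehat r\supfk(\z)$ and $\breve r\supfk(\z)$ as the (unique, by Assumption \ref{asu:a1}(iv), or by the sign-splitting of Remark \ref{rem:sign:split}) roots of the scalar maps obtained from the first lines of (\ref{equ:kern:mom}) and (\ref{equ:kern:mom:ideal}), both renormalized by $h^{-p_{\z}}$ so that their derivatives in $r$ are of exact order $1$; call these $\widehat M_r(\cdot;\z)$ and $\breve M_r(\cdot;\z)$. Assumption \ref{asu:a1}(iv) bounds $|\partial_r\breve M_r|$ away from $0$ by $C_3^{-1}$ near $\bar r(\z)$, hence near the root by Lemma \ref{lemma:A3}, so a one-term Taylor expansion of $\breve M_r$ around $\breve r\supfk(\z)$ gives the quantitative inversion
\[
|\widehat r\supfk(\z)-\breve r\supfk(\z)|\lesssim \big|(\widehat M_r-\breve M_r)(\widehat r\supfk(\z);\z)\big|
\]
uniformly in $\z$, reducing the problem to controlling the plug-in discrepancy $\widehat M_r-\breve M_r$ in sup-$\z$ and $L^2(\z)$ norms.

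I would then expand this discrepancy by adding and subtracting to isolate three error sources: (a) the weight error $\bkappahat\subismbbetahatfk-\bkappa\subismbbetan$, which by the argument behind (\ref{equ:app:a1}) together with $\|\widetilde\bbeta\supfk-\bbeta_0\|_2=o_p(n^{-1/4})$ from Lemma \ref{lemma:A2} is $o_p(n^{-1/4})$ uniformly; (b) the density-ratio error $\widetilde\omega\supfk-\omega^*$, with $L^2$ norm $o_p(n^{-1/4})$ by Lemma \ref{lemma:A2} and sup-norm $o_p(1)$ by Lemma \ref{lemma:A1}; and (c) the parametric error entering through $g\{\bphi\trans\widehat\bgamma\supfk+r\}-g\{\bphi\trans\bar\bgamma+r\}$, which is $O_p(n^{-1/2})$ by the Lipschitz property of $\dot g$ in Assumption \ref{asu:1} and the $\sqrt n$-consistency of $\widehat\bgamma\supfk$. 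Since the $h^{-p_{\z}}$-normalized kernel average acts in expectation as a bounded smoothing operator, each globally small factor survives the localization without amplification, yielding a deterministic contribution that is $o_p(n^{-1/4})$ in sup-$\z$ for (a),(c) and $o_p(1)$ in sup-$\z$ for (b), and $o_p(n^{-1/4})$ in $L^2(\z)$ for all three; the remaining stochastic fluctuation about the conditional mean contributes the usual $O_p\{(nh^{p_{\z}})^{-1/2}\sqrt{\log n}\}+O(h^{\nu})$, which under Assumption \ref{asu:a3} ($h\asymp n^{-1/(p_{\z}+2\nu)}$, $\nu>p_{\z}$) is $o_p(n^{-1/4})$, as in Lemma \ref{lemma:A3}. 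The $h$-equation is treated identically after solving its loglinear form for $\exp\{h(\z)\}$, using the second inequality of Assumption \ref{asu:a1}(iv), the $L^2/$sup control of $\widetilde m\supfk-m^*$ over both $\rE_0$ and $\rE_1$ and of $\widehat\balpha\supfk-\bar\balpha$, and noting that the target-sample average on its right-hand side carries only the $\widetilde m\supfk-m^*$ and weight errors.

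The main obstacle is the interplay between the localization scale $h^{p_{\z}}$ and the plug-in errors: a naive bound would divide the discrepancy by the vanishing factor $h^{p_{\z}}$ and lose the rate. The resolution I would emphasize is that normalizing both maps by $h^{-p_{\z}}$ makes the root-finding well-conditioned (derivative $\Theta(1)$ by Assumption \ref{asu:a1}(iv)), so the plug-in errors enter only as kernel-weighted averages of quantities already $o_p(n^{-1/4})$ globally and are not inflated by smoothing; the only genuinely $h$-dependent quantity is the variance-plus-bias of the kernel average itself, which Assumption \ref{asu:a3} controls. A secondary point, that the preliminary estimators and the calibration sample share the fold $\Isc\subminusk$ (so the cross-fitting independence used in Theorem \ref{thm:1} is unavailable here), is handled by a stochastic-equicontinuity argument over the H\"older class of Assumption \ref{asu:a2}, whose fixed complexity renders the empirical-process remainder negligible; and the possible sign-indefiniteness of $\bkappahat\subismbbetahatfk$ is absorbed by running the argument separately on $\{\bkappahat\subismbbetahatfk\ge 0\}$ and $\{\bkappahat\subismbbetahatfk<0\}$ as in Remark \ref{rem:sign:split} and recombining.
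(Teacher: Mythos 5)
Your proposal is correct and follows essentially the same route as the paper's proof: both compare the feasible calibrated equations (\ref{equ:kern:mom}) to the oracle equations (\ref{equ:kern:mom:ideal}), use Assumption \ref{asu:a1}(iv) to make the root-finding well-conditioned (so the equation discrepancy transfers directly to the root discrepancy), bound the plug-in errors from $\widetilde\omega\supfk-\omega^*$, $\widetilde\bbeta\supfk-\bbeta_0$, $\widehat\bgamma\supfk-\bar\bgamma$ (and their $h$-equation analogues) via Lemmas \ref{lemma:A1}--\ref{lemma:A2}, and conclude by the triangle inequality with the oracle rates of Lemma \ref{lemma:A3}. Your explicit treatment of the shared-fold dependence between the preliminary estimators and the calibration sample via stochastic equicontinuity is a refinement that the paper's proof leaves implicit, but it does not alter the structure of the argument.
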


\begin{proof}
We compare the estimating equations in (\ref{equ:kern:mom}) with those in (\ref{equ:kern:mom:ideal}) to analyze the additional errors incurred by the preliminary estimators in (\ref{equ:kern:mom}). %For $r(\cdot)$, since $\widetilde\omega\supfk(\cdot)$, $\widetilde\bbeta\supfk$ and $\bgammahat\supfk$ are consistent, we have $\sup_{\z\in\mathcal{Z}}|\widehat r\supfk(\Z)-\bar r(\Z)|=o_p(1)$ using the assumption that $\breve r\supfk(\cdot)$ is (uniformly) consistent and consider the asymptotic expansion of $\widehat r\supfk(\cdot)$ on (\ref{equ:kern:mom}):
By Assumption \ref{asu:1} and equation (\ref{equ:app:a1}) derived in the proof of Theorem \ref{thm:1}, we have that for each $\z$,
\begin{align*}
0=&\frac{K}{n(K-1)h^{p_{\z}}}\sum_{i\in \Isc\subminusk} K_h(\Z_i-\z)\widetilde{\omega}\supfk(\X_i)\c\trans\bJhat_{\widetilde{\bbeta}\supfk}^{-1}\A_i\left[Y_i-g\left\{\bphi_i\trans\widehat{\bgamma}\supfk+\widehat r\supfk(\z)\right\}\right]\\
=&\frac{K}{n(K-1)h^{p_{\z}}}\sum_{i\in \Isc\subminusk} K_h(\Z_i-\z){\omega}^*(\X_i)\bkappa\subismbbetan\left[Y_i-g\left\{\bphi_i\bar\bgamma+\widehat r\supfk(\z)\right\}\right]\\
&+\frac{K}{n(K-1)h^{p_{\z}}}\sum_{i\in \Isc\subminusk} K_h(\Z_i-\z){\omega}^*(\X_i)\bkappa\subismbbetan\left[g\left\{\bphi_i\trans\bar\bgamma+\widehat r\supfk(\z)\right\}-g\left\{\bphi_i\trans\widehat{\bgamma}\supfk+\widehat r\supfk(\z)\right\}\right]\\
&+\frac{K}{n(K-1)h^{p_{\z}}}\sum_{i\in \Isc\subminusk} K_h(\Z_i-\z){\omega}^*(\X_i)\c\trans\left[\bJhat_{\widetilde{\bbeta}\supfk}^{-1}-\bJ_{\bbeta_0}^{-1}\right]\A_i\left[Y_i-g\left\{\bphi_i\trans\widehat{\bgamma}\supfk+\widehat r\supfk(\z)\right\}\right]\\
&+\frac{K}{n(K-1)h^{p_{\z}}}\sum_{i\in \Isc\subminusk} K_h(\Z_i-\z)\{\widetilde{\omega}\supfk(\X_i)-{\omega}^*(\X_i)\}\c\trans\bJhat_{\widetilde{\bbeta}\supfk}^{-1}\A_i\left[Y_i-g\left\{\bphi_i\trans\widehat{\bgamma}\supfk+\widehat r\supfk(\z)\right\}\right]\\
=&\frac{K}{n(K-1)h^{p_{\z}}}\sum_{i\in \Isc\subminusk} K_h(\Z_i-\z){\omega}^*(\X_i)\bkappa\subismbbetan\left[Y_i- g\left\{\bphi_i\trans\bar\bgamma+\widehat r\supfk(\z)\right\}\right]\\
&+O_p\left(\left[\rE_1\{\widetilde \omega\supfk(\X)-\omega^*(\X)\}^2\right]^{\frac{1}{2}}+\|\widetilde\bbeta\supfk-\bbeta_0\|_2+\|\widehat\bgamma\supfk-\bar \bgamma\|_2+n^{-1/2}\right)\\
=&\frac{K}{n(K-1)h^{p_{\z}}}\sum_{i\in \Isc\subminusk} K_h(\Z_i-\z){\omega}^*(\X_i)\bkappa\subismbbetan\left[Y_i- g\left\{\bphi_i\trans\bar\bgamma+\widehat r\supfk(\z)\right\}\right]+o_p(n^{-1/4}),
\end{align*}
Comparing this with the estimating equation (\ref{equ:kern:mom:ideal}) for $\breve r\supfk(\cdot)$, we have:
\begin{align*}
\frac{K}{n(K-1)h^{p_{\z}}}\sum_{i\in \Isc\subminusk} K_h(\Z_i-\z){\omega}^*(\X_i)\bkappa\subismbbetan\left[g\left\{\bphi_i\trans\bar\bgamma+\breve r\supfk(\z)\right\}- g\left\{\bphi_i\trans\bar\bgamma+\widehat r\supfk(\z)\right\}\right]=o_p(n^{-1/4}),
\end{align*}
which combined with Assumption \ref{asu:1} that $\dot g(\cdot)$ is Lipsitz, leads to
\begin{align*}
&\frac{K}{n(K-1)h^{p_{\z}}}\sum_{i\in \Isc\subminusk} K_h(\Z_i-\z){\omega}^*(\X_i)\bkappa\subismbbetan\dot g\left\{\bphi_i\trans\bar\bgamma+\bar r(\z)\right\}\left|\breve r\supfk(\z)-\widehat r\supfk(\z)\right|\\
=&o_p(n^{-1/4})+O_p\left([\widehat r\supfk(\z)-\bar r(\z)]^2+[\breve r\supfk(\z)-\bar r(\z)]^2\right).
\end{align*}
Using Assumption \ref{asu:1}(iv) and the weak law of large numbers, we can show that 
\[
\frac{K}{n(K-1)h^{p_{\z}}}\sum_{i\in \Isc\subminusk} K_h(\Z_i-\z){\omega}^*(\X_i)\bkappa\subismbbetan\dot g\left\{\bphi_i\trans\bar\bgamma+\bar r(\z)\right\}\asymp 1.
\]
Then by Lemma \ref{lemma:A3}, we conclude that $|\widehat r\supfk(\z)-\bar r(\z)|=o_p(1)$ uniformly for all $\z\in\mathcal{Z}$, and $\rE_1\{\widehat r\supfk(\Z)-\bar r(\Z)\}^2=o_p(n^{-1/2})$.

For $\widehat h\supfk(\cdot)$, we follow the same strategy to consider the difference between the second equation of (\ref{equ:kern:mom}) and equation (\ref{equ:kern:mom:ideal}), to derive that
\begin{align*}
&\frac{K}{n(K-1)h^{p_{\z}}}\sum_{i\in\Isc\subminusk}K_h(\Z_i-\z)\exp(\bpsi_i\trans\bar{\balpha})\bkappa\subismbbetan\breve{g}\{m^*(\X_i)\}\exp\{\bar h(\z)\}\left|\breve h\supfk(\z)-\widehat h\supfk(\z)\right|\\
=&O_p\left(\left[\rE_1\{\widetilde m\supfk(\X)-m^*(\X)\}^2\right]^{\frac{1}{2}}+\|\widetilde\bbeta\supfk-\bbeta_0\|_2\right)+O_p\left([\widehat h\supfk(\z)-\bar h(\z)]^2+[\breve h\supfk(\z)-\bar h(\z)]^2\right)\\
=&o_p(n^{-1/4})+O_p\left([\widehat h\supfk(\z)-\bar h(\z)]^2+[\breve h\supfk(\z)-\bar h(\z)]^2\right).
\end{align*}
Again combining this with Assumption \ref{asu:1}(iv) and Lemma \ref{lemma:A3}, we can derive that
\[
\sup_{\z\in\mathcal{Z}}|\widehat h\supfk(\z)-\bar h(\z)|=o_p(1);\quad\rE_1\{\widehat h\supfk(\Z)-\bar h(\Z)\}^2=o_p(n^{-1/2}).
\]
Thus we have finished proving Lemma \ref{lemma:A4}.

\end{proof}

\newpage

\section{Details of the extension discussed in Section \ref{sec:disc}}\label{sec:app:disc}

\subsection{Sieve estimator}\label{sec:app:sieve}

We consider $r(\Z)=\bxi\trans\b(\Z)$ and $h(\Z)=\bfeta\trans\b(\Z)$ where $\b(\Z)$ represents some prespecified basis function of $\Z$, e.g. natural spline or Hermite polynomials with diverging dimensionality, and $\bfeta$ and $\bxi$ represent their coefficients to estimate. In analog to (\ref{equ:kern:mom}), we propose to estimate the coefficients $\bxi$ and $\bfeta$ by solving
\begin{align*}
&\frac{K}{n(K-1)}\sum_{i\in \Isc\subminusk}\widetilde{\omega}\supfk(\X_i)\c\trans\bJhat_{\widetilde{\bbeta}\supfk}^{-1}\A_i\b(\Z_i)\left[Y_i-g\left\{\bphi_i\trans\widehat{\bgamma}\supfk+\bxi\trans \b(\Z_i)\right\}\right]=\bf{0};\\
&\frac{K}{n(K-1)}\sum_{i\in\Isc\subminusk}\c\trans\bJhat_{\widetilde{\bbeta}\supfk}^{-1}\A_i\breve{g}\{\widetilde m\supfk(\X_i)\} \exp\{\bpsi_i\trans\widehat{\balpha}\supfk+\bfeta\trans \b(\Z_i)\} \b(\Z_i)\\
=&\frac{1}{N}\sum_{i=n+1}^{n+N}\c\trans\bJhat_{\widetilde{\bbeta}\supfk}^{-1}\A_i\breve{g}\{\widetilde m\supfk(\X_i)\} \b(\Z_i).
\end{align*}
For one-dimensional $\Z_i$ occurring in our numerical studies, this sieve approach should have similar performance as kernel smoothing. While if $p_{\z}>1$ and $\Z_i=(Z_{i1},\ldots,Z_{ip_{\z}})\trans$, classic nonparametric approaches like kernel smoothing and sieve could have poor performance due to the curse of dimensionality. One may use additive model of $Z_{i1},\ldots,Z_{ip_{\z}}$ (constructed with the basis $\{\b\trans(Z_{i1}),\ldots,\b\trans(Z_{ip_{\z}})\}\trans$) instead of the fully nonparametric model for $\Z_i$, to avoid excessive model complexity. %This can be more simply implemented with sieve by binding the basis of each covariate in $\Z$ together to form $\b(\Z)$.

\subsection{General machine learning method}\label{sec:app:ml}

Given a response $A$, predictors $\bC$, and an arbitrary blackbox learning algorithm $\Lsc$, we let $\widehat \Esc^{\Lsc}[A\mid\bC]$ and $\widehat \Psc^{\Lsc}(A\mid\bC)$ denote the conditional expectation and conditional probability density (or mass) function of $A$ on $\bC$ estimated using the learning algorithm $\Lsc$. Here, we neglect the index of training samples in our notation for simplicity while in general, one should follow the established work like \cite{chernozhukov2016double}, to adopt cross-fitting, and ensure that $\widehat \Esc^{\Lsc}[A\mid\bC]$ and $\widehat \Psc^{\Lsc}(A\mid\bC)$ are estimated using training data independent with their plug-in samples.

Without loss of generality, we assume that knowing $\X$ is sufficient to identify $\Z$, $\bphi$ and $\bpsi$. We propose novel procedures using $\Lsc$ to estimate and calibrate the nuisance models. First, we regress $Y$ on $\X$ on $\Ssc$ using learning algorithm $\Lsc$ to obtain $\widehat\Esc^{\Lsc}[Y\mid\X]$, and regress $S$ on $\X$ to obtain $\widehat\Psc^{\Lsc}(S=1\mid\X)$. Also, we use $\Lsc$ to learn $\widehat\Psc^{\Lsc}(\X\mid\Z,S=1)$, i.e. the conditional distribution of $\X$ given $\Z$ on the source population. Then we solve:
\begin{equation}
\begin{split}
&\frac{K}{n(K-1)}\sum_{i\in \Isc\subminusk}\bphi_i\left\{\widehat\Esc^{\Lsc}[Y_i\mid\X_i]-g[\bphi_i\trans\bgamma+r(\Z_i)]\right\}=\bzero,\\
&\int_{\x\in\mathcal{X}\cap\{\z\}}\widehat\Psc^{\Lsc}(\x\mid\Z=\z,S=1)\left\{\widehat\Esc^{\Lsc}[Y\mid\X=\x]-g[\bphi_i\trans\bgamma+r(\z)]\right\}d\x=0,\quad\mbox{for }\z\in\mathcal{Z},    
\end{split}    
\label{equ:app:c1}
\end{equation}
to obtain the preliminary estimators $\widetilde\bgamma\supfk$ and $\widetilde r\supfk(\cdot)$, where $\x\in\mathcal{X}\cap\{\z\}$ represents the set of $\X$ belonging to its domain $\mathcal{X}$ and satisfying $\Z=\z$ for the fixed $\z$. To solve (\ref{equ:app:c1}) numerically, we adopt a monte carlo procedure introduced as follow. Let $M$ be some pre-specified number much larger than $n$, says $100n$. For each $i\in\Isc\supfk$, sample $\X_{i,1}$, $\X_{i,2}$,..., $\X_{i,M}$ independently from the estimated $\widehat\Psc^{\Lsc}(\X_i\mid\Z_i,S_i=1)$ given $\Z_{i,m}=\Z_i$ for each $m\in\{1,\ldots,M\}$. Then solve the estimating equation:
\begin{align*}
&\frac{K}{nM(K-1)}\sum_{i\in \Isc\subminusk}\sum_{m=1}^M\bphi_{i,m}\left\{\widehat\Esc^{\Lsc}[Y_{i,m}\mid\X_{i,m}]-g(\bphi_{i,m}\trans\bgamma+r_i)\right\}=\bzero,\\
&\frac{1}{M}\sum_{m=1}^M\widehat\Esc^{\Lsc}[Y_{i,m}\mid\X_{i,m}]-g(\bphi_{i,m}\trans\bgamma+r_i)=\bzero,\quad\mbox{for }i\in\Isc\supfk,
\end{align*}
to obtain the estimators $\widetilde\bgamma\supfk$ and $\widetilde r_i$, and set $\widetilde r\supfk(\Z_i)=\widetilde r_i$ for each $i\in\Isc\supfk$. Based on these estimators, we construct the debiased estimator for $\bgamma$ generally satisfying Assumption \ref{asu:3}(i). In specific, we use $\Lsc$ to obtain the estimators $\widehat\Esc^{\Lsc}[\bphi\dot g\{(\widetilde\bgamma\supfk)\trans\bphi+\widetilde r\supfk(\Z)\}|\Z,S=1]$ and $\widehat\Esc^{\Lsc}[g\{(\widetilde\bgamma\supfk)\trans\bphi+\widetilde r\supfk(\Z)\}|\Z,S=1]$. Then we let
\[
\widetilde\bdelta_i=(\widetilde\delta_{i1},\ldots,\widetilde\delta_{ip_{\bphi}})\trans=\bphi_i-\frac{\widehat\Esc^{\Lsc}[\bphi_i\dot g\{(\widetilde\bgamma\supfk)\trans\bphi_i+\widetilde r\supfk(\Z_i)\}|\Z_i,S_i=1]}{\widehat\Esc^{\Lsc}[g\{(\widetilde\bgamma\supfk)\trans\bphi_i+\widetilde r\supfk(\Z_i)\}|\Z_i,S_i=1]},
\]
solve 
\[
\widetilde\w_j\supfk=\min_{\w}\frac{K}{n(K-1)}\sum_{i\in\Isc_{\text{-}k}}\dot g\{(\widetilde\bgamma\supfk)\trans\bphi_i+\widetilde r\supfk(\Z_i)\}\left(\widetilde\delta_{ij}-\w\trans\widetilde\bdelta_{i,\text{-}j}\right)^2,
\]
for each $j\in\{1,\ldots,p_{\bphi}\}$, and let $\widetilde\bepsilon_i=(\widetilde\epsilon_{i1},\ldots,\widetilde\epsilon_{ip_{\bphi}})\trans$, where $\widetilde\epsilon_{ij}=\widetilde\delta_{ij}-(\widetilde\w_j\supfk)\trans\widetilde\bdelta_{i,\text{-}j}$, and
\[
\widetilde\sigma_j^2=\frac{K}{n(K-1)}\sum_{i\in\Isc_{\text{-}k}}\widetilde\epsilon_{ij}^2\dot g\left\{(\widetilde\bgamma\supfk)\trans\bphi_i+\widetilde r\supfk(\Z_i)\right\}.
\]
Then we construct the debiased estimator  $\bgammahat\supfk=(\widehat\gamma\supfk_1,\ldots,\widehat\gamma\supfk_{p_{\bphi}})\trans$ through:
\begin{equation}
\widehat\gamma\supfk_j=\widetilde\gamma\supfk_j+\frac{K}{n(K-1)}\sum_{i\in\Isc_{\text{-}k}}\frac{\widetilde\epsilon_{ij}}{\widetilde\sigma_j}\left[Y_i-g\{(\widetilde\bgamma\supfk)\trans\bphi_i+\widetilde r\supfk(\Z_i)\}\right].
\label{equ:app:c2}
\end{equation}
Finally, the calibrated estimator of the nuisance component $r(\cdot)$ is obtained by solving $\widehat r_i$ from:
\[
\frac{1}{M}\sum_{m=1}^M\widetilde{\omega}\supfk(\X_{i,m})\c\trans\bJhat_{\widetilde{\bbeta}\supfk}^{-1}\A_{i,m}\left[\widehat\Esc^{\Lsc}[Y_{i,m}\mid\X_{i,m}]-g\left\{\bphi_{i,M}\trans\widehat{\bgamma}\supfk+r_i\right\}\right]=0,
\]
for each $i$, and set $\widehat r\supfk(\Z_i)=\widehat r_i$, where $\widetilde{\bbeta}\supfk$ is again solved through:
\[
\frac{K}{n(K-1)}\sum_{i\in\Isc\subminusk}\widetilde{\omega}\supfk(\X_i)\A_i\{Y_i-\widetilde{m}\supfk(\X_i)\}+\frac{1}{N}\sum_{i=n+1}^{N+n}\A_i\{\widetilde{m}\supfk(\X_i)-g(\A_i\trans\bbeta)\}=\mathbf{0}.
\]
Noting that our above introduced procedure is applicable to any semi-non-parametric M-estimation problem, so the preliminary estimator $\widetilde{\omega}\supfk(\X_i)$ and the calibrated estimator for $\balpha$ and $h(\cdot)$ can be obtained in the same way. 

\begin{remark}
Our construction procedure proposed in this section involves estimation of the probability density function, which is typically more challenging than purely estimating the conditional mean for a machine learning method. Note that for linear, log-linear and logistic model, one can avoid estimating probability density function to construct the doubly robust (double machine learning) estimators; see \cite{dukes2020inference,ghosh2020doubly,liu2020note}. Thus, when the link function $g(a)=a$, $g(a)=e^a$ or $g(a)=e^a/(1+e^a)$, our construction actually does not require estimating the probability density function with $\Lsc.$
\end{remark}

At last, we provide discussion and justification towards the $n^{1/2}$-consistency and asymptotic normality of the debiased estimator $\widehat\bgamma\supfk$. In specific, we take $\bar\bgamma=\bgamma^*$, and write (\ref{equ:app:c2}) as:
\begin{align*}
\widehat\gamma\supfk_j=&\widetilde\gamma\supfk_j+\frac{K}{n(K-1)}\sum_{i\in\Isc_{\text{-}k}}\frac{\widetilde\epsilon_{ij}}{\widetilde\sigma_j}\Bigg[Y_i-\rE_1[Y_i\mid\X_i]+\rE_1[Y_i\mid\X_i]-g\{(\bgamma^*)\trans\bphi_i+r^*(\Z_i)\}\\
&\hspace{5.5cm}+g\{\bar\bgamma\trans\bphi_i+r^*(\Z_i)\}-g\{(\widetilde\bgamma\supfk)\trans\bphi_i+\widetilde r\supfk(\Z_i)\}\Bigg].
\end{align*}
Note that $Y_i-\rE_1[Y_i\mid\X_i]$ is orthogonal to $\widetilde\epsilon_{ij}$ and its estimation error since the latter is deterministic on $\X_i$. According to our moment equation for $\bgamma^*$ and $r^*(\cdot)$, $\rE_1[Y_i\mid\X_i]-g\{(\bgamma^*)\trans\bphi_i+r^*(\Z_i)\}$ is orthogonal to arbitrary (regular) function of $\Z_i$ and linear function of $\bphi_i$, so is also orthogonal to $\widetilde\epsilon_{ij}$ and its estimation error. In addition, by our construction,
\[
\rE_1\left(\bphi_i-\frac{\rE_1[\bphi_i\dot g\{(\bgamma^*)\trans\bphi_i+r^*(\Z_i)\}\mid\Z_i]}{\rE_1[\dot g\{(\bgamma^*)\trans\bphi_i+r^*(\Z_i)\}\mid\Z_i]}\right)=\bzero,
\]
and $\widetilde\epsilon_{ij}$ is orthogonal to any linear function of $\bphi_{i,\text{-}j}$ and $\bdelta_{i,\text{-}j}$. So the first order error in $g\{\bar\bgamma\trans\bphi_i+r^*(\Z_i)\}-g\{(\widetilde\bgamma\supfk)\trans\bphi_i+\widetilde r\supfk(\Z_i)\}$, i.e. $\dot g\{\bar\bgamma\trans\bphi_i+r^*(\Z_i)\}\{(\widetilde\bgamma\supfk-\bar\bgamma)\trans\bphi_i+r^*(\Z_i)-\widetilde r\supfk(\Z_i)\}$, is orthogonal to $\widetilde\epsilon_{ij}$ for each $j$. Thus, all the first order error terms in $\widehat\gamma\supfk_j-\bar\bgamma$ could be removed through our Neyman orthogonal construction. 

Inspired by existing work of double machine learning like \cite{chernozhukov2018double} and \cite{liu2020note}, when the mean squared error of machine learning algorithm $\Lsc$ has the convergence rates $o_p(n^{-1/2})$ with respect to all the learning objectives included in this section, i.e. the rate double robustness property, the machine learning estimator $\widehat r\supfk(\cdot)$ satisfies Assumption \ref{asu:3}(ii). Also, the second order error of $\widehat\gamma\supfk_j-\bar\bgamma$ could be removed asymptotically. And consequently, $\widehat\bgamma\supfk$ satisfy Assumption \ref{asu:3}(i). Again, these arguments are applicable to the nuisance estimators for $\balpha$ and $h(\cdot)$ derived in the same way. Therefore, our proposed nuisance estimators introduced in this section tend to satisfy Assumption \ref{asu:3}.

\subsection{Intrinsic efficient construction}\label{sec:app:disc:detail:intr}
 
In this section, we introduce the intrinsic efficient construction of the imputation model under our framework. For simplicity, we consider a semi-supervised setting with $n$ labeled source samples and $N\gg n$ unlabeled target samples. The augmentation approach proposed by \cite{shu2018improved} could be used for extending our method to the $N\asymp n$ case. For some given $h(\cdot)$, let the estimating equation of $\widetilde\balpha\supfk$ be
\[
\sum_{i\in\{n+1,\ldots,n+N\}\cup\Isc\subminusk}\S\{\delta_i,\X_i;\balpha,h(\cdot)\}=\mathbf{0},
\]
with $\S\{\delta_i,\X_i;\balpha,h(\cdot)\}$ representing the score function. For example, one can take
\[
\S\{\delta_i,\X_i;\balpha,h(\cdot)\}=\delta_i\exp\{\bpsi_i\trans\balpha+h(\Z_i)\}\bpsi_i-|\Isc\subminusk|(1-\delta_i)\bpsi_i/N.
\]
Denote that $\S_i=\S\{\delta_i,\X_i;\widetilde\balpha\supfk,\widetilde h\supfk(\cdot)\}$ and let $\bPi_{\Isc\subminusk}(\epsilon_i;\S_i)$ be the empirical projection operator of any variable $\epsilon_i$ to the space spanned by $\S_i$ on the samples $\Isc\subminusk$
and $\bPi^{\perp}_{\Isc\subminusk}(\epsilon_i;\S_i)=\epsilon_i-\bPi_{\Isc\subminusk}(\epsilon_i;\S_i)$. When the importance weight model is correctly specified and $N\gg n$, the empirical asymptotic variance for $\c\trans\bbetahat\subatrel$ with nuisance parameters $\bgamma$ and $r(\cdot)$ can be expressed as
\begin{equation}
\frac{K}{n(K-1)}\sum_{i\in \Isc\subminusk}\left[\widetilde{\omega}\supfk(\X_i)\bPi^{\perp}_{\Isc\subminusk}\left(\c\trans\bJhat_{\widetilde{\bbeta}\supfk}^{-1}\A_i[Y_i-g\{\bphi_i\trans\bgamma+r(\Z_i)\}];\S_i\right)\right]^2.
\label{equ:var}
\end{equation}
Then the intrinsically efficient construction of the imputation model is given by minimizing (\ref{equ:var}) subject to the moment constraint:
\[
\frac{1}{|\Isc\subminusk\cap \Isc^a|}\sum_{i\in \Isc\subminusk\cap \Isc^a} K_h(\Z_i-\z)\widetilde{\omega}\supfk(\X_i)\c\trans\bJhat_{\widetilde{\bbeta}\supfk}^{-1}\A_i\left[Y_i-g\left\{\bphi_i\trans\bgamma+r(\Z)\right\}\right]=0,
\]
which is the same as the first equation of (\ref{equ:kern:mom}) except that both $\bgamma$ and $r(\Z)$ are unknown here. This optimization problem could be solved with methods like profile kernel and back-fitting \citep{lin2006semiparametric}. Alternatively and more conveniently, one could use sieve, as discussed in Appendix~\ref{sec:app:sieve}, to model $r(\Z_i)$ and use a constrained least square regression: let $\b(\Z)$ be some basis function of $\z$ and solve
\begin{align*}
&\min_{\bgamma,\bxi}~\sum_{i\in \Isc\subminusk}\left[\widetilde{\omega}\supfk(\X_i)\bPi^{\perp}_{\Isc\subminusk}\left(\c\trans\bJhat_{\widetilde{\bbeta}\supfk}^{-1}\A_i[Y_i-g\{\bphi_i\trans\bgamma+\b\trans(\Z_i)\bxi\}];\S_i\right)\right]^2;\\
&{\rm s.t.}~\sum_{i\in \Isc\subminusk\cap \Isc^a} \b(\Z_i)\widetilde{\omega}\supfk(\X_i)\c\trans\bJhat_{\widetilde{\bbeta}\supfk}^{-1}\A_i\left[Y_i-g\left\{\bphi_i\trans\bgamma+\b\trans(\Z_i)\bxi\right\}\right]=0,
\end{align*}
to obtain $\widetilde\bgamma\supfk$ and $\widetilde r\supfk(\Z)=\b\trans(\Z)\widetilde\bxi\supfk$ simultaneously. %Again, one can either take $\bgammahat\supfk=\widetilde\bgamma\supfk$ corresponding to the ``Plug-in" proposal in Section \ref{sec:method:spec} or further correct for the bias of $\widetilde\bgamma\supfk$ to obtain the ``Debiased" $\bgammahat\supfk$.
To get the intrinsic efficient estimator for a nonlinear but differentiable function $\ell(\bbeta_0)$, with its gradient being $\dot{\ell}(\cdot)$, we first estimate the entries $\beta_{0i}$ using our proposed method for every $i\in\{1,2,\ldots,d\}$ and use them to form a preliminary $\sqrt{n}$-consistent estimator $\bbetahat_{(init)}$. Then we estimate the linear function $\bbeta_0\trans\dot{\ell}\{\bbetahat_{(init)}\}$ with the intrinsically efficient estimator and utilize the expansion $\ell(\bbeta_0)\approx\ell\{\bbetahat_{(init)}\}+\{\bbeta_0-\bbetahat_{(init)}\}\trans\dot{\ell}\{\bbetahat_{(init)}\}$ for an one-step update.

%Finally, for any general learning approaches for conditional mean estimation, we find their potentiality to be fitted into our framework, which is more challenging than the double machine learning framework \citep{chernozhukov2016double} due to the semi-non-parametric form of our nuisance models. Start from the machine learning estimator for $\balpha$ or $\bgamma$ (full model refitting), then debiasing on it.

\newpage

\section{Implementing details and additional results of simulation}\label{sec:app:results:sim}

To obtain the preliminary estimators $\widetilde \omega\supfk(\cdot)$ and $\widetilde m\supfk(\cdot)$ of our method, we use semiparametric logistic regression with covariates including the parametric basis and the natural splines of the nonparametric components $Z$ with order $[n^{1/4}]$ for the imputation model and $[(N+n)^{1/4}]$ for the importance weight model. In this process, we add ridge penalty tuned by cross-validation with tuning parameter of order $n^{-2/3}$ (below the parametric rate) to enhance the training stability. %We use sieve instead of kernel smoothing purely because it is more convenient to implement and the two approaches should have close performance. %Then as mentioned in Sections \ref{sec:method:spec} and \ref{sec:simu}, we separately use three ways, ''Plug-in", ``Parametric" and ``Debiased" to construct the nuisance parametric components. For ''Plug-in", we directly take $\bgammahat\supfk=\widetilde\bgamma\supfk$ and $\balphahat\supfk=\widetilde\balpha\supfk$. For ``Parametric", we fit parametric logistic models for $S_i\sim\bpsi_i$ and $Y_i\sim\bphi_i$ to obtain $\bgammahat\supfk$ and $\balphahat\supfk$. For ``Debiased", we use the special doubly-robust form of logistic regression to obtain the bias corrected $\bgammahat\supfk$ and $\balphahat\supfk$, as introduced in Section \ref{sec:debias}. Finally, we use the kernel methods described by (\ref{equ:kern:mom:sign}) in Section \ref{sec:method:spec} to calibrate the nonparametric parts.

We set the loading vector $\c$ as $(1,0,0,0)\trans$, $(0,1,0,0)\trans$, $(0,0,1,0)\trans$, and $(0,0,0,1)\trans$ to estimate $\beta_0,\beta_1,\beta_2,\beta_3$ separately. For $\beta_1,\beta_2,\beta_3$, the weights $\c\trans\bJhat_{\widetilde{\bbeta}\supfk}^{-1}\A_i$'s are not positive definite so we split the source and target samples as $\Isc^{+}=\{i:\c\trans\bJhat_{\widetilde{\bbeta}\supfk}^{-1}\A_i\geq 0\}$ and $\Isc^{-}=\{i:\c\trans\bJhat_{\widetilde{\bbeta}\supfk}^{-1}\A_i< 0\}$ as introduced in Remark \ref{rem:sign:split}, and use (\ref{equ:kern:mom:sign}) to estimate their nonparametric components. For $\beta_0$, we find that $\c\trans\bJhat_{\widetilde{\bbeta}\supfk}^{-1}\A_i$ is nearly positive definite under all configurations but these weights are sometimes of high variation. So we also split the source/target samples by cutting the $\c\trans\bJhat_{\widetilde{\bbeta}\supfk}^{-1}\A_i$'s with their median, to reduce the variance of weights at each fold and improve the effective sample size. We use cross-fitting with $K=5$ folds for our method and the two double machine learning estimators. And all the tuning parameters including the bandwidth of our method and kernel machine and the coefficients of the penalty functions are selected by 5-folded cross-validation on the training samples. We present the estimation performance (mean square error, bias and coverage probability) on each parameter in Tables~\ref{tab:simu:IMPcornp}--\ref{tab:simu:IWcornonp}, for the four configurations separately. %In Tables~\ref{tab:simu:IMPcornp}--\ref{tab:simu:IWcornonp}, the parametric components of our method are estimated using the ''Plug-in" procedure. %Though slightly lacked in theoretical guarantee, ''Plug-in" shows no impacts on the numerical validity and efficiency of our method. Thus, it is a natural choice recommended for the numerical implementation of our method. 
%Meanwhile, we compare the three construction strategies of our method with the parametric components estimated with ''Plug-in", ``Parametric" and ``Debiased", under Configurations (i)--(iv) in Tables \ref{tab:simu:IMPcornpadd}--\ref{tab:simu:IWcornonpadd}. 

\begin{table}[!htbp] \centering 
\caption{\label{tab:simu:IMPcornp} Estimation performance of the methods on parameters $\beta_0,\beta_1,\beta_2,\beta_3$ under Configuration (i) described in Section \ref{sec:simu}. Parametric: doubly robust estimator with parametric nuisance models; ATReL: our proposed doubly robust estimator using semi-non-parametric nuisance models; 
DML$\subBE$: double machine learning with flexible basis expansions; DML$\subKM$: double machine learning with kernel machine. RMSE: root mean square error; CP: coverage probability of the $95\%$ confidence interval.} 
\begin{tabular}{@{\extracolsep{5pt}} cccccc} 
\\[-1.8ex]\hline 
\hline \\[-1.8ex] 
& &\multicolumn{4}{c}{Estimator}  \\ \cline{3-6}
 \\[-2ex]
Covariates & & Parametric  & ATReL & DML$\subBE$ & DML$\subKM$\\
\hline \\[-1.8ex]
\multirow{3}*{$\beta_0$}&RMSE & $0.102$ & $0.110$ & $0.168$ & $0.116$ \\ 
&Bias & $-0.007$ & $0.0005$ & $0.112$ & $0.010$ \\ 
&CP & $0.95$ & $0.95$ & $0.84$ & $0.93$ \\ \\[-1.8ex]\cline{2-6} \\[-1.8ex]
\multirow{3}*{$\beta_1$}&RMSE & $0.181$ & $0.124$ & $0.160$ & $0.198$ \\ 
&Bias & $-0.146$ & $-0.056$ & $-0.104$ & $-0.163$ \\ 
&CP & $0.91$ & $0.93$ & $0.92$ & $0.85$ \\ \\[-1.8ex]\cline{2-6} \\[-1.8ex]
\multirow{3}*{$\beta_2$}&RMSE & $0.133$ & $0.126$ & $0.191$ & $0.134$ \\ 
&Bias & $0.059$ & $0.032$ & $-0.109$ & $-0.017$ \\ 
&CP & $0.99$ & $0.97$ & $0.94$ & $0.98$ \\ \\[-1.8ex]\cline{2-6} \\[-1.8ex]
\multirow{3}*{$\beta_3$}&RMSE & $0.137$ & $0.133$ & $0.195$ & $0.150$ \\ 
&Bias & $0.049$ & $0.030$ & $-0.108$ & $-0.040$ \\ 
&CP & $0.99$ & $0.97$ & $0.96$ & $0.97$ \\ 
\hline\hline \\[-1.8ex] 
\end{tabular} 
\end{table}

\begin{table}[!htbp] \centering 
\caption{\label{tab:simu:IMPcornonp} Estimation performance of the methods on parameters $\beta_0,\beta_1,\beta_2,\beta_3$ under Configuration (ii) described in Section \ref{sec:simu}. Parametric: doubly robust estimator with parametric nuisance models; ATReL: our proposed doubly robust estimator using semi-non-parametric nuisance models; 
DML$\subBE$: double machine learning with flexible basis expansions; DML$\subKM$: double machine learning with kernel machine. RMSE: root mean square error; CP: coverage probability of the $95\%$ confidence interval.} 
\begin{tabular}{@{\extracolsep{5pt}} cccccc} 
\\[-1.8ex]\hline 
\hline \\[-1.8ex] 
& &\multicolumn{4}{c}{Estimator}  \\ \cline{3-6}
 \\[-2ex]
Covariates & & Parametric & ATReL & DML$\subBE$ &  DML$\subKM$\\
\hline \\[-1.8ex]
\multirow{3}*{$\beta_0$}&RMSE & $0.108$ & $0.114$ & $0.186$ & $0.124$ \\ 
&Bias & $-0.004$ & $0.004$ & $0.136$ & $0.018$ \\ 
&CP & $0.92$ & $0.94$ & $0.82$ & $0.90$ \\ \\[-1.8ex]\cline{2-6}\\[-1.8ex] 
\multirow{3}*{$\beta_1$}&RMSE & $0.107$ & $0.118$ & $0.144$ & $0.122$ \\ 
&Bias & $-0.001$ & $-0.015$ & $-0.062$ & $-0.046$ \\ 
&CP & $0.99$ & $0.95$ & $0.95$ & $0.98$ \\ \\[-1.8ex]\cline{2-6}\\[-1.8ex] 
\multirow{3}*{$\beta_2$}&RMSE & $0.129$ & $0.131$ & $0.209$ & $0.166$ \\ 
&Bias & $-0.006$ & $-0.024$ & $-0.136$ & $-0.084$ \\ 
&CP & $0.98$ & $0.96$ & $0.94$ & $0.95$ \\ \\[-1.8ex]\cline{2-6}\\[-1.8ex] 
\multirow{3}*{$\beta_3$}&RMSE & $0.124$ & $0.128$ & $0.200$ & $0.171$ \\ 
&Bias & $-0.008$ & $-0.019$ & $-0.123$ & $-0.097$ \\ 
&CP & $0.98$ & $0.97$ & $0.94$ & $0.96$ \\ 
\hline\hline \\[-1.8ex]
\end{tabular} 
\end{table}

\begin{table}[!htbp] \centering 
\caption{\label{tab:simu:IWcornp}Estimation performance of the methods on parameters $\beta_0,\beta_1,\beta_2,\beta_3$ under Configuration (iii) described in Section \ref{sec:simu}. Parametric: doubly robust estimator with parametric nuisance models; ATReL: our proposed doubly robust estimator using semi-non-parametric nuisance models; 
DML$\subBE$: double machine learning with flexible basis expansions; DML$\subKM$: double machine learning with kernel machine. RMSE: root mean square error; CP: coverage probability of the $95\%$ confidence interval.} 
\begin{tabular}{@{\extracolsep{5pt}} cccccc} 
\\[-1.8ex]\hline 
\hline \\[-1.8ex] 
& &\multicolumn{4}{c}{Estimator}  \\ \cline{3-6}
 \\[-2ex]
Covariates & & Parametric & ATReL & DML$\subBE$ &  DML$\subKM$\\
\hline \\[-1.8ex]
\multirow{3}*{$\beta_0$}&RMSE & $0.113$ & $0.112$ & $0.134$ & $0.114$ \\ 
&Bias & $-0.052$ & $-0.014$ & $-0.064$ & $-0.026$ \\ 
&CP & $0.93$ & $0.95$ & $0.93$ & $0.95$ \\ \\[-1.8ex]\cline{2-6}\\[-1.8ex] 
\multirow{3}*{$\beta_1$}&RMSE & $0.341$ & $0.151$ & $0.152$ & $0.189$ \\ 
&Bias & $-0.300$ & $-0.047$ & $-0.043$ & $-0.135$ \\ 
&CP & $0.82$ & $0.93$ & $0.95$ & $0.86$ \\ \\[-1.8ex]\cline{2-6}\\[-1.8ex] 
\multirow{3}*{$\beta_2$}&RMSE & $0.145$ & $0.133$ & $0.141$ & $0.133$ \\ 
&Bias & $-0.006$ & $-0.011$ & $-0.035$ & $-0.054$ \\ 
&CP & $0.95$ & $0.94$ & $0.95$ & $0.91$ \\ \\[-1.8ex]\cline{2-6}\\[-1.8ex] 
\multirow{3}*{$\beta_3$}&RMSE & $0.143$ & $0.137$ & $0.139$ & $0.131$ \\ 
&Bias & $-0.008$ & $0.004$ & $0.003$ & $-0.033$ \\ 
&CP & $0.94$ & $0.95$ & $0.95$ & $0.91$ \\ 
\hline\hline \\[-1.8ex] 
\end{tabular} 
\end{table}

\begin{table}[!htbp] \centering 
\caption{\label{tab:simu:IWcornonp} Estimation performance of the methods on parameters $\beta_0,\beta_1,\beta_2,\beta_3$ under Configuration (iv) described in Section \ref{sec:simu}. Parametric: doubly robust estimator with parametric nuisance models; ATReL: our proposed doubly robust estimator using semi-non-parametric nuisance models; 
DML$\subBE$: double machine learning with flexible basis expansions; DML$\subKM$: double machine learning with kernel machine. RMSE: root mean square error; CP: coverage probability of the $95\%$ confidence interval.} 
\begin{tabular}{@{\extracolsep{5pt}} cccccc} 
\\[-1.8ex]\hline 
\hline \\[-1.8ex] 
& &\multicolumn{4}{c}{Estimator}  \\ \cline{3-6}
 \\[-2ex]
Covariates & & Parametric & ATReL & DML$\subBE$ &  DML$\subKM$\\
\hline \\[-1.8ex]
\multirow{3}*{$\beta_0$}&RMSE & $0.103$ & $0.107$ & $0.189$ & $0.109$ \\ 
&Bias & $-0.003$ & $0.010$ & $0.151$ & $0.027$ \\ 
&CP & $0.95$ & $0.95$ & $0.73$ & $0.95$ \\ \\[-1.8ex]\cline{2-6}\\[-1.8ex] 
\multirow{3}*{$\beta_1$}&RMSE & $0.140$ & $0.128$ & $0.132$ & $0.156$ \\ 
&Bias & $-0.008$ & $0.008$ & $0.035$ & $0.100$ \\ 
&CP & $0.94$ & $0.93$ & $0.94$ & $0.86$ \\ \\[-1.8ex]\cline{2-6}\\[-1.8ex] 
\multirow{3}*{$\beta_2$}&RMSE & $0.137$ & $0.126$ & $0.127$ & $0.121$ \\ 
&Bias & $-0.004$ & $-0.004$ & $-0.025$ & $0.000$ \\ 
&CP & $0.96$ & $0.96$ & $0.95$ & $0.90$ \\ \\[-1.8ex]\cline{2-6}\\[-1.8ex] 
\multirow{3}*{$\beta_3$}&RMSE & $0.139$ & $0.126$ & $0.121$ & $0.122$ \\ 
&Bias & $0.005$ & $0.015$ & $0.022$ & $0.050$ \\ 
&CP & $0.95$ & $0.97$ & $0.96$ & $0.93$ \\ 
\hline\hline \\[-1.8ex] 
\end{tabular} 
\end{table}

\newpage

\section{Implementing details and additional results of real example}\label{sec:app:results:real}

The specific nuisance model constructions are described as follows.

\begin{table}[!htb]
\begin{tabular}{p{4.5cm}|p{5.5cm}|p{5.5cm}} 
\hline
\hline
Method  & Importance weighting & Imputation \\ \hline
Parametric & Logistic model with $\bPsi=(\X\trans,X_1X_2,X_1X_3,X_2X_3) \trans$ & Logistic model with $\bPhi=\X$ \\ \hline
{ATReL} (our method) & Logistic model with $\bPsi=(\X\trans,X_1X_2,X_1X_3,X_2X_3) \trans$ and set $Z=X_2$ for nonparametric modeling & Logistic model with $\bPhi=\X$ and set $Z=X_2$ for nonparametric modeling\\ \hline
Double machine learning with flexible basis expansions &   $\ell_1 + \ell_2$ regularized regression including basis terms: $\X$, natural splines of $X_1$, $X_2$ and $X_6$ of order $5$ and interaction terms of these natural splines &  $\ell_1 + \ell_2$ regularized regression including basis terms: $\X$, natural splines of $X_1$, $X_2$ and $X_6$ of order $5$ and interaction terms of these natural splines \\ \hline
Double machine learning with kernel machine & Support vector machine with the radial basis function kernel & Support vector machine with the radial basis function kernel \\ \hline
\end{tabular}
\end{table}

% At last, we present the fitted coefficients of the methods in comparison in Table~\ref{tab:real:beta2015}--\ref{tab:real:beta2017}.

%Finally, we plot the curve of fitted nonparametric components $h(Z)$ and $r(Z)$ where $Z=X_2$ in Figure \ref{fig:realnp}, and present the fitted coefficients of the four methods in Table~\ref{tab:real:beta2015}--\ref{tab:real:beta2017}.

%\begin{figure}[htbp]
%    \centering
%    \includegraphics{curve.eps}
%   \caption{Curves of the nonparametric components $r(Z)$ and $h(Z)$ against $Z=X_2$, log count of nlpra, estimated by the preliminary semiparametric models using natural splines.}
%\label{fig:realnp}
%\end{figure}

\noindent We present the fitted coefficients of all the included approaches in Table~\ref{tab:real:beta2015}.
\begin{table}[H]
\centering
\caption{\label{tab:real:beta2015} Estimators of the target model coefficients. $\beta_0,\beta_1,\beta_2,\beta_3,\beta_4$ represent respectively the intercept, coefficient of the total healthcare utilization ($X_1$), coefficient of the log(NLP+1) of RA ($X_2$), coefficient of the indicator for NLP mention of tumor necrosis factor (TNF) inhibitor ($X_3$), and coefficient of the indicator for NLP mention of bone erosion ($X_4$). Parametric: doubly robust estimator with parametric nuisance models; ATReL: our proposed doubly robust estimator using semi-non-parametric nuisance models; 
DML$\subBE$: double machine learning with flexible basis expansions; DML$\subKM$: double machine learning with kernel machine.}
\begin{tabular}{ccccccc}
\hline
\hline
          & Source & Parametric & {ATReL} & DML$\subBE$ & DML$\subKM$ & {Target} \\ \hline
$\beta_0$ &   -5.70 & -5.08 & -5.75 & -8.88 & -5.73 & -5.03 \\   
$\beta_1$  & 0.03 & 0.12 & -0.19 & 0.01 & 0.05 & -0.31 \\ 
$\beta_2$     & 1.73 & 1.39 & 1.56 & 2.64 & 1.61 & 1.35 \\  
$\beta_3$  & 0.69 & 0.62 & 0.78 & 0.77 & 0.66 & 0.94 \\ 
$\beta_4$ & 0.60 & 0.62 & 0.44 & 0.62 & 0.35 & 0.14 \\ 
\hline\hline\\[-1.8ex] 
\end{tabular}
\end{table}

\end{document}